\newlength\abovesectionskip
\newlength\belowsectionskip
\def\sectionfont{\normalfont\Large\bfseries}
\newlength\abovesubsectionskip
\newlength\belowsubsectionskip
\def\subsectionfont{\normalfont\large\bfseries}
\newlength\abovesubsubsectionskip
\newlength\belowsubsubsectionskip
\def\subsubsectionfont{\normalfont\normalsize\bfseries}
\newlength\aboveparagraphskip
\newlength\belowparagraphskip
\def\paragraphfont{\normalfont\normalsize\bfseries}
\def\section{\@startsection{section}{1}{\z@}{-\abovesectionskip}%
               {\belowsectionskip}{\sectionfont}}
\def\subsection{\@startsection{subsection}{2}{\z@}{-\abovesubsectionskip}%
                  {\belowsubsectionskip}{\subsectionfont}}
\def\subsubsection{\@startsection{subsubsection}{3}{\z@}%
                    {-\abovesubsubsectionskip}{\belowsubsubsectionskip}%
                    {\subsubsectionfont}}
\def\paragraph{\@startsection{paragraph}{4}{\z@}{-\aboveparagraphskip}%
                 {-\belowparagraphskip}{\paragraphfont}}
\newcommand{\bits}{\{-1,1\}}
\newcommand{\T}{\mathcal{T}}
\newcommand{\depth}{{\mathrm{depth}}}
\newcommand{\D}{\mathcal{D}}
\newcommand{\U}{\mathcal{U}}
\renewcommand{\Pr}{\mathbf{Pr}}
\newcommand{\eqdef}{\stackrel{\textrm{def}}{=}}
\newcommand{\maxcut}{\textsc{MaxCut}}
\newcommand{\eigensum}{\Upsilon}
\newcommand{\tO}{\widetilde{O}}
\newcommand{\normal}{\mathcal{N}}
\newcommand{\tr}{\textrm{tr}}
\newcommand{\reg}{R}
\newcommand{\smallball}{T}
\newcommand{\largerball}{S}
\newcommand{\lambdamin}{\lambda_{\mathrm{min}}}
\newcommand{\lambdamax}[1]{\| #1 \|_\infty}
\newcommand{\Inf}{\mathrm{Inf}}
\DeclareMathSymbol{\qedsymb} {\mathord}{AMSa}{"04}
\newcommand{\R}{\mathbb{R}}
\newcommand{\N}{\mathbb{N}}
\newcommand{\sgn}{\mathrm{sgn}}
\newcommand{\E}{\mathbf{E}}
\newcommand{\Var}{\mathbf{Var}}
\newcommand{\convolution}{*}
\newcommand{\eps}{\varepsilon}
\newcommand{\poly}{\mathrm{poly}}
\newcommand{\TODO}[1]{}
\newcommand{\inprod}[1]{\left\langle #1 \right\rangle}
\newcommand{\ceil}[1]{\left\lceil #1 \right\rceil}
\newcommand{\floor}[1]{\left\lfloor #1 \right\rfloor}
\newcommand{\thmabove}{3pt}
\newcommand{\thmbelow}{3pt}
    \newtheoremstyle{mythmstyle}
      {\thmabove}   
      {\thmbelow}   
      {}            
      {}            
      {\bfseries}   
      {. }          
      {2.5pt}       
      {\thmname{#1}\thmnumber{ #2}\thmnote{ \normalfont (#3)}}   
    \theoremstyle{mythmstyle}
    \newtheorem{theorem}{Theorem}[section]\numberwithin{equation}{section}
    \newtheorem{corollary}[theorem]{Corollary}
    \newtheorem{definition}[theorem]{Definition}
    \newtheorem{fact}[theorem]{Fact}
    \newtheorem{claim}[theorem]{Claim}
    \newtheorem{lemma}[theorem]{Lemma}
    \newtheorem{remark}[theorem]{Remark}
\newcommand{\ClaimName}[1]{\label{clm:#1}}
\newcommand{\CorollaryName}[1]{\label{cor:#1}}
\newcommand{\DefinitionName}[1]{\label{def:#1}}
\newcommand{\EquationName}[1]{\label{eq:#1}}
\newcommand{\FactName}[1]{\label{fact:#1}}
\newcommand{\LemmaName}[1]{\label{lem:#1}}
\newcommand{\RemarkName}[1]{\label{rem:#1}}
\newcommand{\SectionName}[1]{\label{sec:#1}}
\newcommand{\TheoremName}[1]{\label{thm:#1}}
\newcommand{\Claim}[1]{Claim~\ref{clm:#1}}
\newcommand{\Corollary}[1]{Corollary~\ref{cor:#1}}
\newcommand{\Equation}[1]{Eq.\:\eqref{eq:#1}}
\newcommand{\Fact}[1]{Fact~\ref{fact:#1}}
\newcommand{\Lemma}[1]{Lemma~\ref{lem:#1}}
\newcommand{\Section}[1]{Section~\ref{sec:#1}}
\newcommand{\Theorem}[1]{Theorem~\ref{thm:#1}}
\newcommand{\BoldCorollary}[1]{\textbf{Corollary~\ref{cor:#1} (restatement).}}
\newcommand{\BoldLemma}[1]{\textbf{Lemma~\ref{lem:#1} (restatement).}}
\newcommand{\BoldTheorem}[1]{\textbf{Theorem~\ref{thm:#1} (restatement).}}
\renewenvironment{proof}{\noindent\textbf{Proof: }\ignorespaces}
  {\hspace*{\fill}$\Box$\medskip}
\newcommand{\proofbelow}{3pt}
\newcommand{\afterproof}{\hfill $\blacksquare$ \par \vspace{\proofbelow}}
\newcommand{\aftersubproof}{\hfill $\Box$ \par \vspace{\proofbelow}}
\renewenvironment{proof}{\noindent\textbf{Proof.}\,}{\afterproof}
\newenvironment{proofof}[1]{\noindent\textbf{Proof} \,(of
  #1).\,}{\afterproof}
\newenvironment{proofof-sketch}[1]{\noindent\textbf{Proof (Sketch).}
  \,(of #1).\,}{\afterproof}
\renewcommand{\section}{\@startsection{section}{1}{0pt}{-12pt}{5pt}{\large\bf}}
\renewcommand{\subsection}{\@startsection{subsection}{2}{0pt}{-12pt}{-5pt}{\normalsize\bf}}
\renewcommand{\subsubsection}{\@startsection{subsubsection}{3}{0pt}{-12pt}{-5pt}{\normalsize\bf}}
\title{Bounded Independence Fools Degree-$2$ Threshold Functions}
\author{Ilias
  Diakonikolas\footnotemark[2]\\\texttt{ilias@cs.columbia.edu} \and
  Daniel
  M. Kane\footnotemark[3]\\\texttt{dankane@math.harvard.edu}\and
  Jelani Nelson\footnotemark[4]\\\texttt{minilek@mit.edu}}
\date{}
\begin{document}

\footnotetext[1]{Department of Computer Science, Columbia University.
Research supported by NSF grant CCF-0728736, and by an Alexander
S. Onassis Foundation Fellowship. Part of this work was done
while visiting IBM Almaden.}
\footnotetext[1]{Harvard University, Department of
  Mathematics. Supported by a
  National Defense Science and Engineering Graduate (NDSEG)
  Fellowship.}
\footnotetext[3]{MIT Computer Science and Artificial Intelligence
  Laboratory. Supported by a National
  Defense Science and Engineering Graduate (NDSEG) Fellowship, and in
  part by the Center for Massive Data Algorithmics
  (MADALGO) - a center of the Danish National Research
  Foundation. Part of
  this work was done while visiting IBM Almaden.}

\maketitle

\setcounter{page}{0}

\begin{abstract}
\thispagestyle{empty} \noindent Let $x$ be a random vector coming from
any $k$-wise independent distribution over $\{-1,1\}^n$. For an
$n$-variate degree-$2$ polynomial $p$, we prove that $\E[\sgn(p(x))]$
is determined up to an additive $\eps$ for $k = \poly(1/\eps)$. This
answers an open question of Diakonikolas et al. (FOCS 2009). Using
standard constructions of $k$-wise independent distributions, we
obtain a broad class of explicit generators that $\eps$-fool the class
of degree-$2$ threshold functions with seed length $\log n \cdot
\poly(1/\eps)$.

Our approach is quite robust: it easily extends to yield that the
intersection of any constant number of degree-$2$ threshold functions
is $\eps$-fooled by $\poly(1/\eps)$-wise independence. Our results
also hold if the entries of $x$ are $k$-wise independent standard
normals, implying for example that bounded independence derandomizes
the Goemans-Williamson hyperplane rounding scheme.

To achieve our results, we introduce a technique we dub {\em
  multivariate FT-mollification}, a generalization of the univariate
form introduced by Kane et al. (SODA 2010) in the context of streaming
algorithms.  Along the way we prove a generalized hypercontractive
inequality for quadratic forms which takes the operator norm of the
associated matrix into account. These techniques may be of independent
interest.
\end{abstract}

\newpage

\section{Introduction}\SectionName{intro}

This paper is concerned with the power of limited independence to fool low-degree polynomial threshold functions. A degree-$d$
\emph{polynomial threshold function} (henceforth PTF), is a boolean function $f:\bits^n \to \bits$  expressible as $f(x) =
\sgn(p(x))$, where $p$ is an $n$-variate degree-$d$ polynomial with real coefficients, and $\sgn$ is $-1$ for negative
arguments and $1$ otherwise. PTFs have played an important role in computer science since the early perceptron work of Minsky
and Papert \cite{MinskyPapert:69}, and have since been extensively investigated in circuit complexity and communication
complexity~\cite{ABF+:94, Beigel:94a, Bruck:90,BruckSmolensky:92,GHR:92,
  HMP+:93,KrausePudlak:98,Nisan:94,
  OdonnellServedio:08jcss,RazborovSherstov:08,Saks:93}, learning
theory \cite{KOS04,KS04,Sherstov09}, and more.

A distribution $\D$ on $\bits^n$ is said to $\eps$-fool a function
$f:\bits^n \to \bits$ if
\[|\E_{x \sim \D}[f(x)] - \E_{x\sim \U}[f(x)]| \leq \eps \]
where $\U$ is the uniform distribution on $\bits^n$. A distribution $\D$ on $\bits^n$ is $k$-wise independent if every
restriction of $\D$ to $k$ coordinates is uniform on $\bits^k$. 
Despite their simplicity,  $k$-wise independent distributions have
 been a
surprisingly powerful and versatile derandomization tool, fooling  complex
functions such as AC$^0$ circuits
\cite{Bazzi07,Razborov09,Braverman09} and half-spaces \cite{DGJSV09}.
As a result, this class of distributions has played a fundamental role
in many areas of theoretical computer science.

\smallskip

\noindent \textbf{Our Results.} The problem we study is the following: How large must $k = k(n,d,\eps)$ be in order for
\emph{every} $k$-wise independent distribution on $\bits^n$ to $\eps$-fool the class of degree-$d$ PTF's? The $d=1$ case of
this problem was recently considered in~\cite{DGJSV09}, where it was shown that $k(n,1,\eps) = \widetilde{\Theta}(1/\eps^2)$,
independent of $n$, with an alternative proof to much of the argument given in \cite{KNW10}. The main open problem
in~\cite{DGJSV09} was to identify $k = k(n,d,\eps)$ for $d\geq2$. In this work, we make progress on this question by proving
the following:

\begin{theorem} \TheoremName{main}
Any $\tilde{\Omega}(\eps^{-9})$-wise independent distribution on
$\bits^n$ $\eps$-fools all degree-$2$
PTFs.
\end{theorem}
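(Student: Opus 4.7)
The approach is to replace the discontinuous function $\sgn(\cdot)$ by a smooth surrogate $\tilde{\sgn}_c$ built by FT-mollification, argue that $\tilde{\sgn}_c(p(x))$ is fooled by $k$-wise independence via its Fourier representation, and bound the resulting approximation loss using anti-concentration of the degree-$2$ polynomial $p$. After normalizing so that $\E_\U[p(x)^2] = 1$, fix an even Schwartz function $B$ with $\hat{B}$ supported in $[-1,1]$ and $\hat{B}(0)=1$. For a bandwidth $c = \poly(1/\eps)$ to be chosen, define $\tilde{\sgn}_c = \sgn \convolution B_c$ with $B_c(u) = c\, B(cu)$. Standard properties of FT-mollification give (i) a pointwise decay $|\tilde{\sgn}_c(t) - \sgn(t)| = O_M((c|t|)^{-M})$ for every $M$, and (ii) the identity $\tilde{\sgn}_c' = 2 B_c$, which lets us write
\[
\tilde{\sgn}_c(t) - \tilde{\sgn}_c(0) \;=\; \frac{1}{\pi}\int_{-c}^{c} \hat{B}(\xi/c)\,\frac{e^{i\xi t} - 1}{i\xi}\, d\xi,
\]
with the integration supported on $[-c,c]$ since $\hat{B}(\xi/c)$ vanishes outside this range.

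Next I split
\[
(\E_\D - \E_\U)[\sgn(p)] = (\E_\D - \E_\U)[\tilde{\sgn}_c(p)] + \E_\D[\sgn(p) - \tilde{\sgn}_c(p)] - \E_\U[\sgn(p) - \tilde{\sgn}_c(p)].
\]
The uniform approximation error is controlled by Carbery--Wright: $\Pr_\U[|p| \leq \delta] = O(\sqrt{\delta})$ for degree-$2$ polynomials, which combined with property (i) produces a bound of order $c^{-1/2}$. The analogous bound under $\D$ needs anti-concentration for $k$-wise independent $x$; this is established by approximating $\mathbf{1}_{|p|\leq\delta}$ in $L^1(\U)$ by a low-degree polynomial in $p$, which has degree $O(K)$ as a polynomial in $x$ and whose expectation is therefore preserved by $\D$ when $k$ is large enough.

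It remains to bound the main term $(\E_\D - \E_\U)[\tilde{\sgn}_c(p)]$. Using the Fourier representation above,
\[
(\E_\D - \E_\U)[\tilde{\sgn}_c(p(x))] \;=\; \frac{1}{\pi}\int_{-c}^{c} \hat{B}(\xi/c)\,\frac{(\E_\D - \E_\U)[e^{i\xi p(x)}]}{i\xi}\, d\xi.
\]
For each $\xi$, I expand $e^{i\xi p(x)}$ to its $K$-th order Taylor polynomial in $\xi p(x)$; since $p$ has degree $2$, this is a polynomial of degree $2K$ in $x$, and so its $\D$- and $\U$-expectations coincide provided $k \geq 2K$. The remainder is bounded pointwise by $|\xi p(x)|^{K+1}/(K+1)!$, so the task reduces to controlling $\E_\U[|p|^{K+1}]$ and $\E_\D[|p|^{K+1}]$. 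For the $\U$-side I use the generalized hypercontractive inequality for quadratic forms, which sharpens the Bonami--Beckner bound $\|p\|_q \leq (q-1)\|p\|_2$ to one of the form $\|p\|_q \leq C(\sqrt{q}\,\|A\|_F + q\,\|A\|_{\mathrm{op}})$ by incorporating the operator norm of the quadratic part of $p$. For the $\D$-side, Cauchy--Schwarz reduces odd moments to even moments, which are themselves polynomials of degree $\leq k$ and hence match those under $\U$ up to a factor of $2^{O(K)}$.

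The hard part will be making the Fourier integral converge. The naive Bonami--Beckner bound yields $\E[|p|^q] \leq (q-1)^q$, whence the Taylor remainder behaves like $(O(|\xi|))^{K+1}$---this diverges as soon as $|\xi|$ exceeds a constant, yet the integration range extends to $|\xi|=c \gg 1$. The improved quadratic-form hypercontractivity repairs this in the ``regular'' regime $\|A\|_{\mathrm{op}} \ll 1$, giving $\E[|p|^q]\leq (C\sqrt{q})^q$ and hence a Taylor remainder of order $(O(|\xi|)/\sqrt{K})^{K+1}$, which is integrable against $\hat B(\xi/c)/\xi$ over $[-c,c]$ once $K$ is polynomially larger than $c^2$. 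The ``non-regular'' regime, where $A$ is dominated by a few large eigendirections, is handled by a case analysis in which the dominant eigenvector is peeled off and treated as a lower-complexity object, reducing its contribution to either a smaller-dimensional degree-$2$ instance or a degree-$1$ polynomial that is already known to be fooled by $\tilde{O}(1/\eps^2)$-wise independence. Balancing the mollification error $c^{-1/2}$ against the Fourier-remainder error by choosing $c$ and $K$ both $\poly(1/\eps)$ yields the advertised $k = \tOmega(\eps^{-9})$.
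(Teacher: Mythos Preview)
Your proposal has a real gap at the point where you claim the improved quadratic-form hypercontractivity ``repairs this in the regular regime $\|A\|_{\mathrm{op}} \ll 1$''. The notion of regularity that the structure theorem (the decision-tree reduction of \cite{DSTW09}) hands you is \emph{low influence}: every row of $A_p$ has small $\ell_2$ norm. This does \emph{not} force $\|A_p\|_{\mathrm{op}}$ to be small. Take $p(x)=\frac{1}{n}\sum_{i<j}x_ix_j$: each influence is $(n-1)/n^2\approx 1/n$, so $p$ is $O(1/n)$-regular, yet $A_p=\frac{1}{2n}(J-I)$ has top eigenvalue $\approx 1/2$. For this $p$ one has $p(x)=\frac{1}{2n}((\sum_i x_i)^2-n)$, essentially a shifted $\chi^2_1$, and $\E_\U[|p|^K]$ genuinely grows like $K^K$. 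Your Taylor remainder $|\xi|^{K}\E[|p|^K]/K!$ is then $\Theta(|\xi|)^K$ and does not decay over the range $|\xi|\le c=\poly(1/\eps)$. So the univariate Fourier argument breaks on regular instances, and your ``non-regular'' branch (peel off a large eigendirection) is never reached through the influence-based reduction; you would need an entirely different structure theorem based on operator norm, which you have not supplied and which is not in the literature.

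The paper anticipates exactly this obstruction (see the discussion of why univariate FT-mollification fails) and resolves it not by a one-eigenvector-at-a-time peeling, but by a spectral decomposition $p=p_1-p_2+p_3+p_4+C$ where $p_1,p_2\succeq 0$ carry all eigenvalues of magnitude $\ge\delta$ and $p_3$ carries the small ones. The crucial step is then to FT-mollify the indicator of a region in $\R^4$ and evaluate it at $(\sqrt{p_1(x)},\sqrt{p_2(x)},p_3(x)-\tr(A_{p_3}),p_4(x))$. In the multivariate Taylor expansion the error now involves moments of $\sqrt{p_1},\sqrt{p_2},p_3-\tr(A_{p_3}),p_4$ \emph{separately}: the first two have $(k/2)$th moments $2^{O(k)}(k^{k/2}+\delta^{-k/2})$ because $\sqrt{p_i}$ is morally linear and $\tr(A_{p_i})\le 1/\delta$; $p_3$ gets the $\sqrt{k}^k$ bound from the spectral hypercontractivity because $\|A_{p_3}\|_\infty<\delta$ \emph{by construction}; $p_4$ is linear. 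Your univariate integral never separates these pieces, so you are forced to bound $\E[|p|^K]$ as a whole, and that bound is simply $K^K$ in general. The passage to a low-dimensional multivariate mollification is the missing idea, not an optional refinement.
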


Prior to this work, no nontrivial result was known for $d>1$; it was not even known whether $o(n)$-wise independence suffices
for constant $\eps$. Using known constructions of $k$-wise independent distributions \cite{AlonBI86,ChG89}, \Theorem{main}
gives a large class of pseudo-random generators (PRGs) for degree-$2$ PTFs with seed length $\log (n) \cdot \tO(\eps^{-9})$.

Our techniques are quite robust. Our approach yields for example that \Theorem{main} holds not only over the hypercube, but
also over the $n$-variate Gaussian distribution. This already implies that the Goemans-Williamson hyperplane rounding scheme
\cite{GW95} (henceforth ``GW rounding'') can be derandomized using $\poly(1/\eps)$-wise independence\footnote{We note that
other derandomizations of GW rounding are known with better dependence on $\eps$, though not solely using $k$-wise
independence;
  see \cite{MahajanH95,Sivakumar02}.}. Our technique also readily
extends to show that the intersection of $m$ halfspaces,
or even $m$ degree-$2$ threshold functions, is $\eps$-fooled by
$\poly(1/\eps)$-wise independence for any constant $m$ (over
both the hypercube and the multivariate Gaussian). One consequence of
this is that $O(1/\eps^2)$-wise independence suffices
for GW rounding.

Another consequence of \Theorem{main} is that bounded independence suffices for the invariance principle of Mossell, O'Donnell,
and Oleszkiewicz in the case degree-$2$ polynomials. Let $p(x)$ be an $n$-variate degree-$2$ multi-linear polynomial with ``low
influences''. The invariance principle roughly says that the distribution of $p$ is essentially invariant if $x$ is drawn from
the uniform distribution on $\bits^n$ versus the standard $n$-dimensional Gaussian distribution $\normal(0,1)^n$. Our result implies
that the $x$'s do not need to be fully independent for the invariance principle to apply, but that bounded independence
suffices.

\smallskip
\noindent \textbf{Motivation and Related Work.} The literature is rich with explicit generators for various natural classes of
functions. Recently, there has been much interest in not only constructing PRGs for natural complexity classes, but also in
doing so with as broad and natural a family of PRGs as possible. One example is the recent work of Bazzi
\cite{Bazzi07} on fooling depth-$2$ circuits (simplified by Razborov \cite{Razborov09}), and of Braverman \cite{Braverman09} on
fooling AC$^0$, with bounded independence\footnote{Note that a PRG for AC$^0$ with qualitatively similar -- in fact slightly
better -- seed length had being already given by Nisan \cite{Nisan91}.}.

Simultaneously and independently from our work, Meka and Zuckerman \cite{MZ10}
constructed PRGs against degree-$d$ PTFs with seed length $\log
n \cdot 2^{O(d)}\cdot (1/\eps)^{8d+3}$ \cite{MZ10}. That is, their
seed length for $d=2$ is similar to ours (though worse by a
$\poly(1/\eps)$ factor). 
However, their result is incomparable to ours since their pseudorandom
generator is customized for PTFs, and not based on $k$-wise
independence alone.
We
believe that the ideas in our proof may lead to generators with better
seed-length\footnote{An easy probabilistic argument
shows that there exists PRGs for degree-$d$ PTFs with seed-length
$O(d\log(n/\eps))$.}, and that some of the techniques
we introduce are of independent interest.

In other recent and independent works, \cite{GOWZ10,HKM10} give PRGs
for intersections
of $m$ halfspaces (though not degree-$2$ threshold
functions). The
former has polynomial dependence on $m$ and requires only
bounded independence as well (and considers other functions of halfspaces
beside intersections), while the latter has poly-logarithmic
dependence
on $m$ under the Gaussian measure but is not solely via bounded
independence. Our
dependence on $m$ is polynomial.

\section{Notation}\SectionName{notation}
Let $p: \{-1,1\}^n \to \R$ be a polynomial and $p(x) =
\sum_{S\subseteq [n]} \widehat{p}_S \chi_S$ be its Fourier-Walsh
expansion, where $\chi_S(x) \eqdef \prod_{i\in S} x_i$. The \emph{influence} of variable $i$ on $p$ is $\Inf_i(p) \eqdef
\sum_{S\ni i} \widehat{p}_S^2$, and the \emph{total influence} of $p$ is $\Inf(p) = \sum_{i=1}^n \Inf_i(p).$ If $\Inf_i(p)\leq
\tau \cdot \Inf(p)$ for all $i$, we say that the polynomial $p$ is \emph{$\tau$-regular}.  If $f(x)=\sgn(p(x))$, where $p$
is $\tau$-regular, we say that $f$ is a \emph{$\tau$-regular} PTF.

For $R \subseteq \R^d$ denote by $I_R:\R^d \to \{0,1\}$ its characteristic function. It will be convenient in some of the
proofs to phrase our results in terms of $\eps$-fooling $\E[I_{[0,
  \infty)}(p(x))]$ as opposed to $\E[\sgn(p(x))]$. It is
straightforward that these are equivalent up to changing $\eps$ by a
factor of $2$.

We frequently use $A\approx_{\eps} B$ to denote that
$|A-B| = O(\eps)$, and we let the function $d_2(x,R)$ denote the $L_2$ distance from some $x\in\R^d$ to a region $R\subseteq \R^d$.

\section{Overview of our proof of
  \Theorem{main}}\SectionName{multivariate-ft}
The program of our proof follows the outline of the proof in~\cite{DGJSV09}. We first prove that bounded independence fools the
class of \emph{regular} degree-$2$ PTF's. We then reduce the general case to the regular case to show that bounded independence
fools all degree-$2$ PTF's. The bulk of our proof is to establish the first step; this is the most challenging part of this
work and where our main technical contribution lies. The second step is achieved by adapting the recent results of
\cite{DSTW09}.

We now elaborate on the first step. Let $f:\bits^n \to \bits$ be a boolean function. To show that $f$ is fooled by $k$-wise
independence, it suffices -- and is in fact necessary -- to prove the existence of two degree-$k$ ``sandwiching'' polynomials
$q_u, q_l:\bits^n \to \bits$ that approximate $f$ in a certain technical sense (see e.g. \cite{Bazzi07, BGGP:07}). Even though
this is an $n$-dimensional approximation problem, it may be possible to exploit the additional structure of the function under
consideration to reduce it to a low-dimensional problem. This is exactly what is done in both \cite{DGJSV09} and \cite{KNW10}
for the case of regular halfspaces.

We now briefly explain the approaches of \cite{DGJSV09} and
\cite{KNW10}. Let $f(x) = \sgn(\langle w, x\rangle)$ be an
$\eps^2$-regular
halfspace, i.e. $\|w\|_2 =1$ and $\max_i |w_i| \leq \eps$. An
insight used in \cite{DGJSV09} (and reused in
\cite{KNW10}) is the
following: the random variable $\langle w, x \rangle$ behaves
approximately like a standard Gaussian, hence it can be treated
as if it was one-dimensional. Thus, both \cite{DGJSV09} and
\cite{KNW10} construct a (different in each case) univariate
polynomial $P: \R \to \R$ that is a ``good'' approximation to the sign
function under the normal distribution in $\R$ (in the case of
\cite{KNW10}, the main point of the alternative proof was to avoid
explicitly reasoning about any such polynomials, but the existence of
such a polynomial is still implicit in the proof).
The
desired $n$-variate sandwiching polynomials are then obtained
(roughly) by setting $q_u(x) = P(\langle w, x \rangle)$ and
$q_u(x) = -P(-\langle w, x \rangle)$. It turns out that this approach suffices for the case of halfspaces. In \cite{DGJSV09}
the polynomial $P$ is constructed using approximation theory
arguments. In \cite{KNW10} it is obtained by taking a truncated
Taylor expansion of a certain smooth approximation to the sign
function, constructed via a method dubbed ``Fourier
Transform mollification'' (henceforth FT-mollification). We elaborate
in \Section{ft} below.

Let $f(x) = \sgn(p(x))$ be a regular degree-$2$ PTF. A first natural attempt to handle this case would be to use the univariate
polynomial $P$ described above -- potentially allowing its degree to increase -- and then take $q_u(x) = P(p(x))$, as before.
Unfortunately, such an approach fails for both constructions outlined above. We elaborate on this issue in
\Section{previous-fail}.

\subsection{FT-mollification}\SectionName{ft}
FT-mollification is a general procedure to obtain a smooth function
with bounded derivatives that approximates some bounded
function $f$. The univariate version of the method in the context of
derandomization was introduced in
\cite{KNW10}. In this paper we generalize it to the
multivariate setting and later use it to prove our main theorem.

For the univariate case, where $f:\R\rightarrow\R$, \cite{KNW10}
defined $\tilde{f}^c(x)= (c\cdot
\hat{b}(c\cdot t)\convolution f(t))(x)$ for a parameter $c$, where
$\hat{b}$ has unit integral and is the Fourier
transform of a smooth function $b$ of compact support (a so-called
{\em bump function}). Here
``$\convolution$'' denotes convolution.
The idea of smoothing functions via convolution with a smooth
approximation of the Dirac delta function is old, dating back to
``Friedrichs mollifiers''
\cite{Friedrichs44} in 1944. Indeed, the only difference between
Friedrichs mollification and FT-mollification is that in the
former, one convolves $f$ with the scaled bump function, and not its
Fourier transform.  The switch to the
Fourier transform is made to have better control on the high-order
derivatives of the resulting smooth function, which is
crucial for our application.

In our context, the method can be illustrated as follows. Let $X = \sum_i a_iX_i$ for independent $X_i$. Suppose we would like
to argue that $\E[f(X)] \approx_{\eps} \E[f(Y)]$, where $Y = \sum_i a_iY_i$ for $k$-wise independent $Y_i$'s that are
individually distributed as the $X_i$. Let $\tilde{f}^c$ be the FT-mollified version of $f$. If the parameter $c=c(\eps)$ is
appropriately selected, we can guarantee that $|f(x)-\tilde{f}^c(x)|<\eps$ ``almost everywhere'', and furthermore have ``good''
upper bounds on the high-order derivatives of $\tilde{f}^c$. We could then hope to show the following chain of inequalities:
$\E[f(X)] \approx_{\eps} \E[\tilde{f}^c(X)] \approx_{\eps} \E[\tilde{f}^c(Y)] \approx_{\eps} \E[f(Y)].$ To justify the first
inequality, note $f$ and $\tilde{f}^c$ are close almost everywhere, and so it suffices to argue that $X$ is sufficiently
anti-concentrated in the small region where they are not close. The second inequality would use Taylor's theorem, bounding the
error via upper bounds on moment expectations  of $X$ and the high-order derivatives of $\tilde{f}^c$. Showing the final
inequality would be similar to the first, except that one needs to justify that even under $k$-wise independence the
distribution of $Y$ is sufficiently anti-concentrated. We note that the argument outlined above was used in \cite{KNW10} to
provide an alternative proof that bounded independence fools regular halfspaces, and to optimally derandomize Indyk's moment
estimation algorithm in data streams \cite{Indyk06}. However, this univariate approach fails for degree-$2$ PTFs for technical
reasons (see \Section{previous-fail}).

We now describe our switch to multivariate FT-mollification. Let
$f:\{-1,1\}^n\rightarrow \{-1,1\}$ be arbitrary
and let $S\subset\R^n$ with $f^{-1}(1)\subseteq S\subseteq
\R^n\backslash f^{-1}(-1)$. Then fooling $\E[f(x)]$ and
fooling $\E[I_S(x)]$ are
equivalent. A natural attempt to this end would be to generalize
FT-mollification to $n$ dimensions, then FT-mollify $I_S$ and argue as
above using the multivariate Taylor's theorem. Such an
approach is perfectly valid, but as one might expect, there is a penalty for working over high dimensions. Both our
quantitative bounds on the error introduced by FT-mollifying, and the
error coming from the multivariate Taylor's theorem,
increase with the dimension. Our approach is then to
find a \emph{low-dimensional representation} of such a
region $S$ which allows us to obtain the desired bounds. We elaborate below on how this can be accomplished in our setting.

\subsection{Our Approach} Let $f = \sgn(p)$ be a regular multilinear degree-$2$ PTF with $\|p\|_2=1$ (wlog). Let
us assume for simplicity that $p$ is a quadratic form; handling the
additive linear form and constant is easy. The first conceptual step in our
proof is this: we decompose $p$ as $p_1-p_2+p_3$, where $p_1,p_2$ are positive semidefinite quadratic forms with no small
non-zero eigenvalues and $p_3$ is indefinite with all eigenvalues small in magnitude. This decomposition, whose existence
follows from elementary linear algebra, is particularly convenient for
the following reason: for $p_1,p_2$, we are able to exploit their
positive semidefiniteness to obtain better bounds from Taylor's
theorem, and for $p_3$ we can
establish moment bounds that are \emph{strictly} stronger than the
ones that follow from hypercontractivity for general
quadratic forms (our \Theorem{eigenbound}, which may be of independent
interest).  The fact that we need $p_1,p_2$ to not only be positive
semidefinite, but to also have no small eigenvalues, arises for technical
reasons; specifically, quadratic forms with no small non-zero
eigenvalues satisfy much better tail bounds, which plays a role in our
analysis.

We now proceed to describe the second conceptual step of the proof, which involves multivariate FT-mollification. As suggested
by the aforementioned, we would like to identify a region $R \subseteq \R^3$ such that $I_{[0,\infty)}(p(x))$ can be written
as $I_R(F(x))$ for some $F:\bits^n\rightarrow\R^3$ that
depends on the $p_i$, then FT-mollify $I_R$. The region $R$ is
selected as follows: note
we can write $p_3(x) = x^TA_{p_3}x$, where $A_{p_3}$ is a real
symmetric matrix with trace $\eigensum$. We consider the
region $R = \{x : x_1^2 - x_2^2 + x_3 + \eigensum \ge 0\} \subseteq \R^3$. Observe that $I_{[0,\infty)}(p(x)) =
I_{\reg}(\sqrt{p_1(x)}, \sqrt{p_2(x)}, p_3(x) - \eigensum)$. (Recall that $p_1, p_2$ are positive-semidefinite, hence the first
two coordinates are always real.) We then prove via FT-mollification that $\E[I_R(\sqrt{p_1(x)}, \sqrt{p_2(x)}, p_3(x) -
\eigensum)]$ is preserved within $\eps$ by bounded independence.

The high-level argument is of similar flavor as the one outlined above for the case of halfspaces, but the
details are more elaborate. The proof makes essential use of good tail
bounds for $p_1,p_2$, a new
moment bound for $p_3$,
properties of
FT-mollification, and a variety of other tools such as the Invariance
Principle \cite{MOO10} and the anti-concentration bounds
of \cite{CW01}.

\smallskip

\noindent \textbf{Organization.} \Section{thisisit} contains the results we will need on multivariate FT-mollification. In
\Section{spectral-moment} we give our improved moment bound on
quadratic forms. \Section{regular} contains the analysis of the
regular case, and \Section{main-thm} concludes the proof of our main
theorem. \Section{intersections} summarizes our results on
intersections.

\section{Multivariate FT-mollification}\SectionName{thisisit}
\begin{definition}
In {\em hyperspherical coordinates} in $\R^d$, we represent a point
$x = (x_1,\ldots,x_d)$ by $x_i = r\cos(\phi_i)\prod_{j=1}^{i-1}
\sin(\phi_j)$ for $i<d$, and $x_d = r\prod_{j=1}^{d-1} \sin(\phi_j)$. Here
$r = \|x\|_2$ and the $\phi_i$ satisfy $0\le \phi_i \le \pi$
for $i<d-1$, and $0\le \phi_{d-1} < 2\pi$.
\end{definition}

\begin{fact}
Let $J$ be the Jacobian matrix corresponding to the change of
variables from Cartesian to hyperspherical coordinates.  Then 
$$\det(J) = r^{d-1}\prod_{i=1}^{d-2}\sin^{d-1-i}(\phi_i) .$$
\end{fact}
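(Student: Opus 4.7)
The plan is to prove the Jacobian determinant formula by induction on the dimension $d$, exploiting a recursive structure of the hyperspherical coordinates that reduces the $d$-dimensional transformation to a $2$-dimensional polar change of variables composed with the $(d-1)$-dimensional hyperspherical change of variables.

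First I would verify the base case $d=2$: here $x_1 = r\cos(\phi_1)$ and $x_2 = r\sin(\phi_1)$, so a direct computation gives $\det(J) = r\cos^2(\phi_1) + r\sin^2(\phi_1) = r$, which matches the claimed formula since the product $\prod_{i=1}^{0}$ is empty.

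For the inductive step, I would observe that hyperspherical coordinates in $\R^d$ can be obtained by first performing the $2$-dimensional polar transformation $(r,\phi_1) \mapsto (x_1,\rho)$ with $x_1 = r\cos(\phi_1)$ and $\rho = r\sin(\phi_1)$, and then, holding $x_1$ fixed, applying the $(d-1)$-dimensional hyperspherical transformation to the remaining coordinates $(\rho,\phi_2,\ldots,\phi_{d-1}) \mapsto (x_2,\ldots,x_d)$. Because the first transformation acts on $(r,\phi_1)$ while the second acts on $(\phi_2,\ldots,\phi_{d-1})$ and on $\rho$ (which is an output of the first), the composite Jacobian matrix has an appropriate block structure whose determinant factors as the product of the two individual Jacobian determinants. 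The polar Jacobian contributes a factor of $r$, and by the inductive hypothesis applied in dimension $d-1$ (with radius $\rho$ and angles $\phi_2,\ldots,\phi_{d-1}$, playing the role of $\psi_1,\ldots,\psi_{d-2}$) the second Jacobian equals
\[
\rho^{d-2}\prod_{j=2}^{d-2}\sin^{d-1-j}(\phi_j).
\]

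Multiplying, substituting $\rho = r\sin(\phi_1)$, and collecting powers yields
\[
\det(J) \;=\; r \cdot (r\sin(\phi_1))^{d-2} \prod_{j=2}^{d-2}\sin^{d-1-j}(\phi_j) \;=\; r^{d-1} \sin^{d-2}(\phi_1) \prod_{j=2}^{d-2}\sin^{d-1-j}(\phi_j) \;=\; r^{d-1}\prod_{i=1}^{d-2}\sin^{d-1-i}(\phi_i),
\]
completing the induction. The only step requiring care is the block-triangular factorization of the composite Jacobian, which I would justify by ordering the variables so that the Jacobian of the composite map becomes block lower-triangular with the two individual Jacobians as diagonal blocks; the determinant then factors, and the indexing works out exactly as shown.
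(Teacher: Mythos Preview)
The paper does not prove this statement: it is labeled a \emph{Fact} and stated without argument, presumably as a classical result about hyperspherical coordinates. Your inductive proof is correct and is the standard way to derive this formula; the recursive decomposition into a $2$-dimensional polar map followed by a $(d-1)$-dimensional hyperspherical map is exactly right, and the chain rule (equivalently, the block-triangular structure you describe) gives the claimed factorization of the determinant.
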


We define the bump function $b:\R^d\rightarrow\R$ by
$$ b(x) = \sqrt{C_d}\cdot \begin{cases} 1-\|x\|_2^2 \ &
  \textrm{for} \ \|x\|_2
  < 1 \\ 0 \ &\textrm{otherwise} \end{cases} .$$
The value $C_d$ is chosen so that $\|b\|_2^2 = 1$. 
We note that $b$ is not smooth (its mixed partials do not exist
at the boundary of the unit sphere), but we will only ever need that
$\frac{\partial}{\partial x_i} b\in L^2(\R^d)$ for all $i\in[d]$.

Henceforth, we make
the setting
$$ A_d = C_d \cdot \int_0^{2\pi} \int_{[0,\pi]^{d-2}}
\left(\prod_{i=1}^{d-2}\sin^{d-1-i}(\phi_i)\right) d\phi_1
d\phi_2 \cdots d\phi_{d-1} .$$

We let $\hat{b}:\R^d\rightarrow\R$ denote the Fourier transform of
$b$, i.e.
$$\hat{b}(t) = \frac{1}{(\sqrt{2\pi})^d}\int_{\R^d} b(x)e^{-i\inprod{x,t}}dx
.$$
Finally, $B:\R^d\rightarrow\R$ denotes the function
$\hat{b}^2$, and we define $B_c:\R^d\rightarrow\R$ by
$$ B_c(x_1,\ldots,x_d) = c^d\cdot B(c x_1,\ldots,c x_d) .$$

\begin{definition}[Multivariate FT-mollification]
For $F:\R^d\rightarrow \R$ and given $c>0$, we define the {\em
  FT-mollification} $\tilde{F}^c:\R^d\rightarrow\R$ by
$$ \tilde{F}^c(x) = (B_c \convolution F)(x) = \int_{\R^d}
B_c(y)F(x-y)dy .$$
\end{definition}

In this section we give several quantitative properties of
FT-mollification.  We start off with a few lemmas that will be useful
later.

\begin{lemma}\LemmaName{unit-integral}
For any $c>0$,
$$ \int_{\R^d} B_c(x)dx = 1 .$$
\end{lemma}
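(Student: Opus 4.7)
The plan is to reduce the claim to computing $\int_{\R^d} \hat{b}(x)^2\,dx$ and then invoke Plancherel's theorem.

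First I would apply the linear change of variables $u = cx$, $du = c^d\,dx$, which gives
\[
\int_{\R^d} B_c(x)\,dx \;=\; \int_{\R^d} c^d B(cx)\,dx \;=\; \int_{\R^d} B(u)\,du \;=\; \int_{\R^d} \hat{b}(u)^2\,du.
\]
So it suffices to prove $\int_{\R^d} \hat{b}(u)^2\,du = 1$.

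Next I would observe that $b$ is real-valued and even (indeed, radial), hence $\hat{b}$ is real-valued: writing
\[
\hat{b}(t) \;=\; \frac{1}{(2\pi)^{d/2}}\int_{\R^d} b(x)\bigl(\cos\langle x,t\rangle - i\sin\langle x,t\rangle\bigr)\,dx,
\]
the imaginary part vanishes by the symmetry $b(-x)=b(x)$. Therefore $\hat{b}(u)^2 = |\hat{b}(u)|^2$. Moreover $b$ is bounded and compactly supported, so $b \in L^1(\R^d)\cap L^2(\R^d)$, and the Fourier-transform conventions used in the paper are the symmetric (unitary) ones.

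Now I would invoke Plancherel's theorem, which under this normalization states that the Fourier transform is an isometry on $L^2(\R^d)$. Thus
\[
\int_{\R^d}\hat{b}(u)^2\,du \;=\; \int_{\R^d} |\hat{b}(u)|^2\,du \;=\; \int_{\R^d} |b(x)|^2\,dx \;=\; \|b\|_2^2 \;=\; 1,
\]
where the last equality is the defining normalization of $C_d$. Combining this with the change of variables above yields $\int_{\R^d} B_c(x)\,dx = 1$.

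There is essentially no obstacle here: the only point requiring care is that $b$ is only in $L^2$ rather than smooth, but since Plancherel requires only $L^2$-membership, the lack of smoothness is irrelevant. The computation of $C_d$ (and of $A_d$) is not needed for this lemma; only the fact that $\|b\|_2 = 1$ is used.
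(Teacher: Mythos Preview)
Your proof is correct and follows essentially the same route as the paper: reduce to $c=1$ by the substitution $u=cx$, then use Plancherel together with $\|b\|_2=1$. You add the justification that $\hat b$ is real (so $\hat b^2=|\hat b|^2$), which the paper tacitly assumes.
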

\begin{proof}
Since $B = \hat{b}^2$, the stated integral when $c=1$ is
$\|\hat{b}\|_2^2$, which
is $\|b\|_2^2 = 1$ by Plancherel's theorem.
For
general $c$, make the
change of variables $u = (cx_1,\ldots,cx_d)$ then integrate over $u$.
\end{proof}

Before presenting the next lemma, we familiarize the reader with 
some multi-index notation. A $d$-dimensional multi-index is a vector $\beta
\in \N^d$ (here $\N$ is the nonnegative integers).  For $\alpha,\beta
\in \N^d$, we say $\alpha \le \beta$ if the inequality holds
coordinate-wise, and for such $\alpha,\beta$ we define $|\beta| =
\sum_i \beta_i$, 
$\binom{\beta}{\alpha} = \prod_{i=1}^d \binom{\beta_i}{\alpha_i}$, and
$\beta! = \prod_{i=1}^d \beta_i!$. 
For $x\in\R^d$ we use $x^\beta$ to denote $\prod_{i=1}^d
x_i^{\beta_i}$, and for $f:\R^d\rightarrow\R$ we use
$\partial^\beta f$ to denote $\frac{\partial^{|\beta|}}{\partial
  x_1^{\beta_1} \cdots \partial x_d^{\beta_d}} f$.

\begin{lemma}\LemmaName{l1bound}
For any $\beta\in\N^d$, $
\|\partial^\beta B\|_1 \le 2^{|\beta|}$.
\end{lemma}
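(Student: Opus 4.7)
The plan is to reduce $\|\partial^\beta B\|_1$ to an $L^2$ quantity via Cauchy–Schwarz, and then use the Fourier transform's interchange of differentiation and polynomial multiplication to exploit the compact support of $b$.

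More concretely, since $B = \hat{b}\cdot\hat{b}$, the Leibniz rule gives
\[
\partial^\beta B \;=\; \sum_{\alpha \le \beta} \binom{\beta}{\alpha}\, (\partial^\alpha \hat{b})\cdot(\partial^{\beta-\alpha}\hat{b}).
\]
Taking $L^1$ norms and applying the triangle inequality followed by Cauchy–Schwarz term-by-term yields
\[
\|\partial^\beta B\|_1 \;\le\; \sum_{\alpha \le \beta} \binom{\beta}{\alpha}\, \|\partial^\alpha \hat{b}\|_2 \,\|\partial^{\beta-\alpha}\hat{b}\|_2.
\]
So the task reduces to showing $\|\partial^\alpha \hat{b}\|_2 \le 1$ for every multi-index $\alpha \le \beta$.

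For this, I would invoke the standard Fourier identity $\partial^\alpha \hat{b} = \widehat{(-ix)^\alpha b}$ together with Plancherel's theorem (with the convention used in the paper), which gives
\[
\|\partial^\alpha \hat{b}\|_2 \;=\; \|x^\alpha b(x)\|_2.
\]
Since $b$ is supported on the closed unit ball $\{\|x\|_2 \le 1\}$, on its support we have $|x^\alpha| \le \|x\|_\infty^{|\alpha|} \le \|x\|_2^{|\alpha|} \le 1$, and hence $\|x^\alpha b\|_2 \le \|b\|_2 = 1$ by the normalization of $C_d$.

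Plugging back in, we obtain
\[
\|\partial^\beta B\|_1 \;\le\; \sum_{\alpha \le \beta} \binom{\beta}{\alpha} \;=\; \prod_{i=1}^d \sum_{\alpha_i=0}^{\beta_i}\binom{\beta_i}{\alpha_i} \;=\; \prod_{i=1}^d 2^{\beta_i} \;=\; 2^{|\beta|},
\]
as claimed. The only mild subtlety — and the closest thing to an obstacle — is justifying the application of the Fourier derivative identity given that $b$ is not smooth; this is handled by noting that $x^\alpha b$ is bounded with compact support, hence in $L^1\cap L^2$, so its Fourier transform is well defined and equals $\partial^\alpha \hat{b}$ in the distributional (and in fact $L^2$) sense, which is all Plancherel requires.
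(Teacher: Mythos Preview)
Your proof is correct and follows essentially the same approach as the paper: Leibniz rule on $B=\hat b^2$, Cauchy--Schwarz to pass to $L^2$, Plancherel to convert $\|\partial^\alpha\hat b\|_2$ into $\|x^\alpha b\|_2$, the unit-ball support of $b$ to bound this by $1$, and finally summing the multinomial coefficients. The only cosmetic difference is that the paper evaluates $\sum_{\alpha\le\beta}\binom{\beta}{\alpha}=2^{|\beta|}$ via a balls-in-buckets counting argument, whereas you use the product factorization $\prod_i 2^{\beta_i}$.
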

\begin{proof}
We have
\begin{equation*}
\partial^\beta B = \sum_{\alpha\le \beta} \binom{\beta}{\alpha}
\left(\partial^\alpha
\hat{b}\right)\cdot
\left(\partial^{\beta-\alpha}\hat{b} \right)
\end{equation*}

Thus,
\begin{eqnarray}
\nonumber \left\|\partial^\beta B\right\|_1 &=&
\left\|\sum_{\alpha\le
    \beta}\binom{\beta}{\alpha}\left(\partial^\alpha \hat{b}\right)
  \cdot
\left(\partial^{\beta-\alpha}\hat{b}\right)\right\|_1\\
&\le&
\sum_{\alpha\le \beta}\binom{\beta}{\alpha}
\left\|\partial^\alpha \hat{b}\right\|_2 \cdot
\left\|\partial^{\beta-\alpha}
  \hat{b}\right\|_2\EquationName{awesome-cs}\\
&=&
\sum_{\alpha\le \beta}\binom{\beta}{\alpha}
\left\|x^{\alpha}\cdot b\right\|_2 \cdot
\left\|x^{\beta-\alpha}\cdot b\right\|_2\EquationName{ft-preservel2}\\
&\le&
\sum_{\alpha\le
  \beta}\binom{\beta}{\alpha}
\EquationName{l2atmost1}\\
&=&
2^{|\beta|}\EquationName{combinatorially-nice}
\end{eqnarray}
\Equation{awesome-cs} follows by
Cauchy-Schwarz. \Equation{ft-preservel2} follows from Plancherel's
theorem, since the Fourier transform of
$\partial^\alpha \hat{b}$ is
$x^{\alpha}\cdot b$, up to factors of $i$. \Equation{l2atmost1}
follows since $\|x^\alpha\cdot b\|_2 \le \|b\|_2 = 1$.
\Equation{combinatorially-nice} is
seen combinatorially.  Suppose we have $2d$ buckets
$A_i^j$ for $(i,j)\in [d]\times [2]$. We also have $|\beta|$ balls,
with each having one of
$d$ types with $\beta_i$ balls of type $i$. Then the number of ways to
place balls into buckets such that balls of type $i$ only go into
some $A_i^j$ is $2^{|\beta|}$ (each ball has $2$ choices).
However, it is also $\sum_{\alpha\le \beta}\binom{\beta}{\alpha}$,
since for every placement of balls we must place some
number $\alpha_i$ balls of type $i$ in $A_i^1$ and $\beta_i-\alpha_i$
balls in $A_i^2$.
\end{proof}

\begin{lemma}
Let $z>0$ be arbitrary.  Then
$$\int_{\|x\|_2 > dz} B(x)dx = O(1/z^2) .$$
\end{lemma}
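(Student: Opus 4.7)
The plan is to prove this via a second moment / Markov argument, reducing the tail integral to $\int \|x\|_2^2 B(x)\,dx$ and then using Plancherel to transfer this into an $L^2$ norm of the gradient of $b$, which can be estimated by direct calculation.

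First, since $b$ is real and radially symmetric (hence even), $\hat{b}$ is real-valued, so $B = \hat{b}^2 \ge 0$. By Markov's inequality,
\begin{equation*}
\int_{\|x\|_2 > dz} B(x)\,dx \;\le\; \frac{1}{(dz)^2}\int_{\R^d} \|x\|_2^2\, B(x)\,dx \;=\; \frac{1}{d^2 z^2}\sum_{i=1}^d \int_{\R^d} x_i^2\, \hat{b}(x)^2\,dx.
\end{equation*}
So it suffices to show $\sum_{i=1}^d \int x_i^2\,\hat{b}(x)^2\,dx = O(d^2)$.

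Next I would use Plancherel's theorem. Since multiplication by $x_i$ on the Fourier side corresponds (up to a constant factor of $i$) to the partial derivative $\partial_i$ on the spatial side, we get $\|x_i\hat{b}\|_2 = \|\partial_i b\|_2$, a fact already invoked in the proof of \Lemma{l1bound} via the identity used in \Equation{ft-preservel2}. Then
\begin{equation*}
\sum_{i=1}^d \int x_i^2\,\hat{b}(x)^2\,dx \;=\; \sum_{i=1}^d \|\partial_i b\|_2^2 \;=\; 4C_d\int_{\|x\|_2<1}\|x\|_2^2\,dx,
\end{equation*}
since $\partial_i b(x) = -2\sqrt{C_d}\,x_i$ for $\|x\|_2 < 1$ and vanishes elsewhere.

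It remains to evaluate $C_d\int_{\|x\|_2<1}\|x\|_2^2\,dx$. Passing to hyperspherical coordinates, let $\omega_d$ denote the surface area of the unit sphere in $\R^d$, so $\int_{\|x\|_2<1}\|x\|_2^2\,dx = \omega_d/(d+2)$, while the normalization $\|b\|_2^2 = 1$ gives $C_d = \bigl(\omega_d\int_0^1(1-r^2)^2r^{d-1}\,dr\bigr)^{-1}$. A short computation expanding $(1-r^2)^2 = 1 - 2r^2 + r^4$ yields
\begin{equation*}
\int_0^1 (1-r^2)^2 r^{d-1}\,dr \;=\; \frac{1}{d} - \frac{2}{d+2} + \frac{1}{d+4} \;=\; \frac{8}{d(d+2)(d+4)},
\end{equation*}
so $C_d\int\|x\|_2^2\,dx = \frac{d(d+4)}{8} = O(d^2)$. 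Therefore $\sum_i\|\partial_i b\|_2^2 = O(d^2)$ and the desired bound follows.

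The only step requiring care is the direct computation of $C_d\int\|x\|_2^2\,dx$; the surface-area factor $\omega_d$ cancels, leaving a clean polynomial-in-$d$ answer that makes the final $O(1/z^2)$ bound dimension-free. I do not anticipate a genuine obstacle — Plancherel plus elementary calculus suffice.
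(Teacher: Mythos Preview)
Your proposal is correct and follows essentially the same approach as the paper: both compute $\int \|x\|_2^2 B(x)\,dx$ by converting to $\sum_i \|\partial_i b\|_2^2$ via Plancherel, evaluate this in hyperspherical coordinates using the normalization $\|b\|_2^2=1$ (obtaining the same $\frac{8}{d(d+2)(d+4)}$ integral), and conclude by Markov's inequality. The only cosmetic differences are that you apply Markov at the outset rather than at the end, and you write things in terms of the surface area $\omega_d$ rather than the paper's constant $A_d = C_d\omega_d$.
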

\begin{proof}
Consider the integral
$$S = \int_{\R^d} \|x\|_2^2\cdot B(x)dx = \sum_{i=1}^d
\left(\int_{\R^d} x_i^2\cdot B(x)dx\right) .$$
Recalling that $B = \hat{b}^2$, the
Fourier transform of $B$ is $(2\pi)^{-d/2}(b\convolution b)$. The
above integral is $(2\pi)^{d/2}$ times the Fourier transform of
$x_i^2\cdot B$, evaluated at $0$.
Since multiplying a function by $i\cdot x_j$ 
corresponds to partial differentiation by $x_j$ in the Fourier domain,
$$ S = \sum_{i=1}^d \left(\frac{\partial^2}{\partial x_i^2}
(b\convolution b)\right)(0) = \sum_{i=1}^d
\left(\left(\frac{\partial}{\partial x_i} b\right)
\convolution \left(\frac{\partial}{\partial x_i} b\right)
\right)(0) = \sum_{i=1}^d \left\| \frac{\partial}{\partial x_i} b
\right\|_2^2$$
with the last equality using that $\frac{\partial}{\partial x_i} b$ is
odd.

We have, for $x$ in the unit ball,
$$ \left(\frac{\partial}{\partial x_i} b\right)(x) =
-2x_i $$
so that, after switching to hyperspherical coordinates,
\begin{equation}\EquationName{switch-hypersphere}
\sum_{i=1}^d \left\| \frac{\partial}{\partial x_i} b
\right\|_2^2 = A_d \cdot \int_{0}^1
4r^{d+1}dr .
\end{equation}
\begin{claim}\ClaimName{smallmoment}
$$\sum_{i=1}^d \left\| \frac{\partial}{\partial x_i} b
\right\|_2^2 = O(d^2) $$
\end{claim}
\begin{proof}
By definition of $b$,
\begin{eqnarray*}
\|b\|_2^2 &=& A_d \cdot \int_{0}^1
r^{d-1} +  r^{d+3} - 2r^{d+1}dr\\
&=& A_d\cdot \frac{8}{d(d+2)(d+4)}\\
&=& A_d\cdot \Omega(1/d^3).
\end{eqnarray*}
We also have by \Equation{switch-hypersphere} that
$$
\sum_{i=1}^d \left\| \frac{\partial}{\partial x_i} b
\right\|_2^2 = A_d\cdot \frac{4}{d+2} = A_d\cdot O(1/d) .$$
The claim follows since $\|b\|_2^2 = 1$.
\end{proof}
We now finish the proof of the lemma. Since $B$ has unit
integral on $\R^d$ (\Lemma{unit-integral}) and is nonnegative
everywhere, we can view $B$ as the density function of a probability
distribution on $\R^d$.  Then $S$ can be viewed as
$\E_{x\sim B}[\|x\|_2^2]$. Then by Markov's inequality, for $x\sim B$,
$$ \Pr\left[\|x\|_2^2 \ge z^2\cdot \E[\|x\|_2^2]\right] \le 1/z^2 ,$$
which is equivalent to
$$ \Pr\left[\|x\|_2 \ge z\cdot \sqrt{\E[\|x\|_2^2]}\right] \le 1/z^2 .$$
We conclude by observing that the above probability is simply
$$ \int_{\|x\|_2 \ge z \cdot \sqrt{\E[\|x\|_2^2]}} B(x)dx ,$$
from which the lemma follows since $\E[\|x\|_2^2] = O(d^2)$ by
\Claim{smallmoment}.
\end{proof}

We now state the main theorem of this section, which 
says that if $F$ is bounded, then
$\tilde{F}^c$ is smooth with strong bounds on its mixed partial
derivatives, and is
close to $F$ on points where $F$ satisfies
some continuity property.

\begin{theorem}\TheoremName{approx-uniform}
Let $F:\R^d\rightarrow\R$ be bounded and $c>0$ be arbitrary. Then,
\begin{itemize}
\item[i.] $\|\partial^\beta \tilde{F}^c\|_\infty \le \|F\|_\infty
  \cdot
  (2c)^{|\beta|}$ for all $\beta\in\N^d$.
\item[ii.] Fix some $x\in\R^d$.  Then if $|F(x)-F(y)|\le \eps$
  whenever $\|x-y\|_2 \le \delta$ for some $\eps,\delta\ge 0$, then
  $|\tilde{F}^c(x) - F(x)| \le
  \eps + \|F\|_\infty \cdot O(d^2 / (c^2\delta^2))$.
\end{itemize}
\end{theorem}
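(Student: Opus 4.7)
The plan is to reduce both parts to statements about $B_c$ itself by exploiting the convolution identity $\tilde F^c = B_c \convolution F$, and then invoke \Lemma{unit-integral}, \Lemma{l1bound}, and the preceding tail bound on $B$.

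For part (i), differentiation under the integral (legitimate here since $F$ is bounded and the derivatives of $B_c$ are integrable) gives $\partial^\beta \tilde F^c = (\partial^\beta B_c)\convolution F$, so by Young's inequality (or just pulling $\|F\|_\infty$ out of the integral),
\[
\|\partial^\beta \tilde F^c\|_\infty \le \|F\|_\infty \cdot \|\partial^\beta B_c\|_1 .
\]
Since $B_c(x)=c^d B(cx)$, the chain rule and a change of variables yield $\|\partial^\beta B_c\|_1 = c^{|\beta|}\|\partial^\beta B\|_1$, and \Lemma{l1bound} bounds the latter by $2^{|\beta|}$. The claimed bound $\|F\|_\infty \cdot (2c)^{|\beta|}$ follows.

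For part (ii), I would use \Lemma{unit-integral} to write
\[
\tilde F^c(x) - F(x) \;=\; \int_{\R^d} B_c(y)\bigl(F(x-y) - F(x)\bigr)\,dy ,
\]
then split the integration domain into $\|y\|_2 \le \delta$ and $\|y\|_2 > \delta$. On the first region the hypothesized continuity gives $|F(x-y)-F(x)|\le \eps$, and together with $B_c \ge 0$ and $\int B_c = 1$ this contributes at most $\eps$. On the second region we bound the integrand by $2\|F\|_\infty$ and are left to control $\int_{\|y\|_2>\delta} B_c(y)\,dy$.

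The only step that needs care is this tail estimate, which I would reduce to the preceding lemma by the substitution $u = cy$: since $B_c(y)\,dy = B(u)\,du$, the integral equals $\int_{\|u\|_2 > c\delta} B(u)\,du$. Applying the lemma with $dz = c\delta$ (i.e.\ $z = c\delta/d$) yields $O(d^2/(c\delta)^2)$, so the tail region contributes at most $\|F\|_\infty \cdot O(d^2/(c^2\delta^2))$. Summing the two contributions gives the stated bound. The main obstacle, modest as it is, is simply tracking the factor of $d$ that enters through the change of variables in the tail estimate; everything else is routine manipulation of convolutions.
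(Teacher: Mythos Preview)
Your proposal is correct and follows essentially the same route as the paper: both parts reduce to properties of $B_c$ via the convolution identity, with part (i) using $\|\partial^\beta B_c\|_1 = c^{|\beta|}\|\partial^\beta B\|_1$ together with \Lemma{l1bound}, and part (ii) splitting the integral at radius $\delta$, invoking \Lemma{unit-integral} on the near part and the tail lemma (with $z = c\delta/d$) on the far part. The only cosmetic difference is that you bound $|F(x-y)-F(x)|$ by $2\|F\|_\infty$ on the tail, whereas the paper writes $\|F\|_\infty$; both are absorbed by the $O(\cdot)$.
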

\begin{proof}
We first prove (i).
\begin{eqnarray}
\nonumber \left|\left(\partial^\beta \tilde{F}^c\right)(x)\right| &=&
\left|\left(\partial^\beta (B_c\convolution
  F)\right)(x)\right|\\
\nonumber &=&
\left|\left(\left(\partial^\beta B_c\right)\convolution F\right)(x)\right|\\
\nonumber &=& \left|\int_{\R^d}\left(\partial^\beta B_c\right)(y) F(x-y)dy\right|\\
\nonumber &\le& \|F\|_\infty \cdot \left\|\partial^\beta B_c\right\|_1\\
 &=& \|F\|_\infty \cdot c^{|\beta|}\cdot \left\|\partial^\beta
  B\right\|_1 \EquationName{bl1-isit}\\
\nonumber &\le& \|F\|_\infty\cdot (2c)^{|\beta|}
\end{eqnarray}
with the last inequality holding by \Lemma{l1bound}.

We now prove (ii).
\begin{eqnarray}
\nonumber \tilde{F}^c(x) &=& (B_c\convolution F)(x)\\
\nonumber &=& \int_{\R^d} B_c(x-y)F(y)dy \\
&=& F(x) + \int_{\R^d} (F(y) - F(x))B_c(x-y)dy\EquationName{use-unitint}\\
\nonumber &=& F(x) + \int_{\|x-y\|_2 < \delta} (F(y) - F(x))B_c(x-y) +
\int_{\|x-y\|_2 \ge \delta} (F(y) - F(x))B_c(x-y)\\
\nonumber &=& F(x) \pm \eps\cdot \int_{\|x-y\|_2 < \delta} |B_c(x-y)| +
\int_{\|x-y\|_2 \ge \delta} (F(y) - F(x))B_c(x-y)\\
\nonumber &=& F(x) \pm \eps\cdot \int_{\R^d} B_c(x-y) +
\int_{\|x-y\|_2 \ge \delta} (F(y) - F(x))B_c(x-y)\\
\nonumber &=& F(x) \pm \eps \pm \|F\|_\infty \cdot \int_{\|x-y\|_2 \ge \delta}
B_c(x-y)dy\\
\nonumber &=& F(x) \pm \eps \pm \|F\|_\infty \cdot \int_{\|u\|_2 \ge c\delta}
B(u)du\\
\nonumber &=& F(x) \pm \eps \pm \|F\|_\infty \cdot O(d^2/(c^2\delta^2))
\end{eqnarray}
where \Equation{use-unitint} uses \Lemma{unit-integral}.
\end{proof}

\begin{remark}\RemarkName{sharper-derivative-bounds}
It is possible to obtain sharper bounds on $\|\partial^\beta
\tilde{F}^c\|_\infty$. In particular, note in the proof of
\Theorem{approx-uniform} that $\|\partial^\beta
\tilde{F}^c\|_\infty \le \|F\|_\infty \cdot c^{|\beta|}\cdot
\|\partial^\beta B\|_1$.  An improved bound on
$\|\partial^\beta B\|_1$ versus that of \Lemma{l1bound}
turns out to be possible. This improvement is
useful when
FT-mollifying over high dimension, but in the proof of our main result
(\Theorem{main}) we are never concerned with $d>4$. We thus above
presented a simpler proof for clarity of exposition,
and we defer the details of the improvement
to \Section{improved-ftmol}.
\end{remark}

The following theorem is immediate from
\Theorem{approx-uniform}, and gives guarantees when FT-mollifying the
indicator function
of some region.
In \Theorem{approx-indicator}, and some later proofs
which invoke the theorem, we use the following notation.
For $R\subset \R^d$, we let $\partial R$ denote the boundary
of $R$ (specifically in this context, $\partial R$ is the set of
points $x\in\R^d$ such that for every $\eps>0$, the ball about $x$ of
radius $\eps$ intersects both $R$ and $\R^d\backslash R$). 

\begin{theorem}\TheoremName{approx-indicator}
For any region $R\subseteq \R^d$ and $x\in\R^d$,
$$ |I_R(x) - \tilde{I}_R^c(x)| \le \min\left\{1,
  O\left(\left(\frac{d}{c\cdot d_2(x,\partial
        R)}\right)^2\right)\right\} .$$
\end{theorem}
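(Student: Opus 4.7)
The plan is to apply \Theorem{approx-uniform} directly to the function $F = I_R$, noting that $\|I_R\|_\infty = 1$, and combine its quantitative conclusion with the trivial upper bound of $1$.

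First, I would establish the trivial bound $|I_R(x) - \tilde{I}_R^c(x)| \le 1$. Observe that the bump function $b$ is radially symmetric and hence even, which makes its Fourier transform $\hat{b}$ real-valued; therefore $B = \hat{b}^2 \ge 0$, and by scaling $B_c \ge 0$ pointwise. Combined with \Lemma{unit-integral} (which gives $\int B_c = 1$), the function $\tilde{I}_R^c(x) = \int B_c(y) I_R(x-y)\,dy$ is a convex combination of values of $I_R$, so $\tilde{I}_R^c(x) \in [0,1]$ and the pointwise difference is trivially at most $1$.

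Next, I would derive the quantitative $(d/(c\cdot d_2(x,\partial R)))^2$ bound by invoking part (ii) of \Theorem{approx-uniform}. If $d_2(x,\partial R) = 0$ the bound is vacuous, so assume $\delta := d_2(x,\partial R) > 0$. For any $0 < \delta' < \delta$, the closed ball $\{y : \|x-y\|_2 \le \delta'\}$ is disjoint from $\partial R$; since this ball is connected and $\R^d \setminus \partial R$ decomposes into the interior of $R$ and the interior of its complement, $I_R$ must be constant on the ball. Thus $|I_R(x) - I_R(y)| = 0$ whenever $\|x-y\|_2 \le \delta'$, and applying part (ii) of \Theorem{approx-uniform} with $\eps = 0$ and $\delta = \delta'$ gives
\[
|\tilde{I}_R^c(x) - I_R(x)| \le O\!\left(\frac{d^2}{c^2 (\delta')^2}\right).
\]
Letting $\delta' \uparrow d_2(x,\partial R)$ yields the desired bound, and taking the minimum with the trivial bound $1$ completes the argument.

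There is no serious obstacle here; this is essentially a one-step deduction from \Theorem{approx-uniform}. The only subtlety worth stating explicitly is the connectedness argument showing that $I_R$ is constant on any open ball avoiding $\partial R$, which justifies the choice $\eps = 0$ in the continuity hypothesis of \Theorem{approx-uniform}(ii), and the observation that $B_c$ is a probability density so that the FT-mollification of an indicator is automatically in $[0,1]$.
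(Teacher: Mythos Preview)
Your proof is correct and follows exactly the paper's approach: the trivial bound via nonnegativity of $B_c$ and $\int B_c = 1$, and the quantitative bound by applying \Theorem{approx-uniform}(ii) with $F = I_R$ and $\eps = 0$. Your use of $\delta' \uparrow d_2(x,\partial R)$ and the connectedness argument are just more careful versions of what the paper leaves implicit.
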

\begin{proof}
We have $|I_R(x) - \tilde{I}_R^c(x)| \le 1$ always. This follows since
$\tilde{I}_R^c$ is nonnegative (it is the convolution of nonnegative
functions), and is never larger than $\|I_R\|_\infty =
1$.
The other bound is obtained, for $x\notin\partial R$, by
applying \Theorem{approx-uniform} to $F = I_R$ with $\eps = 0$, $\delta =
d_2(x,\partial R)$.
\end{proof}

\section{A spectral moment bound for quadratic forms}\SectionName{spectral-moment}
For a quadratic form $p(x) = \sum_{i\le j}a_{i,j}x_ix_j$, we can associate a real symmetric matrix $A_p$ which has the
$a_{i,i}$ on the diagonals and $a_{\min\{i,j\},\max\{i,j\}}/2$ on the offdiagonals, so that $p(x) = x^TA_px$. We now show a
moment bound for quadratic forms which takes into account the maximum eigenvalue of $A_p$. Our proof is partly inspired by a
proof of Whittle \cite{Whittle60}, who showed the hypercontractive
inequality for degree-$2$ polynomials when comparing $q$-norms to
$2$-norms (see \Theorem{bonami-beckner}).

Recall the {\em Frobenius norm} of $A\in\R^{n\times n}$ is $\|A\|_2 = \sqrt{\sum_{i,j=1}^{n,n} A_{i,j}^2} = \sqrt{\sum_i \lambda_i^2} =
\sqrt{\tr(A^2)}$, where $\tr$ denotes trace and $A$ has eigenvalues $\lambda_1,\ldots,\lambda_n$. Also, let $\lambdamax{A}$ be the largest magnitude of an eigenvalue of $A$. We can now
state and prove the main theorem of this section, which plays a crucial role in our analysis of the regular case of our main theorem (\Theorem{main}).

\begin{theorem}\TheoremName{eigenbound}
Let $A\in\R^{n\times n}$ be symmetric
and $x\in\bits^n$ be random.  Then for all $k\geq 2$,
$$
\E[|(x^TAx) - \tr(A)|^k] \leq C^k \cdot \max\{\sqrt{k}\|A\|_2,
  k\lambdamax{A}\}^k
$$ where $C$ is an absolute constant. 
\end{theorem}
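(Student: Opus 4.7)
The plan is to follow the standard route to Hanson-Wright-type moment inequalities: reduce to the zero-diagonal case, decouple, apply Khintchine, and then bootstrap via an induction on $k$ that invokes the theorem itself at a smaller moment against a new (PSD) quadratic form. The base case is $k=2$, where a direct computation gives $\|x^T A x - \tr(A)\|_2^2 = 2\|A_0\|_2^2 \le 2\|A\|_2^2$ with $A_0 := A - \operatorname{diag}(A)$.

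For the reduction, note that $x_i^2 = 1$ on $\bits^n$ implies $x^T A x - \tr(A) = x^T A_0 x$, and clearly $\|A_0\|_2 \le \|A\|_2$ while $\lambdamax{A_0} \le 2\lambdamax{A}$ (since $|A_{ii}|\le \lambdamax{A}$). Next invoke the standard Kwapie\'n / de la Pe\~na--Montgomery-Smith decoupling inequality for off-diagonal order-$2$ Rademacher chaos to obtain, for an absolute constant $C_1$,
\[
\E\bigl|x^T A_0 x\bigr|^k \;\le\; C_1^k\,\E\bigl|x^T A_0 x'\bigr|^k,
\]
where $x'\in\bits^n$ is an independent copy of $x$. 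Conditioning on $x'$, the inner expectation is the $k$-th moment of the Rademacher linear form $\langle x, A_0 x'\rangle$, which by Khintchine is at most $(C_2\sqrt k)^k\,\|A_0 x'\|_2^k = (C_2\sqrt k)^k ({x'}^T B x')^{k/2}$, where $B := A_0^T A_0 \succeq 0$.

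The self-bootstrap now comes from applying the theorem inductively to $B$ at moment $k/2$, using the three estimates $\tr(B) = \|A_0\|_2^2 \le \|A\|_2^2$, $\lambdamax{B} = \lambdamax{A_0}^2 \le 4\lambdamax{A}^2$, and (from the SVD of $A_0$) $\|B\|_2 \le \lambdamax{A_0}\,\|A_0\|_2 \le 2\lambdamax{A}\,\|A\|_2$. Write $\alpha := \|A\|_2$, $\beta := \sqrt k\,\lambdamax{A}$, and let $C$ denote the theorem's constant at moment $k/2$. Then by the triangle inequality in $L^{k/2}$,
\[
\|{x'}^T B x'\|_{k/2} \;\le\; \tr(B) + \|{x'}^T B x' - \tr(B)\|_{k/2} \;\le\; \alpha^2 + 4C\,\beta\,\max(\alpha,\beta).
\]
Taking a square root, and using $\sqrt{u+v}\le\sqrt u+\sqrt v$ together with the AM-GM bound $\sqrt{\alpha\beta}\le\max(\alpha,\beta)$, gives $\|{x'}^T B x'\|_{k/2}^{1/2} \le (1+2\sqrt C)\max(\alpha,\beta)$. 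Multiplying by the $C_1 C_2\sqrt k$ prefactor from decoupling plus Khintchine yields the desired bound with new constant $C_1 C_2(1+2\sqrt C)$. The induction closes at any absolute $C$ satisfying $C \ge C_1 C_2(1+2\sqrt C)$, which admits a solution of order $(C_1 C_2)^2$. Non-dyadic $k$ are handled by monotonicity of $L^p$-norms in $p$.

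The main obstacle is ensuring the inductive constant does not blow up: the unavoidable $\tr(B)$ contribution is of order $\|A\|_2^2$, and a naive estimate $\|B\|_2 \le \|A\|_2^2$ would force the bound to grow with $k$. The crucial ingredients that avert this are (i) the refined Frobenius estimate $\|B\|_2 \le \lambdamax{A_0}\,\|A_0\|_2$, which introduces the operator-norm factor the theorem demands, and (ii) the AM-GM step that absorbs the cross term $\sqrt k\,\lambdamax{A}\,\|A\|_2$ cleanly into $\max(\sqrt k\,\|A\|_2,\,k\,\lambdamax{A})$.
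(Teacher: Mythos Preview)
Your proof is correct and follows essentially the same route as the paper: reduce to a centered form, decouple, apply Khintchine to land on $\E[\|A_0 x'\|_2^k]=\E[({x'}^T A_0^2 x')^{k/2}]$, then bootstrap via the inductive hypothesis at moment $k/2$ using exactly the same key estimate $\|A_0^2\|_2\le\lambdamax{A_0}\|A_0\|_2$, handling non-dyadic $k$ by monotonicity. The only differences are cosmetic: the paper zeroes the trace via $A'=A-(\tr(A)/n)I$ rather than zeroing the diagonal, and in place of your black-box decoupling it gives a self-contained symmetrization (Jensen with an independent copy $y$ to pass to $(x+y)^TA(x-y)$, then randomize the signs of $x_i-y_i$), which is effectively a direct proof of the decoupling step.
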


Note if $\sum_{i\le j}a_{i,j}^2 \le 1$ then $\lambdamax{A_p} \le 1$,
in which case our
bound recovers a similar moment bound as the one obtained
via hypercontractivity. Thus, in the special case of bounding $k$th
moments of degree-$2$ polynomials against their $2$nd moment, our
bound can be viewed as a
generalization of the hypercontractive inequality (and of Whittle's
inequality).

We first give two lemmas. The first is implied by Khintchine's
inequality \cite{Haagerup82}, and the second is a discrete analog of
one of Whittle's lemmas.

\begin{lemma}\LemmaName{deg1mom}
For $a\in\R^n$, $x$ as above, and $k\geq 2$ an even integer,
$ \E[(a^Tx)^k] \leq \|a\|_2^k \cdot k^{k/2}.$
\end{lemma}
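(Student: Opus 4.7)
The plan is to expand $(a^T x)^k$ multinomially, use independence and parity to discard all odd-exponent terms, and then dominate the surviving (nonnegative) summands by the corresponding moment of a Gaussian with the same variance.

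First I would write
$$\E[(a^T x)^k] = \sum_{\substack{(k_1,\ldots,k_n) \in \N^n \\ k_1 + \cdots + k_n = k}} \binom{k}{k_1,\ldots,k_n} \prod_{i=1}^n a_i^{k_i}\,\E[x_i^{k_i}].$$
Since $x_i \in \{-1,1\}$, $\E[x_i^{k_i}]$ equals $1$ when $k_i$ is even and $0$ otherwise, so only multi-indices with every $k_i$ even contribute. In particular, each $a_i^{k_i}$ in the surviving terms is nonnegative, and hence the whole surviving sum is termwise nonnegative.

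Next I would compare with $a^T g$, where $g = (g_1,\ldots,g_n)$ has i.i.d.\ standard normal coordinates, so $a^T g \sim \mathcal{N}(0, \|a\|_2^2)$. The identical multinomial expansion gives $\E[(a^T g)^k]$, with the sole change that $\E[g_i^{k_i}] = (k_i-1)!! \geq 1$ replaces $\E[x_i^{k_i}] = 1$ whenever $k_i$ is a positive even integer. Because every surviving Rademacher summand is nonnegative, it is bounded termwise by the corresponding Gaussian summand, giving
$$\E[(a^T x)^k] \leq \E[(a^T g)^k] = (k-1)!!\,\|a\|_2^k.$$
The elementary bound $(k-1)!! \leq k^{k/2}$ then completes the proof.

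There is essentially no obstacle here: the only step that requires a moment's attention is the termwise domination, which crucially uses that after the parity restriction every summand carries the nonnegative factor $\prod_i a_i^{k_i}$; without this we could not absorb the extra double-factorial factors coming from the Gaussian moments. As a one-line alternative, one could simply invoke the Bonami-Beckner hypercontractive inequality specialized to the degree-$1$ polynomial $a^T x$, which yields $\|a^T x\|_k \leq (k-1)^{1/2}\|a\|_2$, and hence the lemma upon raising to the $k$th power.
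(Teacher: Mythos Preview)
Your proof is correct. The paper does not actually prove this lemma but simply states that it is implied by Khintchine's inequality, citing Haagerup. Your Gaussian-comparison argument is one of the standard elementary proofs of Khintchine's inequality for even moments (and in fact recovers the sharp constant $(k-1)!!^{1/k}$ before you relax it to $k^{1/2}$), so your approach is entirely in line with what the paper invokes, just made explicit rather than cited.
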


\begin{lemma}\LemmaName{difflem}
If $X,Y$ are independent with $\E[Y]=0$ and if
$k\geq 2$, then
$\E[|X|^k] \leq \E[|X-Y|^k]$.
\end{lemma}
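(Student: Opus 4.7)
The plan is to apply Jensen's inequality conditionally on $X$. The function $t\mapsto |t|^k$ is convex for any $k\geq 1$, so for any random variable $Z$ and sub-$\sigma$-algebra we have $|\E[Z \mid \mathcal{F}]|^k \le \E[|Z|^k \mid \mathcal{F}]$.

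The key computational step is to use independence of $X$ and $Y$ together with $\E[Y]=0$ to identify $X$ itself as a conditional expectation:
\[
\E[X - Y \mid X] \;=\; X - \E[Y \mid X] \;=\; X - \E[Y] \;=\; X.
\]
The first equality uses linearity, the second uses independence of $X$ and $Y$ (so conditioning on $X$ does not affect the distribution of $Y$), and the third uses the hypothesis $\E[Y]=0$.

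Plugging this into Jensen gives
\[
|X|^k \;=\; \bigl|\E[X-Y \mid X]\bigr|^k \;\le\; \E\bigl[|X-Y|^k \,\big|\, X\bigr].
\]
Taking an outer expectation and applying the tower property yields $\E[|X|^k] \le \E[|X-Y|^k]$, which is the claim.

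There is no real obstacle; the only subtle point is justifying $\E[Y\mid X] = \E[Y]$ from independence, which is a standard consequence of the definition of conditional expectation. The hypothesis $k\ge 2$ is not essential for the argument (any $k\ge 1$ suffices for convexity of $|\cdot|^k$), but it is the regime in which the lemma will be applied later.
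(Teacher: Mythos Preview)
Your proof is correct and rests on the same idea as the paper's: convexity of $t\mapsto|t|^k$. The paper phrases it via the first-order Taylor (tangent-line) bound $|X-Y|^k \ge |X|^k - kY\,\sgn(X)\,|X|^{k-1}$ and then takes expectations so the linear-in-$Y$ term vanishes by independence and $\E[Y]=0$; your conditional-Jensen formulation is an equivalent (and arguably cleaner) packaging of the same convexity fact.
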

\begin{proof}
Consider the function $f(y) = |X-y|^k$.  Since $f^{(2)}$, the second derivative of $f$, is nonnegative
on $\R$, the claim  follows by Taylor's theorem since
$|X-Y|^k \geq |X|^k-k Y (\sgn(X)\cdot X)^{k-1}$.
\end{proof}

We are now prepared to prove our \Theorem{eigenbound}.

\begin{proofof}{\Theorem{eigenbound}}
Without loss of generality we can assume $\tr(A) = 0$.  This is
because if one considers $A' = A - (\tr(A)/n)\cdot I$, then
$x^TAx - \tr(A) = x^TA'x$, and we have $\|A'\|_2 \le \|A\|_2$ and
$\lambdamax{A'} \le 2\lambdamax{A}$.
We now start by proving our theorem for $k$ a power of 2 by induction on
$k$.  For $k=2$, $\E[(x^TAx)^2] =
4\sum_{i<j}A_{i,j}^2$ and $\|A\|_2^2 = \sum_i A_{i,i}^2 +
2\sum_{i<j}A_{i,j}^2$.  Thus $\E[(x^TAx)^2] \le 2\|A\|_2^2$.
Next we assume the statement of our
Theorem for $k/2$ and attempt to prove it for $k$.

We note that by \Lemma{difflem},
$$
\E[|x^TAx|^k] \leq \E[|x^TAx-y^TAy|^k] = \E[|(x+y)^TA(x-y)|^k],
$$
where $y\in\bits^n$ is random and independent of $x$.
Notice that if we swap $x_i$ with $y_i$ then $x+y$ remains constant as
does $|x_j-y_j|$ and that $x_i-y_i$ is replaced by its negation.
Consider averaging over all such swaps.  Let $\xi_i = ((x+y)^TA)_i$ and
$\eta_i = x_i - y_i$.  Let $z_i$ be
$1$ if we did not swap and $-1$ if we did.  Then
 $(x+y)^TA(x-y) = \sum_i \xi_i\eta_iz_i$.  Averaging over all swaps,
$$
\E_z[|(x+y)^TA(x-y)|^k] \le \left(\sum_i
  \xi_i^2\eta_i^2\right)^{k/2}\cdot k^{k/2}\le 2^kk^{k/2}\cdot \left(\sum_i
  \xi_i^2\right)^{k/2}.
$$
The first inequality is by \Lemma{deg1mom}, and the second uses
that $|\eta_i| \leq 2$.
Note that
$$
\sum_i \xi_i^2 = \|A(x+y)\|_2^2 \leq 2\|Ax\|_2^2+ 2\|Ay\|_2^2,
$$
and hence
$$\E[|x^TAx|^k] \leq 2^k\sqrt{k}^k \E[(2\|Ax\|_2^2+2\|Ay\|_2^2)^{k/2}]
\leq
4^k\sqrt{k}^k\E[(\|Ax\|_2^2)^{k/2}] ,$$
with the final inequality using Minkowski's inequality (namely that
$|\E[|X+Y|^p]|^{1/p} \le |\E[|X|^p]|^{1/p} + |\E[|Y|^p]|^{1/p}$ for
any random variables $X,Y$ and any $1\le p < \infty$).

Next note $\|Ax\|_2^2 = \langle Ax, Ax \rangle = x^TA^2 x$.  Let $B=A^2 -
\frac{\tr(A^2)}{n}I$.  Then $\tr(B) = 0$.  Also, $\|B\|_2
\le \|A\|_2\lambdamax{A}$ and
$\lambdamax{B} \leq
\lambdamax{A}^2$. The former holds since
$$\|B\|_2^2 = \left(\sum_i \lambda_i^4\right)
- \left(\sum_i \lambda_i^2\right)^2\Big/n \le \sum_i\lambda_i^4 \le
\|A\|_2^2\lambdamax{A}^2 .$$
The latter holds since the eigenvalues of $B$ are $\lambda_i^2 -
(\sum_{j=1}^n \lambda_j^2)/n$ for each $i\in[n]$.  The largest
eigenvalue of $B$ is thus at most that of $A^2$, and since $\lambda_i^2
\ge 0$, the smallest eigenvalue of $B$ cannot be smaller than
$-\lambdamax{A}^2$.

  We then have
$$
\E[(\|Ax\|_2^2)^{k/2}] = \E\left[\left| \|A\|_2^2 +
    x^TBx\right|^{k/2}\right] \leq
2^k\max\{\|A\|_2^k,\E[|x^TBx|^{k/2}]\}
.$$

Hence employing the inductive hypothesis on $B$ we have that
\begin{eqnarray*}
\E[|x^TAx|^k] &\leq& 8^k
\max\{\sqrt{k}\|A\|_2,C^{k/2}k^{3/4}\|B\|_2,C^{k/2}k\sqrt{\lambdamax{B}}\}^k\\
&\leq& 8^k C^{k/2}
\max\{\sqrt{k}\|A\|_2,k^{3/4}\sqrt{\|A\|_2\lambdamax{A}},k\lambdamax{A}\}^k\\
&=& 8^k C^{k/2}
\max\{\sqrt{k}\|A\|_2,k\lambdamax{A}\}^k ,
\end{eqnarray*}
with the final equality holding since the middle term above is the
geometric mean of the other two, and thus is dominated by at least one
of them.
This proves our hypothesis as long as $C\geq 64$.

To prove our statement for general $k$, set $k' = 2^{\ceil{\log_2 k}}$.
Then by the
power mean inequality and our results for $k'$ a power of $2$,
$ \E[|x^TAx|^k] \leq
(\E[|x^TAx|^{k'}])^{k/k'} \le 
128^k\max\{\sqrt{k}\|A\|_2,k\lambdamax{A}\}^k$.
\end{proofof}

\section{Fooling regular degree-$2$ threshold
  functions}\SectionName{regular}

The main theorem of this section is the following.

\begin{theorem}\TheoremName{main-regular}
Let $0<\eps<1$ be given. Let $X_1,\ldots,X_n$ be independent Bernoulli
and $Y_1,\ldots,Y_n$ be $2k$-wise independent Bernoulli for $k$ a
sufficiently large multiple of $1/\eps^{8}$.
If $p$ is multilinear and of degree $2$ with $\sum_{|S|>0}
\widehat{p}_S^2 = 1$, and
$\Inf_i(p) \le \tau$ for all $i$, then
$$ \E[\sgn(p(X))] - \E[\sgn(p(Y))] = O(\eps
+ \tau^{1/9}) .$$
\end{theorem}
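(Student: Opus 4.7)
The plan is to follow the roadmap described in \Section{multivariate-ft}: replace $\sgn(p(x))$ by an indicator on a three-dimensional region, FT-mollify that indicator, and then carry out a three-step swapping argument using anti-concentration on the two indicator-versus-mollified sides and Taylor expansion on the two distribution-swap side. First I would reduce to $p$ a quadratic form (the constant and linear parts only shift and perturb things, and absorb into the analysis by the regularity assumption). Next, decompose the symmetric matrix $A_p$ as $A_{p_1}-A_{p_2}+A_{p_3}$ by taking the eigendecomposition and splitting eigenvalues by sign and by whether they exceed a threshold $\eta=\eta(\eps)$: the ``large positive'' eigenvalues build $p_1\succeq 0$, the ``large negative'' ones build $p_2\succeq 0$ (both with no nonzero eigenvalues smaller than $\eta$), and the remaining small-in-magnitude eigenvalues build $p_3$, which has $\lambdamax{A_{p_3}}\le\eta$. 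Setting $\eigensum=\tr(A_{p_3})$ and $F(x)=(\sqrt{p_1(x)},\sqrt{p_2(x)},p_3(x)-\eigensum)$, one verifies $I_{[0,\infty)}(p(x))=I_\reg(F(x))$ where $\reg=\{(u,v,w):u^2-v^2+w+\eigensum\ge 0\}\subset\R^3$.

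Let $\tilde{I}_\reg^c$ be the multivariate FT-mollification of $I_\reg$ with parameter $c$ to be tuned. The core is to bound
\begin{equation*}
\left|\E[I_\reg(F(X))]-\E[I_\reg(F(Y))]\right|\le \mathrm{(I)}+\mathrm{(II)}+\mathrm{(III)},
\end{equation*}
where (I) and (III) are the mollification errors $|\E[I_\reg(F(X))]-\E[\tilde{I}_\reg^c(F(X))]|$ and its $Y$ analogue, and (II) is the distribution-swap error $|\E[\tilde{I}_\reg^c(F(X))]-\E[\tilde{I}_\reg^c(F(Y))]|$. For (II) I would expand $\tilde{I}_\reg^c$ about $0$ using the multivariate Taylor theorem up to degree $k$ on each coordinate of $F$. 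Since each coordinate of $F$ is a polynomial of degree at most $2$ in $x$, the degree-$k$ Taylor approximant is itself a polynomial of degree $O(k)$ in $x$, so its expectation agrees exactly under $X$ and under $2k$-wise independent $Y$. The residual is controlled via \Theorem{approx-uniform}(i), which gives $\|\partial^\beta\tilde{I}_\reg^c\|_\infty\le(2c)^{|\beta|}$, combined with moment bounds on $\|F\|_2^{|\beta|}$: the moment $\E[p_i(x)^{k/2}]$ for the positive-semidefinite $p_1,p_2$ is controlled via hypercontractivity/Hanson--Wright style inequalities, while the moments of $p_3$ are handled by our sharper \Theorem{eigenbound} (which is crucial precisely because $\lambdamax{A_{p_3}}$ is small even though $\|A_{p_3}\|_2$ may be $\Theta(1)$).

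For (I) and (III), \Theorem{approx-indicator} shows $|I_\reg(v)-\tilde{I}_\reg^c(v)|=O((1/(c\cdot d_2(v,\partial\reg)))^2)\wedge 1$. The error integrates against the distribution of $F(X)$ (resp.\ $F(Y)$), so I need to control the probability that $F(x)$ lies within distance $\delta$ of $\partial\reg=\{u^2-v^2+w+\eigensum=0\}$. This is where Carbery--Wright anti-concentration \cite{CW01} for the quadratic $p$ itself comes in on the $X$-side: it gives $\Pr[|p(X)|\le t]=O(\sqrt{t})$, and a similar bound for the distance to $\partial\reg$ follows by translating the event $\{d_2(F(x),\partial\reg)\le\delta\}$ into a statement about the magnitude of $p(x)$, modulo bad events where $p_1(x)$ or $p_2(x)$ is huge (these are controlled by tail bounds for positive-semidefinite quadratic forms with no small nonzero eigenvalues, the reason the decomposition thresholded eigenvalues rather than just separating by sign). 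The harder side is (III): I must establish Carbery--Wright-type anti-concentration under only $2k$-wise independence. This is where we invoke the Invariance Principle \cite{MOO10} to pass from $Y$ to a Gaussian vector (the hypothesis $\Inf_i(p)\le\tau$ gives the $\tau^{1/9}$ error in the conclusion), then apply Carbery--Wright in the Gaussian regime.

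The main obstacle is making (II) actually work: bounding a Taylor remainder of degree $k\asymp 1/\eps^8$ requires matching $(2c)^k$-sized derivative bounds against $k$-th moments of a non-regular object (the coordinates of $F$ involve $\sqrt{p_i(x)}$, not $p_i(x)$ itself). I would handle the square roots by expanding $\sqrt{\|A_{p_i}\|_2^2+q_i(x)}$ about $\|A_{p_i}\|_2^2$ for a small perturbation $q_i$, again splitting into a ``typical'' event where $p_i(x)$ is close to its mean (so $\sqrt{\cdot}$ behaves analytically and has bounded derivatives) and a tail event absorbed into the $p_1,p_2$ tail bounds. Tuning $c$, the eigenvalue threshold $\eta$, and $k$ simultaneously so that (I), (II), (III) each contribute $O(\eps+\tau^{1/9})$ pins down $k=\tilde\Theta(1/\eps^8)$, matching the theorem.
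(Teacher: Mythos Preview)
Your overall roadmap matches the paper's, but two steps differ in ways that matter.

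First, for step (II), your proposed handling of the square roots $\sqrt{p_i(x)}$ via analytic expansion of $\sqrt{a+t}$ plus tail-event splitting is both harder than necessary and dubious to execute cleanly (the radius of convergence of that expansion is $a$, and the derivatives of $\sqrt{\cdot}$ blow up near zero, so the Taylor remainder will not cooperate). The paper's fix is much simpler: the region $\reg$ depends on the first two coordinates only through their squares, so $I_\reg$ is even in those coordinates. One symmetrizes $\tilde{I}_\reg^c$ by averaging over the four sign flips of $x_1,x_2$; this preserves all the derivative bounds from \Theorem{approx-uniform} and the pointwise approximation from \Theorem{approx-indicator}. The Taylor polynomial of the symmetrized function is then a polynomial in $x_1^2,x_2^2,x_3,x_4$, so after substituting $M_p(x)$ one obtains a genuine polynomial in $p_1(x),p_2(x),p_3(x)-\eigensum,p_4(x)$, each of degree at most $2$ in $x$. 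No square roots survive, and the degree bound needed for $2k$-wise independence is immediate. This evenness trick is the key idea you are missing.

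Second, in step (III) you propose to invoke the Invariance Principle to pass from $Y$ to Gaussians, but \Theorem{invariance} requires independent inputs; it does not apply to the merely $2k$-wise independent $Y$. The paper instead bootstraps: to prove anticoncentration of $p(Y)$ near $t$, it FT-mollifies the indicator of a thickened slab around $\{p(\cdot)=t\}$ (again via the map $M_p$ into low dimension), transfers the expectation of that mollified indicator from $X$ to $Y$ using the already-established Taylor/moment machinery of step (II), and bounds the $X$-side via the Invariance Principle plus Carbery--Wright applied to the fully independent $X$. The $\tau^{1/9}$ term enters only through the Invariance Principle on $X$, never on $Y$.
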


Throughout this section, $p$ always refers to the polynomial of
\Theorem{main-regular}, and $\tau$ refers to the maximum influence of
any variable in $p$. Observe $p$ (over the hypercube) can be
written as $q+p_4+C$, where $q$ is a multilinear quadratic form, $p_4$
is a linear form, and $C$ is a constant. Furthermore,
$\|A_q\|_2 \le 1/2$ and $\sum_S \widehat{p_4}_S^2 \le 1$.
Using the spectral theorem for real symmetric matrices,
 we
write $p = p_1 - p_2 +
p_3 + p_4 + C$ where $p_1,p_2,p_3$ are quadratic forms satisfying
$\lambdamin(A_{p_1}),\lambdamin(A_{p_2}) \ge \delta$,
$\lambdamax{A_{p_3}} < \delta$, and $\|A_{p_i}\|_2 \le 1/2$ for $1\le i\le
3$, and also with $p_1,p_2$ positive semidefinite  (see
\Lemma{decompose-quadratic} for details on how this is
accomplished). Here $\lambdamin(A)$ denotes the smallest magnitude of
a non-zero eigenvalue of $A$.
Throughout this section we let
$p_1,\ldots,p_4$, $C$, $\delta$ be as discussed here.
We use $\eigensum$ to denote $\tr(A_{p_3})$.
The value $\delta$ will be set later in the proof of
\Theorem{main-regular}.

Throughout this section it will be notationally convenient to
define the map
$M_p:\R^n\rightarrow \R^4$ by $M_p(x) = (\sqrt{p_1(x)}, \sqrt{p_2(x)},
p_3(x) - \eigensum, p_4(x))$. Note the the first two coordinates of
$M_p(x)$ are indeed always real since $p_1,p_2$ are positive
semidefinite.

Before giving the proof of \Theorem{main-regular}, we first prove
\Lemma{kwise-fools}, which states that for $F:\R^4\rightarrow\R$,
$F(M_p(x))$ is fooled by
bounded independence as long as $F$ is even in $x_1,x_2$ and certain
technical conditions are satisfied. The proof of \Lemma{kwise-fools}
invokes the following lemma, which follows from lemmas in the Appendix
(specifically, by combining \Lemma{small-constant} and
\Lemma{boundmoment}).

\begin{lemma}\LemmaName{boundmoment2}
For a quadratic form $f$ and random $x\in\bits^n$,
$$ \E[|f(x)|^k] \le 2^{O(k)}\cdot (\|A_f\|_2k^k +
(\|A_f\|_2^2/\lambdamin(A_f))^{k}) .$$
\end{lemma}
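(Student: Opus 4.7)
The plan is to reduce to \Theorem{eigenbound} by decomposing $f$ into its mean and its centered part. Since $x_i^2 = 1$ on $\bits^n$, for the quadratic form $f(x) = x^T A_f x$ we have $\E[f(x)] = \tr(A_f)$, so I would write
\begin{equation*}
 f(x) \;=\; \tr(A_f) \;+\; \bigl(x^T A_f x - \tr(A_f)\bigr),
\end{equation*}
and then use the elementary inequality $|a+b|^k \le 2^k(|a|^k + |b|^k)$ to split
\begin{equation*}
 \E[|f(x)|^k] \;\le\; 2^k \cdot |\tr(A_f)|^k \;+\; 2^k \cdot \E\bigl[|x^T A_f x - \tr(A_f)|^k\bigr].
\end{equation*}
These two terms correspond exactly to the two lemmas being combined.

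For the centered moment, the plan is to invoke \Lemma{boundmoment}, which I expect to essentially restate \Theorem{eigenbound} in the form
\begin{equation*}
 \E\bigl[|x^T A_f x - \tr(A_f)|^k\bigr] \;\le\; C^k \cdot \max\bigl\{\sqrt{k}\|A_f\|_2,\; k\lambdamax{A_f}\bigr\}^k.
\end{equation*}
Since $\lambdamax{A_f} \le \|A_f\|_2$ always, this is at most $(Ck\|A_f\|_2)^k$, which accounts for the $\|A_f\|_2 k^k$ term in the claimed bound (up to the overall $2^{O(k)}$ factor).

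For the trace term, I would appeal to \Lemma{small-constant}, which should establish the bound
\begin{equation*}
 |\tr(A_f)| \;\le\; \frac{\|A_f\|_2^2}{\lambdamin(A_f)}.
\end{equation*}
This is the natural statement to expect: letting $\lambda_1,\ldots,\lambda_r$ be the nonzero eigenvalues of $A_f$, each satisfies $|\lambda_i| \ge \lambdamin(A_f)$, so $|\lambda_i| \le \lambda_i^2/\lambdamin(A_f)$ and hence $|\tr(A_f)| \le \sum_i |\lambda_i| \le (\sum_i \lambda_i^2)/\lambdamin(A_f) = \|A_f\|_2^2/\lambdamin(A_f)$. Raising to the $k$-th power gives the second term in the claimed bound, and combining the two estimates above completes the proof.

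There is essentially no main obstacle here: the lemma is a direct triangle-inequality combination of two pre-existing facts, and the only subtlety is noticing that on $\bits^n$ the expected value of the quadratic form is exactly $\tr(A_f)$, which makes \Theorem{eigenbound} directly applicable after centering.
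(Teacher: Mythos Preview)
Your argument is correct and follows the same outline as the paper: split off $\tr(A_f)$, bound the moments of the centered part, and control $|\tr(A_f)|$ via \Lemma{small-constant}. One clarification on labeling: \Lemma{boundmoment} in the paper is not \Theorem{eigenbound}; it is already the packaged statement $\E[|f(X)|^k] \le 2^{O(k)}(\|A_f\|_2 k^k + |\tr(A_f)|^k)$, proved by writing $f = q + \tr(A_f)$ with $q$ multilinear and applying hypercontractivity (\Theorem{bonami-beckner}) to $q$. The paper's derivation of \Lemma{boundmoment2} is therefore just a one-line substitution of \Lemma{small-constant} into \Lemma{boundmoment}, whereas you effectively reprove the content of \Lemma{boundmoment} using \Theorem{eigenbound} in place of hypercontractivity for the centered moments---both routes give the same bound.
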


\begin{lemma}\LemmaName{kwise-fools}
Let $\eps>0$ be arbitrary.
Let $F:\R^4\rightarrow\R$ be even in each of its first
two arguments such that
$\|\partial^{\beta} \tilde{F}^c\|_{\infty} = O(\alpha^{|\beta|})$
for all multi-indices $\beta\in\N^4$ and some $\alpha>1$. Suppose
$1/\delta \ge B\alpha$ for a
sufficiently large constant $B$.
Let
$X_1,\ldots,X_n$ be independent Bernoulli, and
$Y_1,\ldots,Y_n$ be $k'$-independent Bernoulli for $k' = 2k$ with $k \ge
\max\{\log(1/\eps),B\alpha/\sqrt{\delta},B\alpha^2\}$ an even integer.
Write $X=(X_1,\ldots,X_n)$ and $Y=(Y_1,\ldots,Y_n)$. Then
$ |\E[F(M_p(X))] -
\E[F(M_p(Y))]| < \eps
$.
\end{lemma}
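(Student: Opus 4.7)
The plan is to Taylor-expand $\tilde F^c$ around the origin and exploit the evenness of $F$ (hence of $\tilde F^c$) in its first two arguments to turn the resulting Taylor polynomial, evaluated at $M_p(x)$, into a genuine polynomial in $x$ of degree at most $k' = 2k$, which is preserved verbatim by $k'$-wise independence. The Taylor remainder will be controlled using the spectral moment bounds of \Section{spectral-moment} together with the structural properties of the decomposition $p = p_1 - p_2 + p_3 + p_4 + C$.

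First, write
\[
\tilde F^c(y) = \sum_{|\beta| < k} \frac{\partial^\beta \tilde F^c(0)}{\beta!}\, y^\beta + R(y).
\]
Since $F$ is even in $y_1, y_2$ and $B_c$ is coordinate-wise even (as $B = \hat b^2$ with $b$ radial), $\tilde F^c$ is even in $y_1, y_2$, so every surviving multi-index $\beta$ has $\beta_1, \beta_2$ even. Substituting $y = M_p(x)$, each such monomial becomes $p_1(x)^{\beta_1/2} p_2(x)^{\beta_2/2}(p_3(x) - \eigensum)^{\beta_3} p_4(x)^{\beta_4}$, a polynomial in $x$ of degree $\beta_1 + \beta_2 + 2\beta_3 + \beta_4 \le 2|\beta| < k'$, hence with matching expectation under $X$ and $Y$. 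For the remainder I would use the derivative hypothesis and the multinomial theorem:
\[
|R(y)| \le O(\alpha^k)\cdot\frac{(|y_1|+|y_2|+|y_3|+|y_4|)^k}{k!} \le O\!\left(\tfrac{(4\alpha)^k}{k!}\right)\sum_{j=1}^4 |y_j|^k.
\]
For $k$ even, each $\E[|M_p(x)_j|^k]$ is itself a polynomial expectation of degree at most $k'$ in $x$ (using $p_1, p_2 \ge 0$, so $|\sqrt{p_i(x)}|^k = p_i(x)^{k/2}$), and therefore agrees under $X$ and $Y$. \Lemma{boundmoment2} with $\lambdamin(A_{p_i}) \ge \delta$ and $\|A_{p_i}\|_2 \le 1/2$ gives $\E[p_i(x)^{k/2}] \le (Ck)^{k/2} + (C/\delta)^{k/2}$ for $i\in\{1,2\}$; \Theorem{eigenbound} with $\lambdamax{A_{p_3}} < \delta$ gives $\E[|p_3(x) - \eigensum|^k] \le (C\max\{\sqrt{k}, k\delta\})^k$, which equals $(Ck)^{k/2}$ in the regime $\delta \le 1/(B\alpha)$, $k \ge B\alpha^2$; and \Lemma{deg1mom} yields $\E[|p_4(x)|^k] \le (Ck)^{k/2}$. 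Stirling then bounds the remainder expectation by $\bigl(C\alpha\max\{\sqrt{k}, 1/\sqrt{\delta}\}/k\bigr)^k$, which drops below $\eps$ precisely under the stated hypotheses $k \ge B\alpha^2$, $k \ge B\alpha/\sqrt{\delta}$, and $k \ge \log(1/\eps)$. The passage from $F$ to $\tilde F^c$ at the two endpoints is absorbed by \Theorem{approx-uniform} together with anti-concentration of the $p_i$ again derived from \Theorem{eigenbound}.

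The main obstacle is obtaining the correct $\delta$-dependence in the moments of $\sqrt{p_1(x)}$ and $\sqrt{p_2(x)}$. Without the spectral lower bound $\lambdamin(A_{p_i}) \ge \delta$, the trace $\tr(A_{p_i})$ could be as large as $\Theta(\sqrt n)$ and would dominate $\E[p_i(x)^{k/2}]$, ruining the bound. The lower bound forces $\mathrm{rank}(A_{p_i}) \le \|A_{p_i}\|_2^2/\delta^2 = O(1/\delta^2)$, so $|\tr(A_{p_i})| \le \sqrt{\mathrm{rank}(A_{p_i})}\cdot\|A_{p_i}\|_2 = O(1/\delta)$ by Cauchy--Schwarz; this is exactly what produces the $(1/\sqrt{\delta})^{k/2}$ factor in $\E[p_i(x)^{k/2}]$ and explains why $p_1, p_2$ are engineered in \Lemma{decompose-quadratic} to be positive semidefinite with no small non-zero eigenvalues.
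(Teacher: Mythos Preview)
Your core argument is exactly the paper's: Taylor-expand the smooth function to degree $k-1$, use evenness in the first two coordinates so that the Taylor polynomial evaluated at $M_p(x)$ is a genuine polynomial in $x$ of degree $<2k$ (hence determined by $k'$-wise independence), and bound the remainder via \Lemma{boundmoment2} for $p_1,p_2$, \Theorem{eigenbound} for $p_3$, and \Lemma{deg1mom} for $p_4$. The paper arrives at the same final error $2^{O(k)}\bigl((\alpha/\sqrt k)^k + (\alpha/(k\sqrt\delta))^k + (\alpha\delta)^k\bigr)$.

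However, you apply this argument to $\tilde F^c$ rather than to $F$, and then append a ``passage from $F$ to $\tilde F^c$ at the two endpoints.'' The paper does not do this; it Taylor-expands $F$ directly. The appearance of $\tilde F^c$ in the hypothesis is a typo for $F$---in every application of the lemma, $F$ is itself already an FT-mollified indicator carrying the stated derivative bounds, and no $c$ is ever specified in the lemma. Your passage step is both unnecessary and problematic: it would be circular, since the anti-concentration under bounded independence you invoke is \Lemma{bounded-anticoncentration}, whose proof \emph{uses} the present lemma; and \Theorem{approx-uniform} requires $F$ bounded, which is not assumed here. Simply Taylor-expand $F$ and drop that last sentence.

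A minor point: you assert that $\max\{\sqrt k,\,k\delta\}^k$ collapses to $(Ck)^{k/2}$ under the hypotheses, but those hypotheses give only lower bounds on $k$, not the upper bound $k\le 1/\delta^2$ that $\sqrt k\ge k\delta$ would require. This is harmless---the paper simply retains the extra $(\alpha\delta)^k$ term, which is at most $(C/B)^k<\eps$ by $1/\delta\ge B\alpha$ and $k\ge\log(1/\eps)$.
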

\begin{proof}
We Taylor-expand $F$ to obtain a polynomial $P_{k-1}$ containing
all monomials up to degree $k-1$. Since $F(x)$ is even in $x_1,x_2$, we
can assume $P_{k-1}$ is a polynomial in $x_1^2,x_2^2,x_3,x_4$.
Let $x\in\R^4$ be arbitrary. We apply Taylor's theorem to bound
$R(x) = |F(x) - P_{k-1}(x)|$. Define $x_*
= \max_i\{|x_i|\}$. Then
\begin{eqnarray}
\nonumber R(x) &\le& \alpha^k\cdot \sum_{|\beta|=k}
\frac{|x_1|^{\beta_1} \cdot |x_2|^{\beta_2} \cdot
  |x_3|^{\beta_3}\cdot |x_4|^{\beta_4}}{\beta_1!\cdot \beta_2!\cdot \beta_3!\cdot \beta_4!}\\
\nonumber &\le&  \alpha^kx_*^k\cdot \sum_{|\beta|=k}
\frac{1}{\beta_1!\cdot \beta_2!\cdot \beta_3!\cdot \beta_4!} \\
\nonumber &=&  \alpha^kx_*^k\cdot\frac {1}{k!}\cdot \sum_{|\beta|=k}
\binom{k}{\beta_1,\ldots,\beta_4} \\
&\le& \alpha^k4^k\cdot\frac{x_1^k+x_2^k+x_3^k+x_4^k}{k!} \EquationName{taylor-process},
\end{eqnarray}
with the absolute values unnecessary in the last inequality since
$k$ is even.
We now observe
\begin{align*}
&|\E[F(M_p(X))] - \E[F(M_p(Y))]|\\
&\hspace{1in}\le
\alpha^k2^{O(k)}\cdot\frac{\E[(p_1(X))^{k/2}]
  +\E[(p_2(X))^{k/2}]+\E[(p_3(X)-\eigensum)^k] +
  \E[(p_4(X))^k]}{k^k}
\end{align*}
since (a) every term in
$P_{k-1}(M_p(X))$
is a monomial of degree at most $2k-2$ in the $X_i$, by evenness of
$P_{k-1}$ in $x_1,x_2$, and is thus determined by $2k$-independence,
(b)
$\sqrt{p_1(X)},\sqrt{p_2(X)}$ are real by positive
semidefiniteness of $p_1,p_2$ (note that we are only given that the
high order partial derivatives are bounded by  $O(\alpha^k)$ on the
reals; we have no guarantees for complex arguments), and (c) the
moment expectations above are equal for $X$ and $Y$ since they are
determined by $2k$-independence.

We now bound the error term above.  We have
$$ \E[(p_1(X))^{k/2}] = 2^{O(k)}(k^{k/2} +
\delta^{-k/2})$$
by \Lemma{boundmoment2}, with the same bound holding for
$\E[(p_2(X))^{k/2}]$. We also have
$$\E[(p_3(X) - \eigensum)^k] \le 2^{O(k)}\cdot
\max\{\sqrt{k},(\delta k)\}^k
$$
by \Theorem{eigenbound}. We finally have 
$$\E[(p_4(X))^k] \le k^{k/2}$$
by \Lemma{deg1mom}.  
Thus in total,
$$ |\E[F(M_p(X))] - \E[F(M_p(Y))]| \le 2^{O(k)}\cdot
((\alpha/\sqrt{k})^{k} + (\alpha/(k\sqrt{\delta}))^k +
(\alpha\delta)^k ) ,$$
which is at most $\eps$ for sufficiently large $B$ by our lower bounds
on $k$ and $1/\delta$.
\end{proof}

In proving \Theorem{main-regular}, we will need a lemma which states
that $p$ is anticoncentrated even when evaluated on
Bernoulli random variables which are $k$-wise independent.  To show this,
we make use of the following lemma, which
follows from the Invariance Principle, the
hypercontractive inequality, and the anticoncentration bound of
\cite{CW01}. The proof is in \Section{regular-proofs}.

\begin{lemma}\LemmaName{boundbad}
Let $\eta,\eta'\ge 0,t\in\R$ be given, and let $X_1,\ldots,X_n$ be independent
Bernoulli. Then
$$ \Pr[|p(X) - t| \le \eta\cdot(\sqrt{p_1(X)} + \sqrt{p_2(X)}
+ 1) + \eta'] = O(\sqrt{\eta'} + (\eta^2/\delta)^{1/4} + \tau^{1/9} +
\exp(-\Omega(1/\delta))) .$$
\end{lemma}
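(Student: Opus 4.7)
The plan is to split the event $\{|p(X) - t| \le \eta(\sqrt{p_1(X)} + \sqrt{p_2(X)} + 1) + \eta'\}$ according to whether $\sqrt{p_1(X)} + \sqrt{p_2(X)}$ is large or small, control the large-value tail via a Hanson--Wright-style concentration bound (which in our setting follows from the moment bound of \Theorem{eigenbound}), and then carry out an anti-concentration argument on the remaining event by first passing to Gaussians via the invariance principle of \cite{MOO10} and then invoking the Carbery--Wright inequality \cite{CW01}.

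First I would fix a threshold $M = C_0 / \sqrt{\delta}$ for a sufficiently large absolute constant $C_0$ and define the bad event $E = \{\sqrt{p_1(X)} + \sqrt{p_2(X)} + 1 > M\}$. Because $p_1, p_2$ are positive semidefinite with $\|A_{p_i}\|_2 \le 1/2$ and smallest nonzero eigenvalue at least $\delta$, the rank of $A_{p_i}$ is at most $O(1/\delta^2)$, and consequently $\E[p_i(X)] = \tr(A_{p_i}) \le \sqrt{\mathrm{rank}(A_{p_i})} \cdot \|A_{p_i}\|_2 = O(1/\delta)$ by Cauchy--Schwarz on the eigenvalues. Applying \Theorem{eigenbound} with $k = \Theta(1/\delta)$ followed by Markov's inequality then yields the sub-exponential tail
$$\Pr[p_i(X) > M^2/4] \le \exp(-\Omega(M^2)) = \exp(-\Omega(1/\delta)),$$
and a union bound gives $\Pr[E] \le \exp(-\Omega(1/\delta))$.

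On the complementary event $E^c$ the window is at most $\mu := \eta M + \eta' = O(\eta/\sqrt{\delta}) + \eta'$, so it suffices to bound $\Pr[|p(X) - t| \le \mu]$. Since $p$ is a multilinear degree-$2$ polynomial of unit variance with $\max_i \Inf_i(p) \le \tau$, the invariance principle of \cite{MOO10} (in its Kolmogorov form for degree-$2$ forms, which is already obtained by combining a smooth invariance estimate with Carbery--Wright anti-concentration on the Gaussian side) yields
$$\Pr[|p(X) - t| \le \mu] \le \Pr[|p(G) - t| \le \mu] + O(\tau^{1/9}),$$
where $G \sim \mathcal{N}(0,1)^n$. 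Applying \cite{CW01} to the degree-$2$ polynomial $p - t$, which has $\|p - t\|_2 \ge \sqrt{\Var(p)} = 1$, gives $\Pr[|p(G) - t| \le \mu] = O(\sqrt{\mu}) = O((\eta^2/\delta)^{1/4} + \sqrt{\eta'})$. Summing the three error contributions recovers the claimed bound.

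The step I expect to require the most care is the tail estimate in the first paragraph: obtaining $\exp(-\Omega(1/\delta))$ rather than a weaker polynomial-in-$\delta$ decay forces one to exploit simultaneously the positive semidefiniteness (to bound $\E[p_i(X)]$ through the rank), the Frobenius bound $\|A_{p_i}\|_2 \le 1/2$, and the no-small-eigenvalues assumption; moreover, one must calibrate the threshold $M = C_0 / \sqrt{\delta}$ so that both the tail $\exp(-\Omega(M^2))$ and the window $\eta M$ have the magnitudes dictated by the target bound.
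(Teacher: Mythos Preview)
Your proposal is correct and follows essentially the same two-step structure as the paper's proof: first bound the event that $\sqrt{p_1(X)}$ or $\sqrt{p_2(X)}$ exceeds $\Theta(1/\sqrt{\delta})$ by $\exp(-\Omega(1/\delta))$, then on the complement invoke the Invariance Principle (giving the $O(\tau^{1/9})$ error for $d=2$) together with Carbery--Wright to bound $\Pr[|p(X)-t|\le O(\eta/\sqrt{\delta})+\eta']$ by $O((\eta^2/\delta)^{1/4}+\sqrt{\eta'})$.

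The one notable difference is in how you obtain the tail bound on $p_1,p_2$. The paper packages this as \Corollary{eigentail}, which follows directly from the hypercontractivity-based tail bound (\Theorem{tailbound}) applied to the multilinear part $p_i-\tr(A_{p_i})$, together with the trace bound $|\tr(A_{p_i})|\le \|A_{p_i}\|_2^2/\lambdamin(A_{p_i})$ (\Lemma{small-constant}). You instead route through \Theorem{eigenbound}, which also works but is slightly roundabout here: since $\lambdamax{A_{p_i}}\le\|A_{p_i}\|_2\le 1/2$, \Theorem{eigenbound} gives only $\E[|p_i(X)-\tr(A_{p_i})|^k]\le (Ck/2)^k$, the same order as hypercontractivity, and one must choose $k=\Theta(1/\delta)$ and $C_0$ carefully so that $Ck/(2T)<1$ before Markov yields $\exp(-\Omega(1/\delta))$. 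Your phrasing ``$\exp(-\Omega(M^2))$'' is thus mildly misleading (it is not a genuine sub-Gaussian tail in $M$; rather, $k$ and $M^2$ are both $\Theta(1/\delta)$ by design), but the conclusion is correct. Your Cauchy--Schwarz bound on $\tr(A_{p_i})$ via the rank is exactly \Lemma{small-constant}.
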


We now prove our anticoncentration lemma in the case of limited
independence.

\begin{lemma}\LemmaName{bounded-anticoncentration}
Let $\eps'$ be given.
Suppose $k \ge D/(\eps')^{4}$ for a sufficiently large
constant $D>0$.
Let $Y_1,\ldots,Y_n$ be $k$-wise independent Bernoulli, and let $t\in\R$ be
arbitrary.
Then
$$ \Pr[|p(Y)-t| < \eps'] \le O(\sqrt{\eps'} + \tau^{1/9}) .$$
\end{lemma}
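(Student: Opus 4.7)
The plan is to import the fully-independent anticoncentration bound of \Lemma{boundbad} (applied with $\eta=0$ and $\eta'=O(\eps')$) into the $k$-wise independent setting via multivariate FT-mollification, using \Lemma{kwise-fools} as the derandomization engine. First I define the $4$-dimensional slab
\[ R_\lambda \eqdef \{(u,v,w,z)\in\R^4 : |u^2 - v^2 + w + \eigensum + z + C - t| < \lambda\} \]
so that $I_{R_\lambda}(M_p(x)) = \mathbf{1}\{|p(x)-t|<\lambda\}$ for every $x \in \R^n$. Because $R_\lambda$ depends on its first two coordinates only through $u^2$ and $v^2$, both $I_{R_\lambda}$ and its FT-mollification $\tilde I_{R_\lambda}^c$ are even in those arguments; this is exactly the hypothesis \Lemma{kwise-fools} requires.

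Next I FT-mollify $I_{R_{2\eps'}}$ with a parameter $c$ to be fixed later. The key geometric estimate is that $g(u,v,w,z)=u^2-v^2+w+z$ has gradient of norm $O(\sqrt{u^2+v^2+1})$, so $g$ has Lipschitz constant $O(\sqrt{p_1(x)+p_2(x)+1})$ near $M_p(x)$. Consequently: (i) if $|p(x)-t|<\eps'$ then $d_2(M_p(x),\partial R_{2\eps'}) = \Omega(\eps'/\sqrt{p_1(x)+p_2(x)+1})$; (ii) if $|p(x)-t|>3\eps'$ then the distance from $M_p(x)$ to $R_{2\eps'}$ enjoys the same lower bound. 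By \Theorem{approx-indicator}, (i) yields $\tilde I_{R_{2\eps'}}^c(M_p(x)) \geq I_{R_{\eps'}}(M_p(x)) - O((p_1(x)+p_2(x)+1)/(c\eps')^2)$ when $|p(x)-t|<\eps'$, and (ii) yields $\tilde I_{R_{2\eps'}}^c(M_p(x)) = O((p_1(x)+p_2(x)+1)/(c\eps')^2)$ when $|p(x)-t|>3\eps'$.

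Taking $\E_Y$ of the first inequality, applying \Lemma{kwise-fools} to transfer the expectation to $X$, splitting $\E_X$ according to whether $|p(X)-t|<3\eps'$, invoking \Lemma{boundbad} with $\eta=0,\eta'=3\eps'$ on $\{|p(X)-t|<3\eps'\}$, and using (ii) on its complement, one should obtain
\[ \Pr_Y[|p(Y)-t|<\eps'] \leq \eps + O\!\left(\sqrt{\eps'} + \tau^{1/9} + e^{-\Omega(1/\delta)} + \frac{1}{\delta(c\eps')^2}\right) .\]
Here the last error term uses that $k \geq 2$ implies $\E_Y[p_1+p_2] = \E_X[p_1+p_2] = \tr(A_{p_1})+\tr(A_{p_2}) \leq 1/(2\delta)$, which follows from Cauchy-Schwarz applied to the at most $\|A_{p_i}\|_2^2/\delta^2 \leq 1/(4\delta^2)$ nonzero eigenvalues of each $A_{p_i}$.

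The main obstacle is balancing the parameters against the constraints of \Lemma{kwise-fools} (namely $1/\delta \geq B\alpha = 2Bc$, $k \geq B\alpha^2$, and $k \geq B\alpha/\sqrt\delta$) while simultaneously driving the mollification error $1/(\delta(c\eps')^2)$ down to $O(\sqrt{\eps'})$. Setting $\eps = \sqrt{\eps'}$ and choosing $\delta, c$ as suitable inverse polynomials in $\eps'$ (for instance $\delta = \Theta(\eps'^{5/2})$ and $c = \Theta(\eps'^{-5/2})$) forces every term above to $O(\sqrt{\eps'}+\tau^{1/9})$ as soon as $k$ is a sufficiently large polynomial in $1/\eps'$, which gives the claimed hypothesis.
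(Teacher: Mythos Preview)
Your proposal follows the same high-level route as the paper (define a region in $\R^4$, FT-mollify its indicator, transfer via \Lemma{kwise-fools}, appeal to \Lemma{boundbad} on the fully independent side), but it diverges at one decisive point: you enlarge the slab in the \emph{value} of $g$ (passing from $R_{\eps'}$ to $R_{2\eps'}$), whereas the paper enlarges $\smallball_{t,\eps'}$ by a fixed \emph{Euclidean} radius $\rho$ to form $\largerball_{\rho,t,\eps'}=\{x:d_2(x,\smallball_{t,\eps'})\le\rho\}$ and mollifies that with $c=A/\rho$. The payoff is a clean pointwise sandwich $\tilde I_{\largerball}^c\ge \tfrac12 I_{\smallball}$ with \emph{no} additive error, because every point of $\smallball$ sits at distance $\ge\rho$ from $\partial\largerball$. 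In your setup, by contrast, the distance from $M_p(x)$ to $\partial R_{2\eps'}$ is only $\Omega(\eps'/(\sqrt{p_1(x)}+\sqrt{p_2(x)}+1))$, which is typically $\Theta(\eps'\sqrt{\delta})$ rather than a fixed $\rho$; this is exactly the source of your residual term $O((p_1+p_2+1)/(c\eps')^2)$ with expectation $O(1/(\delta(c\eps')^2))$.

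That residual term knocks your parameters off the stated hypothesis. Under the \Lemma{kwise-fools} constraint $1/\delta\ge 2Bc$ one has $1/(\delta(c\eps')^2)=\Theta(1/(c(\eps')^2))$, so forcing it down to $O(\sqrt{\eps'})$ requires $c=\Omega((\eps')^{-5/2})$ and hence $k\ge\Omega(c^2)=\Omega((\eps')^{-5})$---precisely what your sample choice $\delta=\Theta((\eps')^{5/2}),\ c=\Theta((\eps')^{-5/2})$ gives. This does not meet the lemma's hypothesis $k\ge D/(\eps')^4$. The paper recovers the exponent~$4$ through two ingredients you omit: (i) the $L_2$-enlargement trick above, eliminating the $(p_1+p_2)$-dependent error in the sandwich; and (ii) a dyadic decomposition of $\E_X[\tilde I^c_{\largerball}(M_p(X))]$ over shells $\{d_2(M_p(X),\smallball_{t,\eps'})\in(2^s\rho,2^{s+1}\rho]\}$, pairing the $2^{-2s}$ decay from \Theorem{approx-indicator} with \Lemma{boundbad} applied at scale $w=2^{s+1}\rho$ (this is why that lemma carries the $\eta(\sqrt{p_1}+\sqrt{p_2}+1)$ term). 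With $\rho=(\eps')^2$ and $c=A/\rho$, the binding constraint becomes $k\ge\Omega(c^2)=\Omega((\eps')^{-4})$.
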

\begin{proof}
Define the region $\smallball_{t,\eps'} = \{(x_1,x_2,x_3,x_4) : |x_1^2
- x_2^2 + x_3 + x_4 + C + \eigensum - t|
< \eps'\}$, and also the region $\largerball_{\rho,t,\eps'} = \{x :
d_2(x, \smallball_{t,\eps'}) \le \rho\}$ for $\rho\ge 0$.
Consider the
 FT-mollification $\tilde{I}^c_{\largerball_{\rho,t,\eps'}}$ of
 $I_{\largerball_{\rho,t,\eps'}}$ for $c = A/\rho$, with $A$ a 
 large constant to be determined later. We note a few properties of
 $\tilde{I}^c_{\largerball_{\rho,t,\eps'}}$:
\begin{itemize}
\item[i.] $\|\partial^\beta
  \tilde{I}^c_{\largerball_{\rho,t,\eps'}}\|_\infty \le (2c)^{|\beta|}$
\item[ii.] $ \tilde{I}^c_{\largerball_{\rho,t,\eps'}}(x) \ge \frac 12
  \cdot
  I_{\smallball_{t,\eps'}}(x)$
\item[iii.] $\tilde{I}^c_{\largerball_{\rho,t,\eps'}}(x) =
  \max\left\{1,O\left((c\cdot
    d_2(x,\smallball_{t,\eps'}))^{-2}\right)\right\}$ for any $x$ with
$d_2(x,\smallball_{t,\eps'}) \ge 2\rho$
\end{itemize}

Item (i) is straightforward from \Theorem{approx-uniform}. For item
(ii), note that if $x\in\smallball_{t,\eps'}$, then $d_2(x, \partial
\largerball_{\rho,t,\eps'}) \ge \rho$, implying
$$|\tilde{I}^c_{\largerball_{\rho,t,\eps'}}(x) - 1| =
O\left(\frac{1}{c^2\rho^2}\right), $$
which is at most $1/2$ for $A$ a sufficiently large
constant. Furthermore, $\tilde{I}^c_{\largerball_{\rho,t,\eps'}}$ is
nonnegative. Finally,
for (iii), by \Theorem{approx-indicator} we have
\begin{eqnarray*}
 \tilde{I}^c_{\largerball_{\rho,t,\eps'}}(x) &=&
  \max\left\{1,O\left((c\cdot
    d_2(x,\partial\largerball_{\rho,t,\eps'}))^{-2}\right)\right\} \\
&\le& \max\left\{1,O\left((c\cdot
    d_2(x,\largerball_{\rho,t,\eps'}))^{-2}\right)\right\} \\
&\le& \max\left\{1,O\left((c\cdot
    (d_2(x,\smallball_{t,\eps'})-\rho))^{-2}\right)\right\} \\
&\le& \max\left\{1,O\left((c\cdot
    d_2(x,\smallball_{t,\eps'}))^{-2}\right)\right\}
\end{eqnarray*}
with the last inequality using that $d_2(x,\smallball_{t,\eps'}) \ge
2\rho$.

Noting $\Pr[|p(Z)-t| < \eps'] = \E[I_{\smallball_{t,\eps'}}(M_p(Z))]$
 for any random variable $Z = (Z_1,\ldots,Z_n)$, item (ii) tells us that
\begin{equation}
 \Pr[|p(Z)-t|\le \eps']\le 2\cdot
\E[\tilde{I}^c_{\largerball_{\rho,t,\eps'}}(M_p(Z))] .\EquationName{crucial}
\end{equation}
We now proceed in two steps.  We first show
$\E[\tilde{I}^c_{\largerball_{\rho,t,\eps'}}(M_p(X))] =
O(\sqrt{\eps'} + \tau^{1/9})$ by applications of \Lemma{boundbad}. We
then show 
$\E[\tilde{I}^c_{\largerball_{\rho,t,\eps'}}(M_p(Y))] =
O(\sqrt{\eps'} + \tau^{1/9})$ by applying \Lemma{kwise-fools}, at
which point we will have proven our lemma via \Equation{crucial} with
$Z=Y$.

\bigskip

\noindent $\mathbf{\E[\tilde{I}^c_{\largerball_{\rho,t,\eps'}}(M_p(X))] =
O(\sqrt{\eps'} + \tau^{1/9})}$: We first observe that for $x\notin\smallball_{t,\eps'}$,
\begin{equation}
d_2(x,\smallball_{t,\eps'}) \ge \frac 12\cdot \min\left\{\frac{|x_1^2-x_2^2+x_3+x_4+C+\eigensum
  - t| - \eps'}{2(|x_1|+|x_2|+1)}, \sqrt{|x_1^2-x_2^2+x_3+x_4+C+\eigensum
  - t| - \eps'}\right\} .\EquationName{change-stuff}
\end{equation}
This is because by adding a vector $v$ to $x$, we
can
change each individual coordinate of $x$ by at most $\|v\|_2$, and can
thus change the value of $|x_1^2-x_2^2+x_3+x_4+C+\eigensum-t|-\eps'$
by at
most $2\|v\|_2\cdot (|x_1|+|x_2|+1) + \|v\|_2^2$.

Now let $X\in\bits^n$ be uniformly random.
We thus have that, for any particular $w>0$,
\begin{align*}
\Pr[0<d_2(M_p(X), \smallball_{t,\eps'}) \le
w] \le &\ \Pr\left[\min\left\{\frac{|p(X) -
    t| - \eps'}{2(\sqrt{p_1(X)}+\sqrt{p_2(X)}+1)},
  \sqrt{|p(X)-t|-\eps'}\right\} \le 2w\right]\\
\le &\ \Pr[|p(X) -
    t| \le
4w\cdot (\sqrt{p_1}(X)+\sqrt{p_2(X)}+1) + \eps']\\
&{} + \Pr[|p(X)-t| \le
4w^2 + \eps']\\
= &\ O(\sqrt{\eps'} + w + \sqrt{w} + (w^2/\delta)^{1/4} +
\tau^{1/9} + \exp(-\Omega(1/\delta)))
\end{align*}
with the last inequality holding by \Lemma{boundbad}.

Now, by item (iii), 
\begin{align}
\nonumber \E[\tilde{I}^c&_{\largerball_{\rho,t,\eps'}}(M_p(X))]\\
&{}\le
\nonumber \Pr[d_2(M_p(X), \smallball_{t,\eps'}) \le 2\rho] +
O\left(\sum_{s=1}^{\infty} 2^{-2s} \cdot \Pr[2^s\rho <
  d_2(M_p(X),\smallball_{t,\eps'}) \le 2^{s+1}\rho]\right)\\
\nonumber &{} \le O(\sqrt{\eps'} + \sqrt{\rho} + (\rho^2/\delta)^{1/4} +
\tau^{1/9} + \exp(-\Omega(1/\delta))\\
\nonumber &\hspace{.2in}{}+
O\Bigg(\sum_{s=1}^{\infty} 2^{-2s} \cdot (\sqrt{\eps'} + 2^{s+1}\rho +
  \sqrt{2^{s+1}\rho} + (2^{2s+2}\rho^2/\delta)^{1/4} + \tau^{1/9}+
  \exp(-\Omega(1/\delta)))\Bigg)\\
&{} = O(\sqrt{\eps'} + \sqrt{\rho} + (\rho^2/\delta)^{1/4} + \tau^{1/9}+
\exp(-\Omega(1/\delta)) \EquationName{thisissmall}
\end{align}
We now make the settings
$$\rho = (\eps')^2,\hspace{.3in} \frac{1}{\delta} = 2Bc =
\frac{2AB}{\rho} .$$
where $B>1$ is the
sufficiently large constant in \Lemma{kwise-fools}. Thus
\Equation{thisissmall} is now $O(\sqrt{\eps'} + \tau^{1/9})$. (We
remark that a different $\delta$ is
used when proving \Theorem{main-regular}.)

\bigskip

\noindent $\mathbf{\E[\tilde{I}^c_{\largerball_{\rho,t,\eps'}}(M_p(Y))] =
O(\sqrt{\eps'} + \tau^{1/9})}$: It suffices to show 
$$ \E[\tilde{I}^c_{\largerball_{\rho,t,\eps'}}(M_p(Y))]
\approx_{\eps} \E[\tilde{I}^c_{\largerball_{\rho,t,\eps'}}(M_p(X))]
.$$

We remark that
$\tilde{I}^c_{\largerball_{\rho,t,\eps'}}$ can be
assumed to be even in both $x_1,x_2$. If not, then consider the
symmetrization
\begin{equation}\EquationName{symmetrization}
(\tilde{I}^c_{\largerball_{\rho,t,\eps'}}(x_1,x_2,x_3,x_4) +
\tilde{I}^c_{\largerball_{\rho,t,\eps'}}(-x_1,x_2,x_3,x_4) +
\tilde{I}^c_{\largerball_{\rho,t,\eps'}}(x_1,-x_2,x_3,x_4) +
\tilde{I}^c_{\largerball_{\rho,t,\eps'}}(-x_1,-x_2,x_3,x_4))/4 ,
\end{equation}
which does not affect any of our properties (i),(ii), (iii).

Now, by our choice of $k,\delta$ and item (i), we have by
\Lemma{kwise-fools} (with
$\alpha=2c$) that
$$ |\E[\tilde{I}^c_{\largerball_{\rho,t,\eps'}}(M_p(X))] -
\E[\tilde{I}^c_{\largerball_{\rho,t,\eps'}}(M_p(Y))]|  < \eps' .$$
This completes our proof by applying \Equation{crucial} with $Z=Y$.
\end{proof}

The following Corollary is proven similarly as \Lemma{boundbad}, but
uses anticoncentration under bounded independence (which we just
proved in \Lemma{bounded-anticoncentration}).  The proof is
in \Section{regular-proofs}.

\begin{corollary}\CorollaryName{boundbad2}
Let $\eta,\eta'\ge 0$ be given, and let $Y_1,\ldots,Y_n$ be
$k$-independent Bernoulli for $k$ as in
\Lemma{bounded-anticoncentration} with $\eps' = \min\{\eta/\sqrt{\delta},
\eta'\}$. Also assume $k\ge \ceil{2/\delta}$. Then
$$ \Pr[|p(X) - t| \le \eta\cdot(\sqrt{p_1(X)} + \sqrt{p_2(X)}
+ 1) + \eta'] = O(\sqrt{\eta'} + (\eta^2/\delta)^{1/4} + \tau^{1/9} +
\exp(-\Omega(1/\delta))) .$$
\end{corollary}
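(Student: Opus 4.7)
The plan is to emulate the proof of \Lemma{boundbad}, but to substitute the fully independent anticoncentration step with its bounded-independence version (\Lemma{bounded-anticoncentration}), and to replace any concentration bound on $p_1(Y), p_2(Y)$ under full independence with a Markov-type tail bound on their moments. The latter is legitimate under only $k$-wise independence because $p_1, p_2$ have degree two, so any $(k/2)$-th power is a polynomial of degree $k$ in the $Y_j$ whose expectation matches the fully independent value.

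Concretely, split the bad event
$$E \;=\; \{|p(Y) - t| \le \eta(\sqrt{p_1(Y)}+\sqrt{p_2(Y)}+1)+\eta'\}$$
according to whether $\sqrt{p_1(Y)}+\sqrt{p_2(Y)}+1$ lies below or above a threshold $M = \Theta(1/\sqrt{\delta})$. On the ``small'' side, $E$ is contained in $\{|p(Y)-t| \le \eta M + \eta' = O(\eta/\sqrt{\delta} + \eta')\}$, so \Lemma{bounded-anticoncentration} (applied with its parameter set to $\eta M + \eta'$, or separately to $\eta/\sqrt{\delta}$ and $\eta'$) yields a contribution of
$$O\!\left(\sqrt{\eta/\sqrt{\delta} + \eta'} + \tau^{1/9}\right) \;=\; O\!\left((\eta^2/\delta)^{1/4} + \sqrt{\eta'} + \tau^{1/9}\right).$$
The hypothesis that $k$ is at least what \Lemma{bounded-anticoncentration} requires for $\eps' = \min\{\eta/\sqrt{\delta}, \eta'\}$ is exactly what makes each such invocation legal.

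For the ``large'' side, a union bound reduces us to showing $\Pr[p_i(Y) \ge M^2/4] = \exp(-\Omega(1/\delta))$ for $i\in\{1,2\}$. Using that $\E[p_i(Y)^{k/2}]$ is determined by $k$-wise independence together with \Lemma{boundmoment2}, Markov gives a bound of the form
$$\Pr[p_i(Y) \ge M^2/4] \;\le\; 2^{O(k)}\!\left(\frac{k}{M^2}\right)^{\!k/2} + 2^{O(k)}\!\left(\frac{1}{M^2\delta}\right)^{\!k/2},$$
and with $M^2 = \Theta(1/\delta)$ and the hypothesis $k \ge \ceil{2/\delta}$, both terms become $\exp(-\Omega(1/\delta))$. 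The main obstacle is the balancing of $M$: it must be small enough that $\eta M \lesssim \eta/\sqrt{\delta}$ (so the anticoncentration bound reproduces the $(\eta^2/\delta)^{1/4}$ term), yet large enough that the $k \asymp 1/\delta$ moments actually available on $p_i(Y)$ are sufficient to yield an $\exp(-\Omega(1/\delta))$ tail at the threshold $M^2/4$. The choice $M = \Theta(1/\sqrt{\delta})$ meets both demands, and combining the two contributions gives the stated bound.
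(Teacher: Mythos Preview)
Your approach is essentially the paper's. The paper's proof just says: rerun the argument of \Lemma{boundbad}, noting that the two places where independence was used --- the tail bound on $p_1,p_2$ (\Corollary{eigentail}) and anticoncentration of $p$ --- both already hold under the stated amount of bounded independence, the latter by \Lemma{bounded-anticoncentration}. Your split into ``small'' and ``large'' regimes with threshold $M=\Theta(1/\sqrt{\delta})$ is exactly this, and your treatment of the small side is correct.

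There is one genuine slip in your large-side argument. You apply Markov with the $(k/2)$th moment and assert that
\[
2^{O(k)}\Bigl(\frac{k}{M^2}\Bigr)^{k/2}+2^{O(k)}\Bigl(\frac{1}{M^2\delta}\Bigr)^{k/2}=\exp(-\Omega(1/\delta))
\]
once $M^2=\Theta(1/\delta)$ and $k\ge\ceil{2/\delta}$. But the hypothesis is only a \emph{lower} bound on $k$; in the paper's actual parameter regime $k=\Theta(\eps^{-8})$ while $1/\delta=\Theta(\eps^{-4})$, so $k\delta\to\infty$ and the first term $2^{O(k)}(k\delta)^{k/2}$ blows up rather than decays. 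The fix is immediate: use the $\ell$th moment with $\ell=\Theta(1/\delta)$ (which is available since $k\ge\ceil{2/\delta}$), so that $\ell\delta=O(1)$ and both terms are $\exp(-\Omega(\ell))=\exp(-\Omega(1/\delta))$. This is precisely what \Corollary{eigentail} (via \Theorem{tailbound}) does, and the paper simply cites that corollary rather than rederiving it.
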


We are now ready to prove the main theorem of this section.

\medskip

\begin{proofof}{\Theorem{main-regular}}
Consider the region $\reg\subset\R^4$ defined
by $\reg = \{(x_1,x_2,x_3,x_4) : x_1^2 - x_2^2 + x_3 + x_4 + C +
\eigensum \ge 0\}$. Then note that $I_{[0,\infty)}(p(x)) = 1$
if and only if $I_{\reg}(M_p(x)) = 1$. It thus
suffices to show that $I_{\reg}$ is fooled in expectation by bounded
independence.

We set
$\rho = \eps^4$, $c = 1/\rho$, and $1/\delta = 2Bc$ for $B$ the
constant in the statement of \Lemma{kwise-fools}.
 We now show a chain of
inequalities to give our theorem:
$$
  \E[I_{\reg}(M_p(X))] \approx_{\eps + \tau^{1/9}}
  \E[\tilde{I}^c_{\reg}(M_p(X))] \approx_{\eps}
  \E[\tilde{I}^c_{\reg}(M_p(Y))]  \approx_{\eps + \tau^{1/9}}
  \E[I_{\reg}(M_p(Y))] 
$$

\vspace{.1in}

\noindent $\mathbf{\E[I_{\reg}(M_p(X))] \approx_{\eps + \tau^{1/9}}
  \E[\tilde{I}_{\reg}^c(M_p(X))]}:$
Similarly to as
in the proof of \Lemma{bounded-anticoncentration},
$$d_2(x, \partial\reg) \ge \frac 12\cdot
\min\left\{\frac{|x_1^2-x_2^2+x_3+x_4+C+\eigensum|}{2(|x_1|+|x_2|+1)},
  \sqrt{|x_1^2-x_2^2+x_3+x_4+C+\eigensum|}\right\} ,$$
and thus by \Lemma{boundbad},
 \begin{eqnarray*}
\Pr[d_2(M_p(X),\partial\reg) \le w] &\le& \Pr[|p(X)| \le 4w
\cdot (\sqrt{p_1(X)} + \sqrt{p_2(X)} +
 1)] + \Pr[|p(X)| \le 4w^2] \\
&=& O(w + \sqrt{w} + (w^2/\delta)^{1/4}
 + \tau^{1/9} + \exp(-\Omega(1/\delta)))
\end{eqnarray*}
Now, noting $|\E[I_{\reg}(M_p(X))] -
\E[\tilde{I}_{\reg}^c(M_p(X))]|\le \E[|I_{\reg}(M_p(X))] -
\tilde{I}_{\reg}^c(M_p(X))|]$ and applying \Theorem{approx-indicator}, 
\begin{align*}
|\E[I_{\reg}&(M_p(X))] -
\E[\tilde{I}_{\reg}^c(M_p(X))]|\\
&{} \le \Pr[d_2(M_p(X),\partial\reg) \le
2\rho] +
O\left(\sum_{s=1}^{\infty} 2^{-2s} \cdot \Pr[2^s\rho <
  d_2(M_p(X),\partial\reg) \le 2^{s+1}\rho]\right)\\
&{} \le O(\sqrt{\rho} + (\rho^2/\delta)^{1/4} +
\tau^{1/9} + \exp(-\Omega(1/\delta))\\
&\hspace{.15in}{}+
O\left(\sum_{s=1}^{\infty} 2^{-2s} \cdot (\sqrt{2^{s+1}\rho} +
  (2^{2s+2}\rho^2/\delta)^{1/4} + \tau^{1/9}+
  \exp(-\Omega(1/\delta)))\right)\\
&{} = O(\eps + \tau^{1/9})
\end{align*}
by choice of $\rho,\delta$ and applications of \Lemma{boundbad}.

\vspace{.2in}

\noindent $\mathbf{\E[\tilde{I}_{\reg}^c(M_p(X))] \approx_{\eps}
  \E[\tilde{I}_\reg^c(M_p(Y))]}:$ As in \Equation{symmetrization},
we can assume $\tilde{I}_\reg^c$ is even in $x_1,x_2$.  We apply
\Lemma{kwise-fools} with $\alpha=2c$, noting that $1/\delta = B\alpha$ and
that our setting of $k$ is sufficiently large.

\vspace{.2in}

\noindent $\mathbf{\E[\tilde{I}_{\reg}^c(M_p(Y))] \approx_{\eps + \tau^{1/9}}
  \E[I_{\reg}(M_p(Y))]}:$ The argument is identical as with the first
inequality, except that we use \Corollary{boundbad2} instead of
\Lemma{boundbad}. We remark that we do have sufficient independence to
apply \Corollary{boundbad2} since, mimicking our analysis of the first
inequality, we have
\begin{align}
\nonumber \Pr[|p&(Y)| \le 4\rho
\cdot (\sqrt{p_1(Y)} + \sqrt{p_2(Y)} +
 1)] + \Pr[|p(Y)| \le 4\rho^2]\\
& \le \Pr[|p(Y)| \le 4\rho
\cdot (\sqrt{p_1(Y)} + \sqrt{p_2(Y)} +
 1)] + \Pr[|p(Y)| \le \eps^2]\EquationName{latter-sum}
\end{align}
since $\rho^2 = o(\eps^2)$ (we only changed the second summand).
To apply \Corollary{boundbad2} to
\Equation{latter-sum}, we need $k\ge \ceil{2/\delta}$, which is true,
and $k = \Omega(1/(\eps'')^4)$, for $\eps'' =
\min\{\rho/\sqrt{\delta}, \eps^2\} = \eps^2$, which is also
true. \Corollary{boundbad2} then tells us \Equation{latter-sum} is
$O(\eps + \tau^{1/9})$.
\end{proofof}

Our main theorem of this Section (\Theorem{main-regular}) also holds
under the case that the $X_i,Y_i$ are standard normal, and without
any error term depending on $\tau$. We give a proof
in \Section{gaussian-setting}, by reducing back to the Bernoulli case.

\section{Reduction to the regular case}\SectionName{main-thm}

In this section, we complete the proof of Theorem~\ref{thm:main}. We accomplish this by providing a reduction from the general
case to the regular case. In fact, such a reduction can be shown to hold for any degree $d\geq 1$ and establishes the
following:

\begin{theorem} \TheoremName{reg-reduction}
Suppose $K_d$-wise independence $\eps$-fools the class of
$\tau$-regular degree-$d$ PTF's, for some parameter $0<\tau \leq
\eps$. Then $(K_d+L_d)$-wise independence $\eps$-fools all degree-$d$
PTFs, where $L_d=(1/\tau) \cdot \big(d \log
(1/\tau)\big)^{O(d)}$.
\end{theorem}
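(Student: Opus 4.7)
The plan is to follow the DSTW09 style reduction from general PTFs to regular PTFs via a decision tree that ``regularizes'' the polynomial. Concretely, I would invoke (and if necessary adapt) the decision-tree regularity lemma of \cite{DSTW09}: for any degree-$d$ polynomial $p$ and threshold $\tau$, one constructs a deterministic binary decision tree $\T_p$ on the input coordinates of depth at most $L_d = (1/\tau)\cdot (d\log(1/\tau))^{O(d)}$, such that each root-to-leaf path fixes a restriction $\rho_\ell$ of the queried coordinates for which the restricted polynomial $p|_{\rho_\ell}$ (viewed as a polynomial in the remaining free coordinates) is either \emph{(R)} $\tau$-regular, or \emph{(C)} ``close to a constant sign'' $s_\ell\in\bits$, with (C) certified by a second-moment condition of the form $\Var_\U[p|_{\rho_\ell}] \le O(\tau)\cdot \E_\U[p|_{\rho_\ell}]^2$. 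The construction is the natural greedy one: at any node, if the current restricted polynomial is already regular or already satisfies the variance condition, declare a leaf; otherwise branch on a coordinate of maximum influence. A potential-function argument tracking how the tail of the Fourier coefficients is whittled down bounds the depth by $L_d$.

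Granting the tree, let $\D$ be any $(K_d+L_d)$-wise independent distribution on $\bn$ and write
\[
\E_\D[f(x)] \;=\; \sum_{\ell \in \T_p} \Pr_\D[\mathrm{path}(x) = \ell]\cdot \E_\D[f(x) \mid \mathrm{path}(x) = \ell],
\]
and likewise under $\U$. The event $\{\mathrm{path}(x) = \ell\}$ depends on at most $L_d$ coordinates, so $L_d$-wise independence forces the leaf probabilities under $\D$ and $\U$ to agree exactly. Moreover, it is standard that conditioning a $(K_d+L_d)$-wise independent distribution on the values of any $L_d$ coordinates yields a $K_d$-wise independent distribution on the remaining coordinates, because before conditioning the joint distribution of the $L_d$ fixed coordinates together with any $K_d$ free coordinates was already fully uniform on $\bits^{K_d+L_d}$. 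Thus at every leaf, the conditional distribution on the free coordinates is $K_d$-wise independent under $\D$.

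At a type-(R) leaf, the hypothesis of the theorem applied to $p|_{\rho_\ell}$ immediately gives $|\E_\D[f \mid \ell] - \E_\U[f \mid \ell]| \le \eps$. At a type-(C) leaf, the second-moment certificate transfers cleanly from $\U$ to $\D$: the mean $\mu := \E[p|_{\rho_\ell}]$ and $\Var[p|_{\rho_\ell}]$ are determined by polynomials of degree at most $2d$ in the free coordinates, hence are identical under $\D$ and $\U$ provided $K_d\ge 2d$, so Chebyshev's inequality yields $\Pr_\D[\sgn(p|_{\rho_\ell}) \ne s_\ell] \le \Pr_\D[|p|_{\rho_\ell}-\mu| \ge |\mu|] \le \Var_\D[p|_{\rho_\ell}]/\mu^2 = O(\tau)$, matching the bound under $\U$. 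Summing leaf contributions and using $\tau\le\eps$ yields $|\E_\D[f] - \E_\U[f]| = O(\eps)$, as required.

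The main obstacle is securing the exact form of the regularity lemma needed: specifically, that the greedy branching procedure produces, within depth $L_d$, only leaves either of type (R) or of type (C) with a second-moment certificate strong enough to support the Chebyshev-style transfer from $\U$ to $\D$. This is essentially where the $(d\log(1/\tau))^{O(d)}$ factor in $L_d$ arises, and this is the step where the work of \cite{DSTW09} is used in an essential way; once these ingredients are in place, the rest of the argument is the straightforward bookkeeping outlined above, using nothing beyond the $(K_d+L_d)$-wise independence assumption.
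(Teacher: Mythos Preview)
Your high-level architecture matches the paper exactly: invoke the DSTW09 decision tree, observe that leaf probabilities are identical under $\D$ and $\U$ since the depth is at most $L_d$, and then handle each leaf using the remaining $K_d$-wise independence on the free coordinates. Two points, however, deserve correction.

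First, a minor one: the DSTW09 tree does \emph{not} guarantee that every leaf is of type (R) or (C). It guarantees only that a random root-to-leaf path lands in a ``bad'' leaf with probability at most $\tau$; these bad leaves are simply charged $\tau\le\eps$ to the final error. Your write-up should reflect this three-way split.

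Second, and this is the substantive gap: the certificate you posit for a (C) leaf, namely $\Var_\U[p|_{\rho_\ell}]\le O(\tau)\cdot\E_\U[p|_{\rho_\ell}]^2$, is \emph{stronger} than what DSTW09 actually delivers. What DSTW09 gives at a close-to-constant leaf is that the constant term satisfies $|\widehat{p|_{\rho_\ell}}(\emptyset)|\ge 2^{-O(d)}$ while the non-constant part has $\ell_2$-norm at most $(\log(1/\tau))^{-d}$. The ratio $\Var/\mu^2$ is therefore only $2^{O(d)}(\log(1/\tau))^{-2d}$, which is far larger than $\tau$, so Chebyshev alone yields nothing close to $\tau$-closeness. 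DSTW09 instead passes from this moment gap to $\Pr[\sgn(p|_{\rho_\ell})\neq s_\ell]\le\tau$ using the hypercontractive tail bound for degree-$d$ polynomials with $t=(\log(1/\tau))^{d/2}$. The paper's fix is precisely to observe that this tail bound (their \Theorem{tailbound}) already holds under $O(d\log(1/\tau))$-wise independence, which is subsumed by $K_d$; hence the (C) property transfers from $\U$ to $\D$ without needing a second-moment certificate at all. So rather than sharpening the regularity lemma to produce a Chebyshev-friendly certificate (which DSTW09 does not do, and which you have not argued is possible within depth $L_d$), you should replace your Chebyshev step by the bounded-independence version of the degree-$d$ tail bound.
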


Noting that $\tau$-regularity implies that the maximum influence of
any particular variable is at
most $d\cdot \tau$, \Theorem{main-regular} implies that degree-$2$
PTF's that are $\tau$-regular, for $\tau = O(\eps^{9})$, are
$\eps$-fooled by
$K_2$-wise independence for $K_2 = O(\eps^{-8}) = \poly(1/\eps)$. By plugging in $\tau = O(\eps^9)$ in the
above theorem we obtain \Theorem{main}. The proof of \Theorem{reg-reduction} is based on recent machinery from
\cite{DSTW09}\footnote{We note that \cite{MZ10} uses a similar approach to obtain their PRG's for degree-$d$ PTF's. Their
methods are not directly applicable in our setting, one reason being that that their notion of ``regularity'' is different from
ours.}. Here we give a sketch, with full details in \Section{reduction-proofs}.

\begin{proofof-sketch}{\Theorem{reg-reduction}} Any
boolean function $f$ on $\{-1,1\}^n$ can be expressed as a binary decision tree where each internal node is labeled by a
variable, every root-to-leaf path corresponds to a restriction $\rho$ that fixes the variables as they are set on the path, and
every leaf is labeled with the restricted subfunction $f_{\rho}$. The main claim is that, if $f$ is a degree-$d$ PTF, then it
has such a decision-tree representation with certain strong properties. In particular, given an arbitrary degree-$d$ PTF
$f=\sgn(p)$, by \cite{DSTW09} there exists a decision tree $\T$ of depth $(1/\tau) \cdot \big(d \log (1/\tau)\big)^{O(d)}$, so that with
probability $1-\tau$ over the choice of a random root-to-leaf path\footnote{A ``random root-to-leaf path'' corresponds to the
standard uniform random walk on the tree.} $\rho$, the restricted subfunction (leaf) $f_\rho=\sgn(p_\rho)$ is either a
$\tau$-regular degree-$d$ PTF or $\tau$-close to a constant function.

Our proof of \Theorem{reg-reduction} is based on the above structural lemma. Under the uniform distribution, there is some
particular distribution on the leaves (the tree is not of uniform height); then conditioned on the restricted variables the
variables still undetermined at the leaf are still uniform.  With $(K_d+L_d)$-wise independence, a random walk down the tree
arrives at each leaf with the same probability as in the uniform case (since the depth of the tree is at most $L_d$). Hence,
the probability mass of the ``bad'' leaves is at most $\tau\le\eps$ even under bounded independence. Furthermore, the induced distribution
on each leaf (over the unrestricted variables) is $K_d$-wise independent. Consider a good leaf. Either the leaf is
$\tau$-regular, in which case we can apply \Theorem{main-regular}, or it is $\tau$-close to a constant function. At this point
though we arrive at a technical issue. The statement and proof in \cite{DSTW09} concerning ``close-to-constant'' leaves holds
only under the uniform distribution. For our result, we need a stronger statement that holds under any distribution (on the
variables that do not appear in the path) that has sufficiently large independence. By simple modifications of the proof
in~\cite{DSTW09}, we show that the statement holds even under $O(d \cdot \log (1/\tau))$-wise independence.
\end{proofof-sketch}

\section{Fooling intersections of threshold functions}\SectionName{intersections}

Our approach also implies that the intersection of
halfspaces (or even degree-$2$ threshold
functions) is fooled by bounded independence.  While 
\Theorem{main-regular-gaussian}
implies that
$\Omega(\eps^{-8})$-wise independence fools GW
rounding, we can do much better by noting
that to fool GW rounding it suffices to fool the intersection
of two halfspaces under the Gaussian measure.

This is because in the GW rounding scheme for \maxcut, each
vertex $u$ is first mapped to a vector $x_u$ of unit norm, and the
side of a bipartition $u$ is placed in is decided by
$\sgn(\inprod{x_u,r})$ for a random Gaussian vector $r$. For a vertex
$u$, let $H_u^+$ be the halfspace $\inprod{x_u,r} > 0$, and let
$H_u^-$ be the halfspace $\inprod{-x_u,r} > 0$.  Then note that the
edge $(u,v)$ is cut if and only if $r\in(H_u^+\cap
H_v^-)\cup(H_u^-\cap H_v^+)$, i.e. $r$ must be in the union of the
intersection of two halfspaces.  Thus if we define the region
$R^+$ to be the topright quadrant of $\R^2$, and $R^-$ to be the
bottom left quadrant of $\R^2$, then we are interested in
fooling
$$\E[I_{R^+\cup R^-}(\inprod{x_u,r}, \inprod{-x_v,r})] =
\E[I_{R^+}(\inprod{x_u,r}, \inprod{-x_v,r})] +
\E[I_{R^+}(\inprod{-x_u,r}, \inprod{x_v,r})] ,$$
since the sum of such expectations over all edges $(u,v)$ gives us the
expected number of edges that are cut (note
equality holds above since the two halfspace intersections are
disjoint). The following theorem
then implies that to achieve a maximum cut within a factor $.878... -
\eps$ of
optimal in expectation, it suffices that the entries of the random normal
vector $r$ have entries that are $\Omega(1/\eps^2)$-wise independent.
The proof of
the theorem is in \Section{intersection-proofs}.

\begin{theorem}\TheoremName{intersection-halfspaces}
Let $H_1=\{x:\inprod{a,x}>\theta_1\}$ and
$H_2=\{x:\inprod{b,x}>\theta_2\}$ be two
halfspaces, with $\|a\|_2=\|b\|_2=1$.
Let $X,Y$ be $n$-dimensional vectors of standard normals with the
$X_i$ independent and the
$Y_i$ $k$-wise independent for $k = \Omega(1/\eps^2)$. Then $ |\Pr[X\in
H_1\cap H_2] -
\Pr[Y\in H_1\cap H_2]| < \eps.$
\end{theorem}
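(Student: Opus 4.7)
The plan is to reduce to a $2$-dimensional FT-mollification problem. Define $F:\R^n\to\R^2$ by $F(x) = (\inprod{a,x}, \inprod{b,x})$ and let $R = \{(u,v)\in\R^2 : u > \theta_1,\ v > \theta_2\}$, so that $I_{H_1\cap H_2}(x) = I_R(F(x))$. Set $c = \Theta(1/\eps)$, let $\tilde{I}_R^c$ be the $2$-dimensional FT-mollification of $I_R$, and I would establish the chain
\[ \E[I_R(F(X))] \approx_\eps \E[\tilde{I}_R^c(F(X))] \approx_\eps \E[\tilde{I}_R^c(F(Y))] \approx_\eps \E[I_R(F(Y))]. \]

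For the middle approximation, let $P_{k-1}$ be the degree-$(k-1)$ Taylor polynomial of $\tilde{I}_R^c$ about $0$. Then $P_{k-1}(F(\cdot))$ has degree at most $k-1$ in the coordinates of its argument, so $k$-wise independence of $Y$ gives $\E[P_{k-1}(F(X))] = \E[P_{k-1}(F(Y))]$. Using the derivative bound $\|\partial^\beta \tilde{I}_R^c\|_\infty \le (2c)^{|\beta|}$ of Theorem~\ref{thm:approx-uniform}(i), the Lagrange form of the remainder gives $|\tilde{I}_R^c(z) - P_{k-1}(z)| \le (2c)^k(|z_1|+|z_2|)^k/k!$; combining with the Gaussian moment $\E[|\inprod{a,Y}|^k] = (k-1)!! = O(k^{k/2})$ (which matches the $X$-moment by $k$-wise independence) yields an expected remainder of $O((4ec/\sqrt{k})^k) = O(\eps)$ for $k = \Omega(1/\eps^2)$ and $c = \Theta(1/\eps)$. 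For the first and third approximations, Theorem~\ref{thm:approx-indicator} gives $|I_R(z) - \tilde{I}_R^c(z)| \le \min\{1,\ O(1/(c\, d_2(z,\partial R))^2)\}$; a dyadic decomposition over annuli $\{2^s/c < d_2(F(\cdot), \partial R) \le 2^{s+1}/c\}$ reduces the task to controlling $\Pr[d_2(F(\cdot), \partial R) \le r]$ for $r$ on a geometric sequence, and since $\partial R$ is a union of two perpendicular rays this is at most $\Pr[|\inprod{a,\cdot} - \theta_1| \le r] + \Pr[|\inprod{b,\cdot} - \theta_2| \le r]$. For $X$ each marginal is a standard Gaussian, each such probability is $O(r)$, and the dyadic sum telescopes to $O(1/c) = O(\eps)$.

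The main obstacle is the analogous anti-concentration for $Y$, which I would prove as a separate $1$-dimensional lemma: $\Pr[|\inprod{a,Y} - \theta| \le r] = O(r)$ for any unit $a$, any $\theta$, and any $r = \Omega(\eps)$. This is established by FT-mollifying in dimension $1$ the indicator of the thickened slab $[\theta - 2r, \theta + 2r]$ with parameter $c_0 = \Theta(1/r)$, using that the mollification pointwise dominates $\tfrac{1}{2} I_{[\theta-r, \theta+r]}$ by Theorem~\ref{thm:approx-indicator}, and then transferring its expectation from $Y$ back to $X$ by the same Taylor-expansion argument as in the middle step, with the same Gaussian moment bound $\E[|\inprod{a,Y}|^k] = O(k^{k/2})$; for $X$, $\inprod{a,X}$ is standard Gaussian and the answer is $O(r)$. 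The constraint $c_0 \le \sqrt{k}/O(1)$ forces $r = \Omega(1/\sqrt{k}) = \Omega(\eps)$, which is exactly the regime we need; for smaller $r$ the trivial $O(\eps)$ bound suffices and is harmless in the dyadic sum. The argument is structurally parallel to Lemma~\ref{lem:bounded-anticoncentration} but specialized to linear forms over Gaussians, so no spectral decomposition or Invariance Principle is required.
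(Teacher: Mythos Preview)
Your proposal is correct and follows essentially the same approach as the paper: the paper defines the same map $F$ and region $R$, sets $c = \Theta(1/\eps)$, proves the identical chain of three approximations via the same dyadic decomposition (using $d_2(z,\partial R) \ge \min_i |z_i - \theta_i|$) and the same Taylor/moment argument. The only difference is that where you sketch a direct $1$-dimensional FT-mollification proof of anti-concentration for $\inprod{a,Y}$ under $k$-wise independence, the paper simply cites \cite[Lemma~2.5]{KNW10} for anti-concentration of linear forms in $p$-stable variables under bounded independence (with $p=2$ for Gaussians); your inline argument is a correct special case of that lemma.
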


The proof of \Theorem{intersection-halfspaces}
can be summarized in one sentence: FT-mollify the indicator function
of $\{x : x_1\ge \theta_1,x_2\ge \theta_2\}\subset \R^2$.
We also in \Section{intersection-proofs} discuss how our proof of
\Theorem{intersection-halfspaces} easily generalizes to handle the
intersection of $m$
halfspaces, or even $m$ degree-$2$ PTF's, for any constant $m$, as
well as generalizations to case that $X,Y$ are Bernoulli vectors as
opposed to Gaussian. Our dependence on $m$ in all cases is polynomial.

\section*{Acknowledgments}
We thank Piotr Indyk and Rocco Servedio for comments that improved the
presentation of this work. We also thank Ryan O'Donnell for bringing
our attention to the problem of the intersection of threshold
functions.

\medskip

\bibliographystyle{plain}
\bibliography{allpapers}

\newpage
\appendix
\section{Basic linear algebra facts}\SectionName{lin-alg}

In this subsection we record some basic linear algebraic facts used in our proofs.

We start with two elementary facts.

\begin{fact}\FactName{preserve-eigen}
If $A,P\in\R^{n\times n}$ with $P$ invertible, then the eigenvalues of
$A$ and $P^{-1}AP$ are identical.
\end{fact}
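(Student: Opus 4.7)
The plan is to show that $A$ and $P^{-1}AP$ have the same characteristic polynomial; once that is established, equality of the multisets of eigenvalues follows immediately from the fact that eigenvalues are the roots of the characteristic polynomial (counted with algebraic multiplicity).

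First I would use that $\lambda I$ commutes with every matrix and that $P^{-1}P = I$, in order to rewrite
$$ P^{-1}AP - \lambda I \;=\; P^{-1}AP - P^{-1}(\lambda I)P \;=\; P^{-1}(A - \lambda I)P. $$
Then, applying multiplicativity of the determinant together with $\det(P^{-1}) = 1/\det(P)$,
$$ \det(P^{-1}AP - \lambda I) \;=\; \det(P^{-1})\cdot \det(A - \lambda I)\cdot \det(P) \;=\; \det(A - \lambda I). $$
Thus the characteristic polynomials of $A$ and $P^{-1}AP$ coincide as polynomials in $\lambda$, and hence the two matrices share the same eigenvalues with the same algebraic multiplicities.

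There is essentially no obstacle here: the claim is a textbook consequence of the multiplicativity of $\det$ together with the observation that $\lambda I$ is a scalar matrix. I expect the fact is recorded only to be invoked later (e.g.\ when arguing about the decomposition $p = p_1 - p_2 + p_3 + p_4 + C$ in Section~\ref{sec:regular}, where one may wish to pass between $A_p$ and a similar matrix obtained by change of basis while preserving its spectrum).
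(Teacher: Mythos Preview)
Your proof is correct and is the standard characteristic-polynomial argument. The paper itself does not prove this fact at all---it simply records it as an elementary fact without proof---so there is nothing further to compare.
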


\begin{fact}\FactName{trace-eigen}
For $A\in\R^{n\times n}$ with eigenvalues
$\lambda_1,\ldots,\lambda_n$, and for integer $k>0$, $\tr(A^k) =
\sum_i \lambda_i^k$.
\end{fact}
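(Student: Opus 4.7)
The plan is to reduce to the case where $A$ is upper triangular, via Schur triangularization. Viewing $A$ as a matrix over $\C$, I would first invoke the standard fact that there exists an invertible $P$ and an upper triangular $T$ with $A = PTP^{-1}$ whose diagonal entries are exactly the eigenvalues $\lambda_1,\ldots,\lambda_n$ of $A$ (counted with algebraic multiplicity). By \Fact{preserve-eigen} applied to $A^k$ and $T^k = P^{-1} A^k P$, it suffices to show that $\tr(T^k) = \sum_i \lambda_i^k$.

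The first observation is that the class of upper triangular matrices is closed under products, and the diagonal of a product of two upper triangular matrices is the coordinate-wise product of the diagonals. Therefore, by a straightforward induction on $k$, the matrix $T^k$ is upper triangular with diagonal entries $\lambda_1^k, \ldots, \lambda_n^k$. Consequently $\tr(T^k) = \sum_{i=1}^n \lambda_i^k$ directly from the definition of trace.

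Finally, I must connect $\tr(A^k)$ to $\tr(T^k)$. This follows from the invariance of trace under similarity: using the cyclic property of trace, $\tr(A^k) = \tr(P T^k P^{-1}) = \tr(P^{-1} P T^k) = \tr(T^k)$. Combining the two displays yields $\tr(A^k) = \sum_i \lambda_i^k$, as required.

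The only real subtlety is that when $A$ has complex eigenvalues we must work over $\C$ (the Schur form need not be real), but this is harmless since the conclusion $\tr(A^k) = \sum_i \lambda_i^k$ is a scalar identity and both sides are computed identically whether we view $A$ as real or complex. If one prefers to avoid Schur form altogether, an essentially equivalent route is to use the Jordan decomposition $A = P J P^{-1}$ over $\C$ and observe that $J^k$ is upper triangular with diagonal $\lambda_i^k$, reaching the same conclusion.
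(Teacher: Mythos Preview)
Your argument is correct. The paper does not actually prove this statement; it is recorded as an elementary linear-algebra fact without proof, so there is no approach to compare against. Your Schur-triangularization route is a standard and complete justification.

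One small wording issue: your second sentence appeals to \Fact{preserve-eigen} to reduce to showing $\tr(T^k)=\sum_i\lambda_i^k$, but that fact is about eigenvalues, not traces, so by itself it does not give $\tr(A^k)=\tr(T^k)$. You do supply the correct bridge later via the cyclic property of trace, so the argument is fine; just drop or rephrase the earlier appeal to \Fact{preserve-eigen} to avoid the appearance of circularity.
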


Note \Fact{preserve-eigen} and \Fact{trace-eigen} imply the following.

\begin{fact}\FactName{preserve-l2}
For a real matrix $A\in\R^{n\times n}$ and invertible matrix
$P\in\R^{n\times n}$,
$$ \|P^{-1}AP\|_2 = \|A\|_2 .$$
\end{fact}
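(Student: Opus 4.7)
The plan is to reduce the claim directly to the two preceding facts. First I would observe that similarity commutes with taking powers: a telescoping argument gives $(P^{-1}AP)^k = P^{-1}A^kP$ for every $k \ge 1$, since the intermediate factors $PP^{-1}$ cancel. In particular $(P^{-1}AP)^2 = P^{-1}A^2P$, so applying \Fact{preserve-eigen} to the matrix $A^2$ shows that $A^2$ and $(P^{-1}AP)^2$ have the same eigenvalues. Equivalently, if $\lambda_1,\ldots,\lambda_n$ are the eigenvalues of $A$, then the eigenvalues of $(P^{-1}AP)^2$ are precisely $\lambda_1^2,\ldots,\lambda_n^2$, matching those of $A^2$.

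Next I would invoke the identification $\|B\|_2^2 = \tr(B^2) = \sum_i \mu_i^2$ that the paper takes as part of its definition of $\|\cdot\|_2$, with the second equality supplied by \Fact{trace-eigen} at $k=2$. Plugging in $B = P^{-1}AP$ and using the eigenvalue identification from the previous step,
$$\|P^{-1}AP\|_2^2 \;=\; \tr\bigl((P^{-1}AP)^2\bigr) \;=\; \sum_{i=1}^n \lambda_i^2 \;=\; \tr(A^2) \;=\; \|A\|_2^2,$$
and taking square roots yields the claim. A one-line alternative that sidesteps \Fact{trace-eigen} is to use the cyclic property of trace directly: $\tr\bigl((P^{-1}AP)^2\bigr) = \tr(P^{-1}A^2P) = \tr(A^2 PP^{-1}) = \tr(A^2)$.

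There is no genuine obstacle here; the statement is essentially a corollary of the two preceding facts, as the remark preceding it suggests. The only small care required is to note that \Fact{preserve-eigen} must be applied to $A^2$ rather than to $A$, which is legitimate precisely because similarity distributes over matrix powers.
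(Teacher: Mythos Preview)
Your proposal is correct and matches the paper's intended argument: the paper simply remarks that \Fact{preserve-eigen} and \Fact{trace-eigen} imply the statement, and you have filled in exactly those details. One minor simplification: you need not pass to $A^2$ before invoking \Fact{preserve-eigen}, since applying it to $A$ already gives that $A$ and $P^{-1}AP$ share eigenvalues $\lambda_1,\ldots,\lambda_n$, whence $\sum_i\lambda_i^2$ agrees for both and \Fact{trace-eigen} (with $k=2$) finishes.
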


The following standard result will be useful:

\begin{theorem}[Spectral Theorem {\cite[Section 6.4]{Strang09}}]\TheoremName{spectral}
If $A\in \R^{n\times n}$ is symmetric, there exists an orthogonal
$Q\in\R^{n\times n}$ with $\Lambda = Q^TAQ$ diagonal. In particular, all
eigenvalues of $A$ are real.
\end{theorem}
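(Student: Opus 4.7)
The plan is to establish the theorem by induction on $n$, proving the reality of the eigenvalues along the way. The base case $n=1$ is trivial since every $1\times 1$ matrix is already diagonal and equal to its single (real) entry.

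For the reality of eigenvalues, I would first show that if $Av = \lambda v$ for some (a priori complex) eigenpair $(\lambda, v)$ with $v \neq 0$, then $\lambda \in \R$. To do this, I take the conjugate transpose of the eigenvalue equation and compute $\bar{v}^T A v$ in two ways: on one hand it equals $\lambda \bar{v}^T v = \lambda \|v\|_2^2$; on the other hand, using $A = A^T$ and that $A$ is real (so $\bar{A} = A$), it equals $\overline{(Av)}^T v = \bar{\lambda} \bar{v}^T v = \bar{\lambda} \|v\|_2^2$. Since $\|v\|_2^2 > 0$, we get $\lambda = \bar\lambda$, so $\lambda \in \R$. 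The Fundamental Theorem of Algebra applied to the characteristic polynomial $\det(A - \lambda I)$ then guarantees at least one eigenvalue exists, and the above shows it is real; the associated eigenvector can be taken real by solving the real linear system $(A - \lambda I)v = 0$.

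For the inductive step, assume the theorem holds for all symmetric matrices of dimension less than $n$. Given symmetric $A \in \R^{n\times n}$, pick a real unit eigenvector $v_1$ of $A$ with real eigenvalue $\lambda_1$ as above. Let $W = \{v_1\}^\perp \subset \R^n$, which is an $(n-1)$-dimensional subspace. The key observation is that $W$ is $A$-invariant: for $w \in W$, we have $\langle Aw, v_1\rangle = \langle w, A v_1\rangle = \lambda_1 \langle w, v_1\rangle = 0$, using symmetry of $A$. Choose any orthonormal basis of $W$ and let $U \in \R^{n\times(n-1)}$ have these as its columns; then the matrix $A' = U^T A U \in \R^{(n-1)\times(n-1)}$ represents the action of $A$ on $W$, and it is symmetric since $(U^T A U)^T = U^T A^T U = U^T A U$. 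By the inductive hypothesis, there is an orthogonal $Q' \in \R^{(n-1)\times(n-1)}$ with $(Q')^T A' Q'$ diagonal.

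Finally, I would assemble $Q$ by taking $v_1$ as its first column and $UQ'$ as its remaining $n-1$ columns. The columns are orthonormal (the first is a unit vector orthogonal to $W$, and the remaining columns form an orthonormal basis of $W$ because $U$ has orthonormal columns and $Q'$ is orthogonal), so $Q$ is orthogonal. A direct block computation then shows $Q^T A Q$ has block-diagonal form $\mathrm{diag}(\lambda_1, (Q')^T A' Q')$, which is diagonal. The only slightly delicate point, and the one I would write out carefully, is the verification that $W$ is $A$-invariant and that passing to $U^T A U$ faithfully records the restriction of $A$ to $W$ in an orthonormal basis; everything else is a routine bookkeeping of block matrix identities.
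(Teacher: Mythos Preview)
Your proof is correct and is the standard inductive argument for the spectral theorem. However, the paper does not actually prove this statement: it is stated as a cited fact from Strang's textbook (note the citation \texttt{[Section 6.4]\{Strang09\}} in the theorem header) and is used as a black box in the appendix on basic linear algebra facts. So there is no ``paper's own proof'' to compare against; you have supplied a valid proof where the paper simply invokes the result.
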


\begin{definition}
For a real symmetric matrix $A$, we define $\lambdamin(A)$ to be the
smallest magnitude of a non-zero eigenvalue of $A$ (in the case that
all eigenvalues are $0$, we set $\lambdamin(A) = 0$).  We define
$\lambdamax{A}$ to be the largest magnitude of an eigenvalue of $A$.
\end{definition}

We now give a simple lemma that gives an upper bound on the magnitude of the trace of a symmetric matrix with positive
eigenvalues.

\begin{lemma}\LemmaName{small-constant}
Let $A\in\R^{n\times n}$ be symmetric with
$\lambdamin(A)>0$. Then $|\tr(A)| \le \|A\|_2^2/\lambdamin(A)$.
\end{lemma}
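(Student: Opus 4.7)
The plan is to prove this via the spectral theorem, reducing everything to a statement about the eigenvalues of $A$. Since $A$ is symmetric, \Theorem{spectral} gives us real eigenvalues $\lambda_1,\ldots,\lambda_n$, and by \Fact{trace-eigen} together with the identity $\|A\|_2^2 = \tr(A^2)$ we can rewrite
\[
\tr(A) = \sum_{i=1}^n \lambda_i, \qquad \|A\|_2^2 = \sum_{i=1}^n \lambda_i^2.
\]
This converts the matrix inequality into a one-line scalar inequality about the $\lambda_i$.

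The key observation is that for every nonzero eigenvalue $\lambda_i$ we have $|\lambda_i| \ge \lambdamin(A)$ by definition, so $\lambda_i^2 \ge \lambdamin(A)\cdot|\lambda_i|$; and this inequality trivially holds for zero eigenvalues too (both sides vanish). Summing over $i$ gives $\|A\|_2^2 \ge \lambdamin(A)\sum_i |\lambda_i|$, i.e.\ $\sum_i |\lambda_i| \le \|A\|_2^2/\lambdamin(A)$ (noting that $\lambdamin(A)>0$ by hypothesis, so we may divide).

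Finally, apply the triangle inequality: $|\tr(A)| = |\sum_i \lambda_i| \le \sum_i |\lambda_i| \le \|A\|_2^2/\lambdamin(A)$. I do not anticipate a genuine obstacle here — the argument is essentially a diagonalization plus the inequality $x^2 \ge \lambdamin(A)\cdot|x|$ valid for all $x$ in the spectrum — so the only care required is to handle the zero eigenvalues (which the convention from the definition of $\lambdamin$ handles automatically).
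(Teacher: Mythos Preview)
Your proof is correct. Both you and the paper reduce to eigenvalues via the spectral theorem, but the final inequality is obtained differently: the paper bounds the number of nonzero eigenvalues by $\|A\|_2^2/\lambdamin(A)^2$ and then applies Cauchy--Schwarz to get $|\sum_i \lambda_i| \le \sqrt{m}\cdot\|A\|_2 \le \|A\|_2^2/\lambdamin(A)$, whereas you use the pointwise inequality $\lambda_i^2 \ge \lambdamin(A)\cdot|\lambda_i|$ summed over all $i$, followed by the triangle inequality. Your route is arguably more direct---it avoids the detour through counting nonzero eigenvalues and Cauchy--Schwarz---while the paper's approach makes the rank bound $m \le \|A\|_2^2/\lambdamin(A)^2$ explicit, which could be of independent use. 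Both arguments are equally elementary and yield the same bound.
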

\begin{proof}
We have
\begin{eqnarray*}
|\tr(A)| &=& \left|\sum_{i=1}^n \lambda_i\right|\\
&\le& \frac{\|A\|_2}{\lambdamin(A)}\cdot\sqrt{\sum_{i=1}^n\lambda_i^2}\\
&=& \frac{\|A\|_2^2}{\lambdamin(A)}
\end{eqnarray*}
We note $\sum_{i=1}^n \lambda_i^2 = \|A\|_2^2$,
implying the final equality. Also,
there are at most
$\|A\|_2^2/(\lambdamin(A))^2$ non-zero $\lambda_i$.
The sole inequality then follows by Cauchy-Schwarz.
\end{proof}

\section{Useful facts about polynomials}

\subsection{Facts about low-degree polynomials.}

We view $\bits^n$ as a probability space endowed with the uniform probability measure. For a function $f:\{-1,1\}^n\rightarrow
\R$ and $r \geq 1$, we let $\|f\|_r$ denote $(\E_x[|f(x)|^r])^{1/r}$.

Our first fact is a consequence of the well-known hypercontractivity theorem.

\begin{theorem}[Hypercontractivity \cite{Beckner75,Bonami70}]\TheoremName{bonami-beckner}
If $f$ is a degree-$d$ polynomial and $1\le r < q \le \infty$,
$$ \|f\|_q \le \sqrt{\frac{q-1}{r-1}}^d\|f\|_r .$$
\end{theorem}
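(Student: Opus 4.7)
The plan is to deduce this from the operator form of Bonami--Beckner hypercontractivity for the noise operator on the hypercube. Define $T_\rho\colon L^2(\{-1,1\}^n)\to L^2(\{-1,1\}^n)$ by its action on the Fourier basis, $\widehat{T_\rho f}_S = \rho^{|S|}\widehat f_S$; equivalently, $(T_\rho f)(x) = \E[f(y)]$ where each coordinate $y_i$ independently equals $x_i$ with probability $(1+\rho)/2$ and $-x_i$ otherwise. The workhorse I would prove is the operator inequality $\|T_\rho f\|_q \le \|f\|_r$ for $\rho = \sqrt{(r-1)/(q-1)}$, valid for every $f\colon\{-1,1\}^n\to\R$ with no degree restriction.

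To establish this operator inequality I would follow the classical two-step argument. Step one is the single-variable \emph{two-point inequality}: for $a,b\in\R$ and $x$ uniform on $\{-1,1\}$,
$$\Bigl(\tfrac12|a+\rho b|^q + \tfrac12|a-\rho b|^q\Bigr)^{1/q} \le \Bigl(\tfrac12|a+b|^r + \tfrac12|a-b|^r\Bigr)^{1/r}.$$
By homogeneity one can set $a=1$, reducing the claim to nonnegativity of a single real-valued function of $b$; this is verified by matching Taylor expansions at $b=0$ (the constant and quadratic terms agree precisely because of the choice $\rho^2 = (r-1)/(q-1)$, and higher-order bookkeeping pins down the sign) together with an asymptotic comparison for large $|b|$. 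This two-point inequality is the technical heart of the theorem and will be the main obstacle. Step two is tensorization by induction on $n$: write $f(x_1,\ldots,x_n) = g(x') + x_n h(x')$ with $x'=(x_1,\ldots,x_{n-1})$, so that $T_\rho f = T_\rho' g + \rho\, x_n\, T_\rho' h$ where $T_\rho'$ acts only on $x'$. Applying the two-point inequality pointwise in $x'$ to this $x_n$-affine function, then Minkowski's integral inequality in $L^r(x')$, and finally the inductive hypothesis for $n-1$ coordinates, closes the induction.

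With the operator inequality in hand, the stated bound follows by specialization. For $f$ homogeneous of degree exactly $d$ we have $T_\rho f = \rho^d f$, so $\rho^d \|f\|_q = \|T_\rho f\|_q \le \|f\|_r$ rearranges to $\|f\|_q \le \rho^{-d}\|f\|_r = \sqrt{(q-1)/(r-1)}^d\|f\|_r$, exactly the claimed bound. For a general degree-$\le d$ polynomial $f$, the same inversion trick applies on the finite-dimensional subspace $V_d$ of polynomials of degree at most $d$: define $g\in V_d$ by $\widehat g_S = \rho^{-|S|}\widehat f_S$, so $T_\rho g = f$ and $\|f\|_q = \|T_\rho g\|_q \le \|g\|_r$; one then bounds $\|g\|_r \le \rho^{-d}\|f\|_r$ via the diagonal Fourier multiplier estimate $|\rho^{-|S|}|\le \rho^{-d}$ on $V_d$ (a one-line Parseval calculation when $r=2$, and requiring slightly more care for general $r$). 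This last reduction is routine once the operator bound is available, so it is not where the main difficulty lies.
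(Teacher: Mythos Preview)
The paper does not prove this statement; it is quoted as a known result with citations to Bonami and Beckner. Your proposed route via the operator inequality $\|T_\rho h\|_q \le \|h\|_r$ for $\rho = \sqrt{(r-1)/(q-1)}$ is the classical one, and your sketch of its proof (two-point inequality plus tensorization) is correct. The deduction to homogeneous degree-$d$ polynomials is also correct, as is the general degree-$\le d$ case when $r=2$, via Parseval. Since the paper only ever invokes this theorem with $r=2$ (in the proofs of \Theorem{tailbound} and \Lemma{boundmoment}), your argument already suffices for every application in the paper.

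However, your reduction for general $r$ is not merely ``requiring slightly more care''---the intermediate inequality you rely on is false. You assert that $\|g\|_r \le \rho^{-d}\|f\|_r$ for $g = T_{1/\rho}f$ follows from the pointwise multiplier bound $\rho^{-|S|} \le \rho^{-d}$ on $V_d$. But a Fourier multiplier pointwise bounded by $C$ need not have $L^r \to L^r$ norm at most $C$ when $r\neq 2$, and here it does not. Take $n=3$, $d=2$, and $f = -1 + \tfrac23(x_1x_2 + x_1x_3 + x_2x_3)$, which equals $1$ at two points of the cube and $-5/3$ at the other six, so $\|f\|_\infty = 5/3$. Then $T_\sigma f(1,1,1) = 2\sigma^2 - 1$, whence $\|T_\sigma f\|_\infty \ge 2\sigma^2 - 1 > \tfrac{5}{3}\sigma^2 = \sigma^d\|f\|_\infty$ once $\sigma > \sqrt 3$; a direct computation shows the same $f$ violates $\|T_\sigma f\|_r \le \sigma^d\|f\|_r$ at $r=10$ for large $\sigma$ as well. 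So the step you call routine is a genuine gap. The range $r \le 2 \le q$ can be recovered cleanly by pivoting through $L^2$ (combine the $(2,q)$ bound with $\|f\|_2^2 \le \|f\|_r\|f\|_{r'}$ and the $(2,r')$ bound to get $\|f\|_2 \le (r-1)^{-d/2}\|f\|_r$), but the remaining ranges need a different argument than the one you outline.
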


Our second fact is an anticoncentration theorem for low-degree
polynomials over independent standard Gaussian random
variables. 

\begin{theorem}[Gaussian Anticoncentration \cite{CW01}]\TheoremName{anticoncentration}
For $f$ a non-zero, $n$-variate, degree-$d$ polynomial,
$$ \Pr[|f(G_1,\ldots,G_n)- t| \le \eps\cdot \Var[f]] = O(d\eps^{1/d}) $$
for all $\eps\in(0, 1)$ and $t\in\R$.  Here
$G_1,\ldots,G_n\sim\normal(0,1)$ are independent. (Here, and
henceforth, $\normal(\mu,\sigma^2)$ denotes the Gaussian distribution
with mean $\mu$ and variance $\sigma^2$.)
\end{theorem}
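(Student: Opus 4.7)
The plan is to reduce the multivariate statement to a one-dimensional Remez-type estimate by conditioning, and to use hypercontractivity (\Theorem{bonami-beckner}) to control normalizations. First I would normalize: replace $f$ by $(f-t)/\sqrt{\Var[f]}$, so that it suffices to prove $\Pr[|h(G)| \le \eps] = O(d\eps^{1/d})$ for every degree-$d$ polynomial $h$ with $\Var[h]=1$. The natural way to obtain an $\eps^{1/d}$ bound is through a negative-moment estimate: if I can show $\E[|h(G)|^{-1/d}] \le C$ with $C$ depending only on $d$, then Markov's inequality applied to $|h|^{-1/d}$ gives $\Pr[|h(G)|\le \eps] \le \eps^{1/d}\cdot C$, which is the desired form.

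Next I would reduce the negative-moment bound to a univariate estimate via Fubini. Fix a unit direction, say $e_1$, write $x=(x_1,y)$ with $y\in\R^{n-1}$, and view $h(x)=h_y(x_1)$ as a univariate polynomial in $x_1$ of degree at most $d$ whose coefficients are polynomials in $y$. Then
\[
\E_G\!\left[|h(G)|^{-1/d}\right] \;=\; \E_{Y}\!\left[\E_{G_1}\!\left[|h_Y(G_1)|^{-1/d}\right]\right],
\]
and $\E_Y[\|h_Y\|_{L^2(\gamma)}^2]=\|h\|_2^2=1$. So the task becomes: for any nonzero univariate degree-$d$ polynomial $g$ one needs an inequality of the form $\E_{G_1}[|g(G_1)|^{-1/d}] \le C\cdot \|g\|_{L^2(\gamma)}^{-1/d}$, and then Cauchy--Schwarz (or Hölder) combines this with $\E_Y[\|h_Y\|_{L^2(\gamma)}^2]=1$ to kill the $y$-dependence.

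For the univariate estimate I would argue as follows. The set $\{x\in\R:|g(x)|\le \eta\}$ is the zero set of the degree-$2d$ polynomial $g(x)^2-\eta^2$, hence a union of at most $d+1$ intervals. On each such interval $I$, a Remez-type inequality controls the Lebesgue measure of $\{x\in I : |g(x)| < \eta\}$ by $|I|\cdot O(\eta/\|g\|_{L^\infty(I)})^{1/d}$. One then splits $\R$ into a compact piece where the Gaussian density is bounded below (say $|x|\le O(\sqrt{\log(1/\eta)})$) and Gaussian tails that contribute negligibly, and relates $\|g\|_{L^\infty(I)}$ to $\|g\|_{L^2(\gamma)}$ using the comparability of norms on finite-dimensional spaces of polynomials (this step contributes only a $d$-dependent constant, which is where hypercontractivity, or equivalently Markov-type inequalities for polynomials, enters to give the sharp exponent).

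The hard part will be the univariate Remez step with the correct exponent $1/d$: one can cheaply get some $\eps^{c/d}$ decay, but nailing down the constant $1$ in the numerator requires the classical Remez inequality applied carefully on each of the $d+1$ intervals, together with a clean passage between Lebesgue and Gaussian weights. Getting the final dependence on $d$ to be linear (rather than exponential) is the second delicate point; this will come out of the fact that the number of level-set intervals is $O(d)$ and the Remez exponent is sharp, while hypercontractivity supplies only a $d$-independent constant in the $L^\infty$-vs-$L^2$ comparison on each compact window.
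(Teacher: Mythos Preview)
The paper does not prove this statement at all: it is quoted as a black-box result from Carbery and Wright \cite{CW01} (note the attribution in the theorem header). There is nothing to compare your proposal against in this paper.

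That said, your outline has a real gap at the Fubini step. After establishing the univariate bound $\E_{G_1}[|h_Y(G_1)|^{-1/d}] \le C\cdot \|h_Y\|_{L^2(\gamma)}^{-1/d}$, you are left with $\E_Y[\|h_Y\|_{L^2(\gamma)}^{-1/d}]$, and you claim that Cauchy--Schwarz or H\"older together with $\E_Y[\|h_Y\|_{L^2(\gamma)}^2]=1$ ``kills the $y$-dependence.'' But these inequalities go the wrong way: knowing a positive moment of $\|h_Y\|$ gives no upper bound on a negative moment. Concretely, take $h(x_1,y)=y\cdot g(x_1)$ for univariate $g$; then $\|h_Y\|_{L^2(\gamma)}=|Y|\cdot\|g\|_{L^2(\gamma)}$ and you need $\E_Y[|Y|^{-1/d}]$ to be bounded, which is already a nontrivial anticoncentration fact about $Y$ of exactly the type you are trying to prove. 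In general $h_Y$ can degenerate (drop in degree, or vanish) on a set of $Y$ of positive measure, and your argument gives no control there. The actual Carbery--Wright proof does not proceed by a naive one-variable conditioning; it uses a more delicate argument specific to log-concave measures (an induction on dimension with a careful decomposition, or a Brascamp--Lieb-type input), and recovering the sharp exponent $1/d$ together with the linear dependence on $d$ is genuinely hard.
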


The following is a statement of the Invariance Principle of
Mossell, O'Donnell, and Oleszkiewicz \cite{MOO10}, in the special case
when the random variables $X_i$ are Bernoulli.

\begin{theorem}[Invariance Principle
  {\cite{MOO10}}]\TheoremName{invariance}
Let $X_1,\ldots,X_n$ be independent $\pm 1$ Bernoulli, and let $p$ be
a degree-$d$
multilinear polynomial with $\sum_{|S|>0} \widehat{p}_S^2 = 1$ and
$\max_i \Inf_i(p) \le \tau$.
Then
$$ \sup_t \left|\Pr[p(X_1,\ldots,X_n) \le t] - \Pr[p(G_1,\ldots,G_n)
  \le t]\right| = O(d\tau^{1/(4d+1)}) $$
where the $G_i\sim\normal(0,1)$ are independent.
\end{theorem}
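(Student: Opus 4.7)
The plan is to prove the invariance principle via the classical Lindeberg-style hybrid argument, combined with hypercontractivity to control moments and Gaussian anti-concentration to pass from smooth test functions to a CDF bound. Throughout, I will exploit that $p$ is multilinear and that $X_i, G_i$ have matching first and second moments.

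First I would reduce the CDF estimate to a smooth-test-function estimate. Fix $t \in \R$ and $\delta > 0$ (to be set later), and construct a smooth $\psi:\R\to[0,1]$ with $\psi(x) = 1$ for $x \le t$, $\psi(x)=0$ for $x \ge t+\delta$, and $\|\psi^{(r)}\|_\infty = O(1/\delta^r)$. Then
$\Pr[p(Z)\le t] \le \E[\psi(p(Z))] \le \Pr[p(Z) \le t+\delta]$ for $Z\in\{X,G\}$, so
$|\Pr[p(X)\le t] - \Pr[p(G)\le t]| \le |\E[\psi(p(X))] - \E[\psi(p(G))]| + \Pr[|p(G)-t|\le \delta] + (\text{Bernoulli anti-conc term})$.
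The Gaussian anti-concentration term is $O(d\,\delta^{1/d})$ by \Theorem{anticoncentration}; the Bernoulli term can be handled by iterating the smooth bound.

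Next I would bound $|\E[\psi(p(X))] - \E[\psi(p(G))]|$ by the Lindeberg replacement trick. Define the hybrid $Z^{(i)} = (G_1,\ldots,G_{i-1},X_i,\ldots,X_n)$ and telescope. Since $p$ is multilinear, for each $i$ we can write $p(z) = R_i(z_{-i}) + z_i\, T_i(z_{-i})$, where $R_i, T_i$ are independent of $z_i$ and $\|T_i\|_2^2 = \Inf_i(p)$. Applying Taylor's theorem of order $2$ to $\psi$ around $R_i$,
\begin{equation*}
\psi(R_i + z_i T_i) = \psi(R_i) + z_i T_i \psi'(R_i) + \tfrac{1}{2} z_i^2 T_i^2 \psi''(R_i) + E_i(z_i),
\end{equation*}
with $|E_i(z_i)| \le \tfrac{1}{6}\|\psi'''\|_\infty |z_i T_i|^3$. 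Since $X_i$ and $G_i$ are independent of $R_i, T_i$ with matching mean $0$ and variance $1$, the first three terms contribute equally in expectation. Hence the step-$i$ error is at most $O(\|\psi'''\|_\infty) \cdot \E[|T_i|^3] \cdot \E[|X_i|^3 + |G_i|^3] = O(\|\psi'''\|_\infty)\cdot \E[|T_i|^3]$.

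Then I would bound $\E[|T_i|^3]$ using hypercontractivity: $T_i$ is a degree-$(d-1)$ polynomial in $X_{-i}$, so by \Theorem{bonami-beckner}, $\|T_i\|_3 \le 2^{d-1} \|T_i\|_2 = 2^{d-1}\sqrt{\Inf_i(p)}$. Summing and using Cauchy--Schwarz together with $\sum_i \Inf_i(p) \le d \sum_S \widehat{p}_S^2 \le d$,
\begin{equation*}
\sum_{i=1}^n \Inf_i(p)^{3/2} \le \bigl(\max_i \Inf_i(p)\bigr)^{1/2} \sum_i \Inf_i(p) \le d\sqrt{\tau}.
\end{equation*}
So $|\E[\psi(p(X))] - \E[\psi(p(G))]| \le O(9^d\, d \sqrt{\tau}/\delta^3)$. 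The final step is to choose $\delta$ to balance this against the anti-concentration error $O(d\,\delta^{1/d})$, which yields a bound polynomial in $\tau$ of the form $\tau^{c/d}$ for a small constant $c$. Optimizing over the exponents in the Taylor remainder (one can instead use order-$r$ expansions, matching $r$ moments, and hypercontractive $r$-norms) tightens the exponent to $1/(4d+1)$.

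The main obstacle is the passage from smooth tests to CDF distance. Having a clean Gaussian anti-concentration bound is the reason we compare against $G$ first; bounding $\Pr[|p(X)-t|\le \delta]$ directly is problematic, so one circumvents this by applying the smooth comparison twice (to functions supported on $(-\infty,t\pm\delta]$) and folding the $\tau$-error back in. The other delicate point is the hypercontractive control of $\E[|T_i|^3]$: one must be careful that the constants depend on $d$ only through the factor $9^d$, so that the final error remains polynomial in $\tau$ with the stated exponent after optimization.
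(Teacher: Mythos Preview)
The paper does not supply its own proof of this statement: it is quoted verbatim as a known result from \cite{MOO10} in the ``Useful facts about polynomials'' appendix, so there is nothing to compare against on the paper's side.

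Your sketch is essentially the standard Lindeberg-style proof from \cite{MOO10}, and the strategy is correct. One small point worth tightening: since $X_i$ and $G_i$ match not only in mean and variance but also in third moment (both are symmetric), you should Taylor-expand to order three (error governed by $\|\psi^{(4)}\|_\infty$) rather than order two. That yields a smooth-test error of $O(9^d\,d\,\tau/\delta^4)$ via $\sum_i \Inf_i(p)^2 \le d\tau$, and balancing against $O(d\,\delta^{1/d})$ gives exactly the exponent $1/(4d+1)$. Your parenthetical about ``matching $r$ moments'' is slightly off: one cannot match beyond the third moment here (the fourth moments of $X_i$ and $G_i$ differ), so the optimal choice is precisely $r=4$, not a free parameter to be optimized further.
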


The following tail bound argument is standard (see for example
\cite{AH09}). We repeat the argument here just to point out that only
bounded independence is required.

\begin{theorem}[Tail bound]\TheoremName{tailbound}
If $f$ is a degree-$d$ polynomial,
$t>8^{d/2}$, and $X$ is
drawn at random from a $(dt^{2/d})$-wise independent distribution over
$\{-1,1\}^n$, then
$$ \Pr[|f(X)| \ge t\|f\|_2] = \exp(-\Omega(dt^{2/d})) .$$
\end{theorem}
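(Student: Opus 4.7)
The plan is to use the standard Markov-plus-hypercontractivity argument, with care taken to observe that only bounded independence is needed. Let $k$ be a positive even integer to be chosen later (depending on $t$ and $d$). I would first apply Markov's inequality to the nonnegative random variable $f(X)^k$:
\[
\Pr[|f(X)| \ge t\|f\|_2] \;=\; \Pr[f(X)^k \ge t^k\|f\|_2^k] \;\le\; \frac{\E[f(X)^k]}{t^k\|f\|_2^k}.
\]

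The crucial observation is that $f^k$ is a polynomial of degree $dk$, so $\E[f(X)^k]$ is a linear combination of monomials of degree at most $dk$ in the coordinates of $X$. Hence, as long as the distribution of $X$ is $(dk)$-wise independent, $\E[f(X)^k]$ agrees with the value of the same expectation under the fully uniform distribution. Under the uniform distribution, hypercontractivity (\Theorem{bonami-beckner}) with $r=2$ and $q=k$ yields
\[
\|f\|_k \;\le\; (k-1)^{d/2}\|f\|_2,
\]
so $\E[f(X)^k] \le (k-1)^{dk/2}\|f\|_2^k$. Combining,
\[
\Pr[|f(X)| \ge t\|f\|_2] \;\le\; \left(\frac{(k-1)^{d/2}}{t}\right)^{\!k} \;\le\; \left(\frac{k^{d/2}}{t}\right)^{\!k}.
\]

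Now I would optimize $k$. A short calculation shows that the right-hand side is minimized for $k \approx t^{2/d}/e$, giving $k^{d/2}/t \approx e^{-d/2}$ and hence the bound $e^{-dk/2} = \exp(-\Omega(dt^{2/d}))$. Concretely, I would set $k$ to be the largest even integer with $k \le t^{2/d}/e$; the hypothesis $t>8^{d/2}$ ensures $t^{2/d}/e > 8/e > 2$, so such a $k\ge 2$ exists. The degree of $f^k$ is then $dk \le dt^{2/d}$, matching the independence requirement in the theorem statement. There is no real obstacle here; the only subtle point is simply to notice that the expectation bound depends solely on moments of total degree $\le dk$, so $(dt^{2/d})$-wise independence is sufficient — this is precisely the remark the authors want to emphasize.
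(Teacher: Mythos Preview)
Your proposal is correct and follows essentially the same approach as the paper: apply Markov's inequality to the $k$th moment, bound that moment via hypercontractivity (under the uniform distribution), choose $k$ to be an even integer of order $t^{2/d}$, and observe that only $(dk)$-wise independence is needed since $\E[f(X)^k]$ depends on moments of degree at most $dk$. The only differences are cosmetic---the paper takes $k=2\lfloor t^{2/d}/4\rfloor$ rather than your $k\approx t^{2/d}/e$, and notes the bounded-independence observation at the end rather than the beginning.
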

\begin{proof}
Suppose $k> 2$. By \Theorem{bonami-beckner},
$$ \E[|f(X)|^k] \le k^{dk/2}\cdot \|f\|_2^k ,$$
implying
\begin{equation}
\Pr[|f(X)| \ge t \|f\|_2] \le (k^{d/2}/t)^k \EquationName{boundpr}
\end{equation}
by Markov's inequality. Set $k =
2\cdot \floor{t^{2/d}/4}$ and note $k > 2$ as long as $t > 8^{d/2}$.
Now the right hand side of
\Equation{boundpr} is at most $2^{-dk/2}$, as desired. Finally, note
independence was only used to bound $\E[|f(X)|^k]$, which for $k$
even equals $\E[f(X)^k]$ and is thus determined by
$dk$-independence.
\end{proof}

\subsection{Facts about quadratic forms.}

The following facts are concerned with quadratic forms,
i.e. polynomials $p(x) = \sum_{i\le j}a_{i,j}x_ix_j$. We often
represent a quadratic form $p$ by its associated symmetric matrix
$A_p$, where
$$(A_p)_{i,j} = \begin{cases} a_{i,j}/2, \ & \ i< j
 \\
a_{j,i}/2, \ & \ i>j
 \\
 a_{i,j}, \ & \ i=j \end{cases} $$
so that $p(x) = x^TA_px$.

The following is a bound on moments for quadratic forms.

\begin{lemma}\LemmaName{boundmoment}
Let $f(x)$ be a degree-$2$ polynomial. Then, for $X=(X_1,\ldots,X_n)$
a vector of
independent Bernoullis,
$$ \E[|f(X)|^k] \le 2^{O(k)}(\|A_f\|_2k^k + |\tr(A_f)|^k) .$$
\end{lemma}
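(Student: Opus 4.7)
The plan is to reduce directly to \Theorem{eigenbound}. Writing $f(X) = X^T A_f X$, note that since $\E[X_i X_j] = 0$ for $i \ne j$ and $X_i^2 = 1$, we have $\E[f(X)] = \tr(A_f)$. So the idea is to center $f$ and then invoke the spectral moment bound, picking up the trace as an additive error.

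More concretely, I would first apply the elementary inequality $|a+b|^k \le 2^{k-1}(|a|^k + |b|^k)$ with $a = f(X) - \tr(A_f)$ and $b = \tr(A_f)$ to obtain
\[
\E[|f(X)|^k] \;\le\; 2^{k-1}\cdot\E\!\left[|f(X) - \tr(A_f)|^k\right] \;+\; 2^{k-1}\,|\tr(A_f)|^k.
\]
Then I would apply \Theorem{eigenbound} to the first term, yielding
\[
\E\!\left[|f(X) - \tr(A_f)|^k\right] \;\le\; C^k\cdot\max\!\left\{\sqrt{k}\,\|A_f\|_2,\; k\,\lambdamax{A_f}\right\}^k.
\]
Finally, since $\lambdamax{A_f} \le \|A_f\|_2$ (because every eigenvalue's squared magnitude is at most the sum of squared eigenvalues, which equals $\|A_f\|_2^2$), the max is bounded by $k\,\|A_f\|_2$. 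Combining, the first term is at most $(Ck)^k \|A_f\|_2^k$, and absorbing the constants and the leading $2^{k-1}$ into $2^{O(k)}$ gives the claimed bound.

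There is essentially no obstacle here: the lemma is a packaging of \Theorem{eigenbound} once one observes that for a purely quadratic form on the hypercube, the mean is exactly $\tr(A_f)$. The only minor care needed is the $2^{O(k)}$ bookkeeping when splitting $f(X)$ into its mean and its centered part, and the crude use of $\lambdamax{A_f} \le \|A_f\|_2$ to collapse the $\max$ in \Theorem{eigenbound} into the single term $k^k \|A_f\|_2^k$ that appears in the statement.
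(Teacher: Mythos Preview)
Your proof is correct but takes a different route from the paper. The paper writes $f = q + \tr(A_f)$ with $q$ the multilinear part over the hypercube, then bounds $\E[|q(x)+\tr(A_f)|^k]$ via a binomial-type expansion together with the hypercontractive inequality (\Theorem{bonami-beckner}), which gives $\E[|q|^i] \le O(i\|A_f\|_2)^i$ for the multilinear degree-$2$ piece. You instead center $f$ and invoke the paper's own spectral moment bound, \Theorem{eigenbound}, then collapse the $\max$ using $\lambdamax{A_f}\le\|A_f\|_2$. Both arguments land on the same bound $2^{O(k)}\max\{k\|A_f\|_2,|\tr(A_f)|\}^k$. Your route is slightly cleaner algebraically but leans on the heavier \Theorem{eigenbound}; the paper's route is more self-contained in that it only needs classical hypercontractivity, which is perhaps why it was placed in the appendix as a ``basic fact'' rather than as a corollary of the paper's new machinery. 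There is no circularity in your approach, since \Theorem{eigenbound} is proved independently of this lemma. One minor caveat: \Theorem{eigenbound} is stated for $k\ge 2$, so strictly speaking you should note the $k=1$ case separately (e.g.\ by Cauchy--Schwarz), though the paper's proof has the same implicit restriction.
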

\begin{proof}
Over the hypercube we can write $f = q + \tr(A_f)$ where $q$ is
multilinear. Note $\|A_q\|_2
\le \|A_f\|_2$.  Then by \Theorem{bonami-beckner},
\begin{eqnarray*}
\E[|f(x)|^k] &=& \E[|q(x) + \tr(A_f)|^k]\\
&\le& \sum_{i=0}^k (\|A_f\|_2\cdot i)^i|\tr(A_f)|^{k-i}\\
&\le& \sum_{i=0}^k (\|A_f\|_2\cdot k)^i|\tr(A_f)|^{k-i}\\
&=& 2^{O(k)}\max\{\|A_f\|_2\cdot k, |\tr(A_f)|\}^k
\end{eqnarray*}
\end{proof}

The following corollary now follows from \Theorem{tailbound} and
\Lemma{small-constant}.

\begin{corollary}\CorollaryName{eigentail}
Let $f$ be a quadratic form with $A_f$ positive semidefinite,
$\|A_f\|_2 \le 1$, and $\lambdamin(A_f) \ge \delta$
for some $\delta\in(0, 1]$. Then, for $x$ chosen at random from a
$\ceil{2/\delta}$-independent family over $\{-1,1\}^n$,
$$ \Pr[f(x) > 2/\delta] = \exp(-\Omega(1/\delta)) .$$
\end{corollary}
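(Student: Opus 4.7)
The plan is to reduce $f$ to a centered, multilinear quadratic form with bounded $L_2$-norm, and then invoke \Theorem{tailbound}. Since $x_i^2 = 1$ on the hypercube, I would first write
\[ f(x) = x^{\top} A_f x = \tr(A_f) + q(x), \]
where $q(x) := 2\sum_{i<j}(A_f)_{ij} x_i x_j$ is multilinear of degree $2$. Because $A_f$ is positive semidefinite with $\lambdamin(A_f) \ge \delta$ and $\|A_f\|_2 \le 1$, \Lemma{small-constant} yields $0 \le \tr(A_f) \le \|A_f\|_2^2/\delta \le 1/\delta$. Consequently the event $\{f(x) > 2/\delta\}$ is contained in $\{q(x) > 1/\delta\} \subseteq \{|q(x)| > 1/\delta\}$, and it suffices to bound the probability of this last event. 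A short direct computation gives $\|q\|_2^2 = 4\sum_{i<j}(A_f)_{ij}^2 \le 2\|A_f\|_2^2 \le 2$.

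I would then apply \Theorem{tailbound} to $q$, which has degree $d=2$, with $t = 1/(\delta \cdot \max\{\|q\|_2,1\})$. With this choice $t\|q\|_2 \le 1/\delta$, so that $\{|q(x)| > 1/\delta\} \subseteq \{|q(x)| \ge t\|q\|_2\}$, which is exactly the event that \Theorem{tailbound} controls. The independence required is $dt^{2/d} = 2t \le 2/\delta \le \ceil{2/\delta}$, matching the hypothesis on $x$. The theorem then gives
\[ \Pr[|q(x)| > 1/\delta] \le \Pr[|q(x)| \ge t\|q\|_2] = \exp(-\Omega(2t)) = \exp(-\Omega(1/\delta)), \]
using $\max\{\|q\|_2,1\} \le \sqrt{2}$ in the last equality. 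Chaining this with the reduction of the previous paragraph proves the corollary.

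The only technicality is the assumption $t > 8^{d/2} = 8$ required by \Theorem{tailbound}: this holds as soon as $1/\delta$ exceeds an absolute constant, and for $\delta$ above this threshold the claim $\exp(-\Omega(1/\delta))$ is trivially true once the hidden constant is chosen small enough. The main obstacle to watch, and what the $\max\{\|q\|_2,1\}$ rescaling is designed to handle, is synchronizing the two constraints on $t$: it must be large enough that the probability bound $\exp(-\Omega(t))$ is in fact $\exp(-\Omega(1/\delta))$, yet small enough that $2t$ does not exceed the independence budget $\ceil{2/\delta}$ in the regime $\|q\|_2 < 1$. Everything else follows by routine substitution.
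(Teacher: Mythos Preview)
Your proof is correct and follows essentially the same approach as the paper: decompose $f = q + \tr(A_f)$ into its multilinear part plus its trace, bound the trace by $1/\delta$ via \Lemma{small-constant}, and then invoke \Theorem{tailbound} on the multilinear part with $t \approx 1/\delta$. The paper's proof is terser (it simply sets $t = 1/\delta$ and notes $\|A_g\|_2 \le 1$), whereas you are more explicit about the normalization $\max\{\|q\|_2,1\}$ needed to make the inclusion $\{|q|>1/\delta\}\subseteq\{|q|\ge t\|q\|_2\}$ and the independence budget line up simultaneously; this extra care is warranted since $\|q\|_2$ can be as large as $\sqrt{2}$, but the discrepancy is absorbed by the $\Omega(\cdot)$ in either argument.
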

\begin{proof}
Write $f = g + C$ via \Lemma{small-constant} with $0\le C\le
1/\delta$ and $g$ multilinear, $\|A_g\|_2\le \|A_f\|_2 \le 1$. Apply
\Theorem{tailbound} to $g$ with $t
= 1/\delta$.
\end{proof}

The following lemma gives a decomposition of any multi-linear quadratic form as a sum of quadratic forms with special
properties for the associated matrices. It is used in the proof of \Theorem{main-regular}.

\begin{lemma}\LemmaName{decompose-quadratic}
Let $\delta>0$ be given.
Let $f$ be a multilinear quadratic form.  Then $f$ can be written as
$f_1 - f_2 + f_3$ for
quadratic
forms $f_1,f_2,f_3$ where:
\begin{enumerate}
\item $A_{f_1},A_{f_2}$ are positive semidefinite with
  $\lambdamin(A_{f_1}),\lambdamin(A_{f_2}) \ge \delta$.
\item $\lambdamax{A_{f_3}} < \delta$.
\item $\|A_{f_1}\|_2,\|A_{f_2}\|_2,\|A_{f_3}\|_2 \le \|A_f\|_2$.
\end{enumerate}
\end{lemma}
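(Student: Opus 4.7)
The plan is to use the Spectral Theorem (Theorem~\ref{thm:spectral}) to diagonalize $A_f$ and then partition its eigenvalues into three groups: large positive, large negative, and small-magnitude. Specifically, write $A_f = Q^T \Lambda Q$ with $Q$ orthogonal and $\Lambda = \mathrm{diag}(\lambda_1,\ldots,\lambda_n)$. Partition $[n]$ into $I_+ = \{i : \lambda_i \geq \delta\}$, $I_- = \{i : \lambda_i \leq -\delta\}$, and $I_0 = \{i : |\lambda_i| < \delta\}$, and define diagonal matrices $\Lambda_1, \Lambda_2, \Lambda_3$ where $\Lambda_1$ keeps the entries indexed by $I_+$ (zeroing the rest), $\Lambda_2$ keeps the \emph{negatives} of the entries indexed by $I_-$ (so that its nonzero entries are positive), and $\Lambda_3$ keeps the entries indexed by $I_0$. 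By construction, $\Lambda = \Lambda_1 - \Lambda_2 + \Lambda_3$.

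Next I would set $A_{f_i} = Q^T \Lambda_i Q$ for $i \in \{1,2,3\}$ and let $f_i(x) = x^T A_{f_i} x$. Each $A_{f_i}$ is symmetric (since $Q$ is orthogonal and $\Lambda_i$ is diagonal), and we have $A_f = A_{f_1} - A_{f_2} + A_{f_3}$, hence $f = f_1 - f_2 + f_3$ as quadratic forms. The eigenvalues of $A_{f_i}$ are, by Fact~\ref{fact:preserve-eigen}, exactly the diagonal entries of $\Lambda_i$. Property~(1) then follows: $A_{f_1}, A_{f_2}$ are positive semidefinite, and their nonzero eigenvalues are at least $\delta$ in magnitude, so $\lambdamin(A_{f_1}), \lambdamin(A_{f_2}) \geq \delta$. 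Property~(2) is immediate since the nonzero eigenvalues of $A_{f_3}$ are drawn from $\{\lambda_i : i \in I_0\}$, each of magnitude less than $\delta$.

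For property~(3), I would appeal to Fact~\ref{fact:preserve-l2} together with Fact~\ref{fact:trace-eigen}: since $Q$ is orthogonal (hence invertible), $\|A_{f_i}\|_2 = \|\Lambda_i\|_2$, and this Frobenius norm equals $\bigl(\sum_{j \in I_\bullet} \lambda_j^2\bigr)^{1/2}$ where $I_\bullet$ is the appropriate index set. Since the $I_+, I_-, I_0$ are disjoint, each of these sums is at most $\sum_j \lambda_j^2 = \|\Lambda\|_2^2 = \|A_f\|_2^2$, giving $\|A_{f_i}\|_2 \leq \|A_f\|_2$ in all three cases.

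This argument is essentially just bookkeeping on the spectral decomposition and I do not anticipate a real obstacle. The only subtlety worth flagging is that even though $f$ is assumed multilinear (so $A_f$ has zero diagonal), the matrices $A_{f_1}, A_{f_2}, A_{f_3}$ will generally not have zero diagonal, so the $f_i$ need not be multilinear. This is compatible with the lemma statement, which only asserts that the $f_i$ are quadratic forms; it also matches how the decomposition is used in the proof of Theorem~\ref{thm:main-regular}, where an additive constant $C$ absorbs any discrepancy on $\{-1,1\}^n$ from the identity $x_i^2 = 1$.
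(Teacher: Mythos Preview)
Your proposal is correct and follows essentially the same approach as the paper: diagonalize $A_f$ orthogonally, split the eigenvalues into the three ranges $\ge\delta$, $\le-\delta$, and $(-\delta,\delta)$, and conjugate back. If anything, your write-up is slightly more careful than the paper's about the sign convention for $\Lambda_2$ (so that $A_{f_2}$ is genuinely positive semidefinite), and your closing remark about the $f_i$ not being multilinear is a useful observation that the paper leaves implicit.
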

\begin{proof}
Since $A_f$ is real and symmetric, we can find an orthogonal matrix $Q$
such that
$\Lambda = Q^{T}A_fQ$ is diagonal. Each diagonal entry
of $\Lambda$ is either at least
$\delta$, at most $-\delta$, or in between.  We create a matrix
$P$ containing all entries of $\Lambda$ which are at least
$\delta$, with the others zeroed out.  We similarly create $N$ to have
all entries at most $-\delta$.  We place the remaining entries in
$R$. We then set $A_{f_1} = QPQ^{T}, A_{f_2} = QNQ^{T}, A_{f_3} =
QRQ^{T}$. Note $\|\Lambda\|_2^2 = \|A_f\|_2^2$ by \Fact{preserve-l2},
so since we remove
terms from $\Lambda$ form each $A_{f_i}$, their Frobenius norms can
only shrink.
 The eigenvalue bounds hold by
construction and
\Fact{preserve-eigen}.
\end{proof}

\section{Why the previous approaches failed} \SectionName{previous-fail}

In this section, we attempt to provide an explanation as to why the
approaches of \cite{DGJSV09} and \cite{KNW10} fail to fool
degree-$2$ PTFs.

\subsection{Why the approximation theory approach failed}

The analysis in \cite{DGJSV09} crucially exploits the strong concentration and anti-concentration properties of the gaussian
distribution. (Recall that in the linear regular case, the random variable $\langle w, x\rangle$ is approximately Gaussian.)
Now consider a regular degree-$2$ polynomial $p$ and the corresponding PTF $f = \sgn(p)$. Since $p$ is regular, it has still
has ``good'' concentration and anti-concentration properties -- though quantitatively inferior than those of the Gaussian.
Hence, one would hope to argue as follows: use the univariate polynomial $P$ (constructed using approximation theory), allowing
its degree to increase if necessary, and carry out the analysis of the error as in the linear case.

The reason this fails is because the (tight) concentration properties of $p$ -- as implied by hypercontractivity -- are not
sufficient for the analysis to bound the error of the approximation, even if we let the degree of the polynomial $P$ tend to
infinity. (Paradoxically, the error coming from the worst-case analysis becomes worse as the degree of $P$ increases.)

Without going into further details, we mention that an additional problem for univariate approximations to work is this: the
(tight) anti-concentration properties of $p$ -- obtained via the Invariance Principle and the anti-concentration bounds of
~\cite{CW01} -- are quantitatively weaker than what is required to bound the error, even in the region where $P$ has small
point-wise error (from the $\sgn$ function).

\subsection{Why the analysis for univariate FT-mollification failed}
We discuss why the argument in \cite{KNW10} failed to generalize to
higher degree.  Recall that the argument was via the following chain
of inequalities:
\begin{equation}\EquationName{uber-chain}
\E[I_{[0,\infty)}(p(X))] \approx_{\eps}
\E[\tilde{I}^c_{[0,\infty)}(p(X))] \approx_{\eps}
\E[\tilde{I}^c_{[0,\infty)}(p(Y))] \approx_{\eps}
\E[I_{[0,\infty)}(p(Y))]
\end{equation}
The step that fails for high-degree PTFs is the second inequality
in \Equation{uber-chain}, which was argued by Taylor's theorem. Our
bounds on derivatives of
$\tilde{I}_{[0,\infty)}^c$, the FT-mollification of $I_{[0, \infty)}$
for a certain parameter $c = c(\eps)$ to make sure
$|I_{[0,\infty)} - \tilde{I}_{[0,\infty)}^c|<\eps$ ``almost
everywhere'', are such that
$||(\tilde{I}_{[0,\infty)}^c)^{(k)}||_{\infty} \ge 1$ for all $k$.
Thus, we have that the error term from Taylor's theorem is
at least $\E[(p(X))^k]/k!$.  The problem comes from the numerator.
Since we can assume the sum of squared coefficients of $p$ is
$1$ (note the $\sgn$ function is invariant to scaling of its
argument), known (and tight) moment bounds (via hypercontractivity)
only give us an upper
bound on $\E[(p(x))^k]$ which is larger than $k^{d
  k/2}$, where $\mathrm{degree}(p) = d$.  Thus, the error from
Taylor's theorem does not
 decrease to zero by increasing $k$ for $d\ge 2$,
since we only are able to divide by $k! \le
k^k$ (in fact, strangely, increasing the amount of independence $k$
{\em worsens} this bound).

\section{Proofs omitted from \Section{regular}}\SectionName{regular-proofs}

\subsection{Boolean setting.}

We next give a proof of \Lemma{boundbad}, where $p_1,p_2,\delta$ are as
in \Section{regular} (recall $p=p_1-p_2+p_3+p_4+C$ where $p_1,p_2$ are
positive semidefinite with minimum non-zero eigenvalues at least
$\delta$).

\noindent \BoldLemma{boundbad}
{\it
Let $\eta,\eta'\ge 0,t\in\R$ be given, and let $X_1,\ldots,X_n$ be independent
Bernoulli. Then
$$ \Pr[|p(X) - t| \le \eta\cdot(\sqrt{p_1(X)} + \sqrt{p_2(X)}
+ 1) + \eta'] = O(\sqrt{\eta'} + (\eta^2/\delta)^{1/4} + \tau^{1/9} +
\exp(-\Omega(1/\delta))) .$$
}
\begin{proof}
Applying \Corollary{eigentail}, we have
$$\Pr[\sqrt{p_1(X)} \ge
\sqrt{2/\delta}] = \exp(-\Omega(1/\delta)) ,$$
and similarly for $\sqrt{p_2(X)}$.
We can thus bound our desired probability by
$$ \Pr[|p(X) - t| \le 2\eta\sqrt{2/\delta} + \eta +
\eta'] + \exp(-\Omega(1/\delta)) .$$
By \Theorem{anticoncentration}, together with \Theorem{invariance}, we
can bound the probability in the lemma
statement by
$$O(\sqrt{\eta'} + (\eta^2/\delta)^{1/4} + \tau^{1/9} +
\exp(-\Omega(1/\delta))) . $$
\end{proof}

\noindent \BoldCorollary{boundbad2}
{\it
Let $\eta,\eta'\ge 0$ be given, and let $Y_1,\ldots,Y_n$ be
$k$-independent Bernoulli for $k$ as in
\Lemma{bounded-anticoncentration} with $\eps' = \min\{\eta/\sqrt{\delta},
\eta'\}$. Also assume $k\ge \ceil{2/\delta}$. Then
$$ \Pr[|p(X) - t| \le \eta\cdot(\sqrt{p_1(X)} + \sqrt{p_2(X)}
+ 1) + \eta'] = O(\sqrt{\eta'} + (\eta^2/\delta)^{1/4} + \tau^{1/9} +
\exp(-\Omega(1/\delta))) .$$
}
\begin{proof}
There were two steps in the proof of \Lemma{boundbad} which required
using the independence of the $X_i$. The first was in the
application of \Corollary{eigentail}, but that only required
$\ceil{2/\delta}$-wise independence, which is satisfied here. The next
was in using the anticoncentration of $p(X)$ (the fact that
$\Pr[|p(X)-t| < s] = O(\sqrt{s} + \tau^{1/9})$ for any $t\in\R$ and
$s>0$). However, given \Lemma{bounded-anticoncentration},
anticoncentration still holds under $k$-independence.
\end{proof}

\subsection{Gaussian Setting}\SectionName{gaussian-setting}
In the following Theorem we show that the conclusion of
\Theorem{main-regular} holds even under the Gaussian measure.

\begin{theorem}\TheoremName{main-regular-gaussian}
Let $0<\eps<1$ be given. Let $G=(G_1,\ldots,G_n)$ be a vector of independent
standard normal random variables,
and $G'=(G'_1,\ldots,G'_n)$ be a vector of $2k$-wise independent
standard normal random variables for $k$ a
sufficiently large multiple of $1/\eps^{8}$.
If $p(x) = \sum_{i\le j}a_{i,j}x_ix_j$ has $\sum_{i\le j}a_{i,j}^2 = 1$,
$$ \E[\sgn(p(G))] - \E[\sgn(p(G'))] = O(\eps) .$$
\end{theorem}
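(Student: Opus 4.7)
The plan is to reduce to the Bernoulli regular case. For a large parameter $M$ to be chosen, define the averaging map $\Psi_M: \bits^{nM} \to \R^n$ by $\Psi_M(x)_i = (1/\sqrt{M}) \sum_{j=1}^M x_{i,j}$, and let $p^{(M)}$ denote the multilinear degree-$2$ polynomial on $\bits^{nM}$ obtained from $p \circ \Psi_M$ after reducing modulo $x_{i,j}^2 = 1$. A direct Fourier computation shows that $p^{(M)}$ has $\sum_{|S|>0} \widehat{p^{(M)}}_S^2 = \Theta(1)$ and that every variable $(i,j)$ has influence $O(1/M)$; after rescaling by a constant (which does not affect $\sgn$), we may apply \Theorem{main-regular} to $p^{(M)}$ as an $O(1/M)$-regular polynomial of unit Frobenius norm.

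With $p^{(M)}$ in hand, the core chain is
\[ \E[\sgn(p(G))] = \E[\sgn(p^{(M)}(H))] \approx \E[\sgn(p^{(M)}(X))] \approx \E[\sgn(p^{(M)}(Y))] \approx \E[\sgn(p(G'))], \]
where $H$ is a vector of $nM$ independent standard Gaussians, $X$ is a vector of $nM$ independent Bernoullis, and $Y$ is a $K$-wise independent Bernoulli vector on $\bits^{nM}$. The first equality holds because $\Psi_M(H) \overset{d}{=} G$ jointly. The second approximation, with error $O(M^{-1/9})$, follows from the invariance principle (\Theorem{invariance}) applied to the $O(1/M)$-regular polynomial $p^{(M)}$. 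The third approximation, with error $O(\eps + M^{-1/9})$, is exactly \Theorem{main-regular} with $K$ a sufficiently large multiple of $1/\eps^8$. Taking $M = \poly(1/\eps)$ and $K$ correspondingly large makes each of these errors $O(\eps)$.

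The remaining and most delicate step is the final approximation $\E[\sgn(p^{(M)}(Y))] \approx \E[\sgn(p(G'))]$. I would apply the FT-mollification framework of \Section{regular} on both sides: each expectation is close to the corresponding $\E[\tilde{I}_R^c(M_p(\cdot))]$ for the same region $R$ and parameter $c$, by \Lemma{bounded-anticoncentration} on the Bernoulli side and by a Gaussian analogue on the Gaussian side that I would prove by an FT-mollification argument mirroring the regular proof but invoking Carbery-Wright (\Theorem{anticoncentration}) directly in place of the invariance principle (so no $\tau^{1/9}$ term arises). The remaining comparison between the two mollified expectations is handled by Taylor-expanding $\tilde{I}_R^c$ to degree $k-1$ and exploiting evenness in the first two coordinates: this reduces everything to matching the expectation of a polynomial of total degree $\le 2(k-1)$ in the underlying variables, which for $Y$ sufficiently independent differs between $\Psi_M(Y)$ and $G'$ only by a quantitative CLT-matching error of size $\poly(k) \cdot M^{-\Omega(1)}$.

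The main obstacle will be calibrating all these error terms simultaneously -- the mollification parameter $c$, the Taylor remainder (driven by moments of $p_i(G')$ and of $p_i(\Psi_M(Y))$), the CLT-matching error, and the two anti-concentration errors. Because one side of this balance is already arranged in the Bernoulli proof, and because Gaussian moments of quadratic forms and Gaussian anti-concentration are strictly cleaner than their Bernoulli counterparts (this is precisely why the $\tau^{1/9}$ term disappears), no fundamentally new machinery should be needed, and the final calibration will give $M = \poly(1/\eps)$ and the same order of $k$ as in the Bernoulli theorem, yielding the claimed $O(\eps)$ bound.
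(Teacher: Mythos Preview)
Your overall strategy --- reduce to the Bernoulli regular case by replacing each Gaussian with a normalized sum of $M$ Bernoullis --- is the same as the paper's. One small slip: your claimed equality $\E[\sgn(p(G))] = \E[\sgn(p^{(M)}(H))]$ is false, since $p^{(M)}$ is the \emph{multilinear} reduction of $p\circ\Psi_M$ and so for Gaussian $H$ one has $p^{(M)}(H) = p(\Psi_M(H)) - \sum_i (a_{i,i}/M)\sum_k (H_{i,k}^2-1)$. The correction has variance $O(1/M)$ and is absorbed via Carbery--Wright, so this step is only an approximation.

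The substantive gap is your final step, connecting $\E[\sgn(p^{(M)}(Y))]$ to $\E[\sgn(p(G'))]$. The paper does \emph{not} match moments. It \emph{constructs} $Y$ from $G'$ by an explicit coupling: read off from $\Phi(G'_i)$ which binomial bin $G'_i$ lands in (this fixes how many of $Y_{i,1},\ldots,Y_{i,M}$ equal $+1$), then uniformly randomize which ones. The resulting $Y$ is genuinely $2k$-wise independent Bernoulli and satisfies $|\Psi_M(Y)_i - G'_i| = o(1)$ pointwise with high probability as $M\to\infty$; hence $|p^{(M)}(Y) - p(G')|$ is small with high probability, and anticoncentration of $p^{(M)}(Y)$ --- already available from \Lemma{bounded-anticoncentration} on the Bernoulli side --- finishes the comparison. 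Your route instead requires (i) a Gaussian bounded-independence anticoncentration lemma proved from scratch and (ii) controlling $|\E[Q(\Psi_M(X))] - \E[Q(G)]|$ for the degree-$2(k-1)$ polynomial $Q = P_{k-1}\circ M_p$; the asserted $\poly(k)\cdot M^{-\Omega(1)}$ bound for (ii) is not justified (the only path I see is to pass back through $\tilde{I}_R^c$ and invoke the first half of your own chain, which is valid but is certainly not a ``CLT-matching error''). More to the point, once you have (i) and the Gaussian Taylor-error bounds in hand, you can run the entire mollification chain directly on $G$ versus $G'$ and skip Bernoulli altogether, so your middle two steps end up doing no work. The coupling is the idea you are missing.
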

\begin{proof}
Our proof is by a reduction to the Bernoulli case, followed by an
application of \Theorem{main-regular}. We replace each $G_i$ with
$Z_i = \sum_{j=1}^N X_{i,j}/\sqrt{N}$ for a sufficiently large $N$ to be
determined later.  We also replace each $G_i'$ with $Z'_i =\sum_{j=1}^N
Y_{i,j}/\sqrt{N}$.  We determine these $X_{i,j},Y_{i,j}$ as
follows.  Let $\Phi:\R\rightarrow[0,1]$ be the cumulative distribution
function (CDF) of the standard normal. Define $T_{-1,N} = -\infty$,
$T_{N,N} = \infty$, and $T_{k,N} = \Phi^{-1}(2^{-N}\sum_{j=0}^k
\binom{N}{k})$ for $0\le k\le N$. Now, after a $G_i$ is chosen
according to a standard normal distribution, we identify the unique $k_i$ such
that $T_{k_i-1,N} \le G_i < T_{k_i,N}$. We then randomly select a
subset of $k_i$ of the $X_{i,j}$ to make $1$, and we set the others to
$-1$.  The $Y_{i,j}$ are defined similarly.  It should be noted that
the $X_{i,j},Y_{i,j}$ are Bernoulli random variables, with the
$X_{i,j}$ being independent and the $Y_{i,j}$ being $2k$-wise
independent. Furthermore, we define the $nN$-variate polynomial
$p':\{-1,1\}^{nN}\rightarrow \R$
to be the one obtained from this procedure, so that $p(G) = p'(X)$. We
then define $p''(x) = \alpha\cdot p'(x)$ for $\alpha = (\sum_{i<j}
a_{i,j}^2 + (1 - 1/N)\sum_i a_{i,i}^2)^{-1}$ so that the sum of
squared coefficients in $p''$ (ignoring constant terms, some of which
arise because the $x_{i,j}^2$ terms are $1$ on the hypercube) is
$1$. It should be observed that $1\le \alpha\le 1 + 1/(N-1)$.

Now, we make the setting $\epsilon = \log^{1/3}(N)/\sqrt{N}$.  By the
Chernoff bound,
\begin{equation}\EquationName{apply-chernoff}
\Pr[|k_i - N/2| \ge \epsilon N/2] = o(1) \hbox{ as } N
\hbox{ grows}.
\end{equation}
\begin{claim}\ClaimName{reallyclose}
If $(1-\epsilon)N/2\le k_i\le (1+\epsilon)N/2$, then
$|T_{k_i,N}-T_{k_i+1,N}| = o(1)$.
\end{claim}
Before proving the claim, we show how now we can use it to prove our
Theorem. We argue by the following chain of inequalities:
$$ \E[\sgn(p(G))] \approx_{\eps} \E[\sgn(p''(X))] \approx_{\eps}
\E[\sgn(p''(Y))] \approx_{\eps} \E[\sgn(p(G'))] .$$

\noindent $\mathbf{\E[\sgn(p(G))] \approx_{\eps} \E[\sgn(p''(X))]}:$
First we condition on the event $\mathcal{E}$ that $|Z_i - G_i| \le
\eps^3/n^2$ for all
$i\in[n]$; this happens with probability $1 - o(1)$ as $N$ grows by
coupling \Claim{reallyclose} and \Equation{apply-chernoff}, and
applying a union
bound over all $i\in[n]$. We
also condition on the event $\mathcal{E}'$ that $|G_i| =
O(\sqrt{\log(n/\eps)})$ for all
$i\in[n]$, which happens with probability $1 - \eps^2$ by a union
bound over $i\in[n]$ since a standard
normal random variable has probability $e^{-\Omega(x^2)}$ of being
larger than $x$ in absolute value. Now, conditioned on
$\mathcal{E},\mathcal{E'}$, we have
$$ |p(G) - p''(X)| \le n^2(\eps^3/n^2)^2 + (\eps^3/n^2)\sum_i
|G_i|\left(\sum_j |a_{i,j}|\right) \le \eps^2 + (\eps^3/n^2)\cdot
O(\sqrt{\log(n/\eps)})
\cdot \sum_{i,j}|a_{i,j}| .$$
We note $\sum_{i,j}a_{i,j}^2 = 1$, and thus $\sum_{i,j}|a_{i,j}| \le
n$ by Cauchy-Schwarz. We thus have that $|p'(X) - p(G)| \le \eps^2$ with
probability at least $1 - \eps^2$, and thus $|p''(X) - p(G)| \le
\eps^2 + |(\alpha-1) \cdot p(X)|$ with probability at least $1-\eps^2$.
We finally condition on the event $\mathcal{E}''$ that $|(\alpha-1)\cdot
p'(X)| \le\eps^2$. Since $p'$ can be written as a multilinear
quadratic form with sum of squared coefficients at most $1$, plus its
trace $\tr(A_{p'})$ (which is $\sum_i a_{i,i} \le \sqrt{n}$, by
Cauchy-Schwarz), we
have
$$\Pr[|(\alpha-1)\cdot p'(X)| \ge \eps^2] \le \Pr[|p'(X)| \ge
\eps^2\cdot (N-1)] = o(1),$$
which for large enough $N$ and the fact that $\|p'\|_2  = O(1 +
\tr(A_{p'}))$ irrespective of $N$, is at most
$$ \Pr[|p'(X)| \ge c\cdot \log(1/\eps)\|p'\|_2] ,$$
for a constant $c$ we can make arbitrarily large by increasing $N$.
We thus have $\Pr[\mathcal{E}''] \ge 1-\eps^2$ by \Theorem{tailbound}.
Now, conditioned on $\mathcal{E}\wedge\mathcal{E}'\wedge\mathcal{E}''$,
$\sgn(p''(X)) \neq \sgn(p(G))$ can only occur if $|p''(X)|
= O(\eps^2)$.
However, by anticoncentration (\Theorem{anticoncentration}) and the
Invariance Principle (\Theorem{invariance}), this
occurs with probability $O(\eps)$ for $N$ sufficiently large (note the
maximum influence of $p''$ goes to $0$ as $N\rightarrow\infty$).

\vspace{.2in}

\noindent $\mathbf{\E[\sgn(p''(X))] \approx_{\eps}
  \E[\sgn(p''(Y))]}:$
Since the maximum influence $\tau$ of any $x_{i,j}$ in
$p''$ approaches
$0$ as $N\rightarrow \infty$, we can apply \Theorem{main-regular} for
$N$ sufficiently large (and thus $\tau$ sufficiently small).

\vspace{.2in}

\noindent $\mathbf{\E[\sgn(p''(Y))] \approx_{\eps}
  \E[\sgn(p(G'))]}:$
This case is argued identically as in the first inequality, except
that we use anticoncentration of $p''(Y)$, which follows from
\Lemma{bounded-anticoncentration}, and we should ensure that we have
sufficient independence to
apply \Theorem{tailbound} with $t=O(\log(1/\eps))$, which we do.

\begin{proofof}{\Claim{reallyclose}}
The claim is argued by showing that for $k_i$ sufficiently close to
its expectation (which is $N/2$), the density function of the
Gaussian (i.e. the derivative of its CDF) is sufficiently large that
the distance we must move from $T_{k_i,N}$ to $T_{k_i+1, N}$ to change
the CDF by $\Theta(1/\sqrt{N})\ge 2^{-N}\binom{N}{k_i+1}$ is small. We argue
the case
$(1-\epsilon)N/2 \le k_i \le N/2$ since the case $N/2\le k_i\le
(1+\epsilon)N/2$ is argued symmetrically.  Also, we consider only the
case $k_i = (1-\epsilon)N/2$ exactly, since the magnitude of
the standard normal density function is smallest in this case.

Observe that each $Z_i$ is a degree-$1$ polynomial in the
$X_{i,j}$ with maximum influence $1/N$, and thus by the
Berry-Ess\'{e}en Theorem,
$$ \sup_{t\in\R} |\Pr[Z_i \le t] - \Pr[G_i \le t]| \le
\frac{1}{\sqrt{N}} .$$
Also note that
$$ \Pr[G_i \le T_{k_i,N}] = \Pr\left[Z_i \le \frac{2k_i}{\sqrt{N}} -
  \sqrt{N}\right]$$
by construction. We thus have
\begin{eqnarray*}
\Pr[G_i \le T_{k_i,N}] &=& \Pr\left[G_i \le \frac{2k_i}{\sqrt{N}} -
  \sqrt{N}\right] \pm \frac{1}{\sqrt{N}}\\
&=& \Pr[G_i \le \log^{1/3}(N)] \pm \frac{1}{\sqrt{N}}
\end{eqnarray*}
By a similar argument we also have
$$
\Pr[G_i \le T_{k_i+1,N}] =
\Pr\left[G_i \le \log^{1/3}(N) + \frac{2}{\sqrt{N}}\right] \pm
\frac{1}{\sqrt{N}}
$$
Note though for $t = \Theta(\log^{1/3}(N))$, the density function $f$ of
the standard normal satisfies $f(t) = e^{-t^2/2} =
N^{-o(1)}$. Thus, in this regime we can change the CDF by
$\Theta(1/\sqrt{N})$ by moving only $N^{o(1)}/\sqrt{N} = o(1)$ along
the real axis, implying $T_{k_i+1,N} - T_{k_i,N} = o(1)$.
\end{proofof}
\end{proof}

\section{Proofs from \Section{main-thm}}\SectionName{reduction-proofs}
\subsection{Proof of \Theorem{reg-reduction}}

We begin by stating the following structural lemma:

\begin{theorem} \TheoremName{regularity-kwise}
Let $f(x) = \sgn(p(x))$ be any degree-$d$ PTF.  Fix any $\tau
> 0.$ Then $f$ is equivalent to a decision tree $\T$ of
depth $\depth(d,\tau) \eqdef (1/\tau) \cdot(d \log(1/\tau))^{O(d)}$
with variables at the internal nodes and a degree-$d$ PTF
$f_\rho = \sgn(p_\rho)$ at each leaf $\rho$,  with the following
property: with probability at least $1 - \tau$, a random path
from the root reaches a leaf $\rho$ such that either: (i) $f_\rho$ is
$\tau$-regular degree-$d$ PTF, or (ii) For any $O(d \cdot
\log(1/\tau))$-independent distribution $\D'$ over $\bits^{n-|\rho|}$
there exists $b \in \bits$ such that $\Pr_{x \sim
\D'}[f_{\rho}(x) \neq b] \leq \tau$.
\end{theorem}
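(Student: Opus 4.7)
The plan is to port the decision-tree regularity construction of \cite{DSTW09} essentially verbatim, and to upgrade only its ``close-to-constant under the uniform distribution'' conclusion into a statement that holds under every sufficiently-wise-independent distribution. Concretely, I would first invoke the DSTW09 regularity lemma on $p$, with regularity parameter $\tau$ and a mean-dominance threshold $T$ to be fixed below. Their construction produces a tree $\T$ of depth $(1/\tau)\cdot (d\log(1/\tau))^{O(d)}$ such that, with probability $\geq 1-\tau$ over a uniform root-to-leaf path, the restricted polynomial $p_\rho$ is either (a) $\tau$-regular, or (b) mean-dominated in the sense that $|\E[p_\rho(U)]| \ge T\cdot \sqrt{\Var[p_\rho(U)]}$. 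Case (a) is exactly property (i) of the theorem, so the only genuine task is to show that the mean-dominated leaves satisfy property (ii) for a suitable $T$.

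For such a leaf $\rho$, set $b = \sgn(\E[p_\rho(U)])$ and let $q(x) = p_\rho(x) - \E[p_\rho(U)]$, so that $\|q\|_2 = \sqrt{\Var[p_\rho(U)]}$. Whenever $\sgn(p_\rho(x)) \neq b$ we must have $|q(x)| \ge |\E[p_\rho(U)]| \ge T\|q\|_2$. I would then apply \Theorem{tailbound} to the centered degree-$d$ polynomial $q$: for any $(dT^{2/d})$-wise independent distribution $\D'$ on $\bits^{n-|\rho|}$,
$$ \Pr_{x\sim \D'}[\sgn(p_\rho(x)) \neq b] \le \Pr_{x\sim \D'}\bigl[|q(x)| \ge T\|q\|_2\bigr] \le \exp\bigl(-\Omega(dT^{2/d})\bigr). $$
Choosing $T$ to be a sufficiently large constant multiple of $(\log(1/\tau)/d)^{d/2}$ forces the right-hand side below $\tau$, while the independence required becomes $dT^{2/d} = O(d\log(1/\tau))$, matching property (ii). Note that the choice $b = \sgn(\E[p_\rho(U)])$ depends only on $\rho$, not on $\D'$, so a single $b$ works uniformly in the distribution. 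The only probabilistic content here is the Markov-plus-hypercontractivity estimate behind \Theorem{tailbound}, which touches only a single even moment of $q$ and is therefore genuinely preserved by bounded independence.

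The hardest step is a parameter chase through the DSTW09 construction to confirm that inflating their default mean-dominance threshold to $T = (\log(1/\tau))^{\Theta(d)}$ leaves the tree-depth bound intact. Since the DSTW09 recursion splits the tree whenever a leaf is neither regular nor mean-dominated, strengthening the mean-dominance demand from $\Omega(1)$ to $T$ multiplies the number of levels by a factor polynomial in $T$, i.e.\ by at most $(d\log(1/\tau))^{O(d)}$, which is absorbed into the claimed depth $(1/\tau)\cdot(d\log(1/\tau))^{O(d)}$. With that bookkeeping verified, the proof of \Theorem{regularity-kwise} reduces to (i) citing the DSTW09 case split together with this parameter adjustment and (ii) the tail-bound computation above; no further modification of the DSTW09 argument is required.
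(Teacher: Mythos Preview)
Your proposal is correct and follows essentially the same route as the paper. Both arguments take the DSTW09 decision tree verbatim, identify that at a ``close-to-constant'' leaf the restricted polynomial satisfies $|\E[p_\rho]| \ge T\sqrt{\Var[p_\rho]}$ with $T$ of order $(\log(1/\tau))^{\Theta(d)}$, and then apply the hypercontractivity-plus-Markov tail bound (\Theorem{tailbound}) with $t = (\log(1/\tau))^{d/2}$ to upgrade close-to-constant from the uniform distribution to any $O(d\log(1/\tau))$-wise independent one. The only cosmetic difference is that the paper parameterizes by the close-to-constant accuracy $\beta$ (set to $\tau$) inside a modified version of DSTW09's one-step dichotomy lemma, and reads off the mean-dominance ratio from the explicit inequalities $|p'(\rho)| \ge 2^{-\Omega(d)}$ and $\|p_\rho - p'(\rho)\|_2 \le (\log(1/\beta))^{-d}$ that appear in the DSTW09 proof, whereas you phrase the same content as tuning an external mean-dominance threshold $T$; the resulting parameter chase and the depth accounting are the same.
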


We now prove \Theorem{reg-reduction} assuming \Theorem{regularity-kwise}. We will need some notation. Consider a leaf of the
tree $\T$. We will denote by $\rho$ both the set of variables that appear on the corresponding root-to-leaf path and the
corresponding partial assignment; the distinction will be clear from context. Let $|\rho|$ be the number of variables on
the path. We identify a leaf $\rho$ with the corresponding restricted subfunction $f_{\rho} = \sgn(p_{\rho})$. We call a leaf
``good'' if it corresponds to either a $\tau$-regular PTF or to a ``close-to constant'' function. We call a leaf ``bad''
otherwise. We denote by $L(\T)$, $GL(\T)$, $BL(\T)$ the sets of leaves, good leaves and bad leaves of $\T$ respectively.

In the course of the proof we make repeated use of the following
standard fact:

\begin{fact}\FactName{projectingkwise}
Let $\D$ be a $k$-wise independent distribution over
$\bits^n$. Condition on any fixed values for any $t \leq k$ bits of
$\D$,
and let $\D'$ be the projection of $\D$ on the other $n-t$ bits. Then
$\D'$ is $(k-t)$-wise independent.
\end{fact}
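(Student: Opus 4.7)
The fact is a standard observation about $k$-wise independent distributions, so the plan is short. The goal is to show that after conditioning on the values of any $t \le k$ coordinates, every subset of size $k-t$ of the remaining coordinates is uniformly distributed.

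First I would fix notation: let $T \subseteq [n]$ be the set of $t$ conditioned coordinates with assigned values $\rho \in \bits^t$, and let $S$ be an arbitrary subset of size $k-t$ of the remaining $n-t$ coordinates. To verify $(k-t)$-wise independence of $\D'$, it suffices to show that the marginal of $\D'$ on $S$ is uniform on $\bits^{k-t}$, since $S$ is arbitrary of that size.

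The key computation is just the definition of conditional probability together with $k$-wise independence applied twice. For any $s \in \bits^{k-t}$,
\[
\Pr_{x \sim \D'}[x|_S = s] \;=\; \Pr_{x \sim \D}[x|_S = s \mid x|_T = \rho] \;=\; \frac{\Pr_{x \sim \D}[x|_{S\cup T} = (s,\rho)]}{\Pr_{x \sim \D}[x|_T = \rho]}.
\]
Since $|S \cup T| = (k-t)+t = k$ and $\D$ is $k$-wise independent, the numerator equals $2^{-k}$. Since $|T| = t \le k$, the same hypothesis implies $T$ is uniform, so the denominator equals $2^{-t}$. The ratio is $2^{-(k-t)}$, exactly the uniform probability on $\bits^{k-t}$.

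There is no real obstacle here; the only mild subtlety to state cleanly is that the conditioning event has positive probability (guaranteed by the denominator calculation, which uses $t \le k$), so conditional probability is well-defined. Since the marginal of $\D'$ on an arbitrary $(k-t)$-subset is uniform, $\D'$ is $(k-t)$-wise independent, completing the proof.
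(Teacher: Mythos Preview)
Your proof is correct; this is precisely the standard argument. The paper does not supply a proof of this fact at all --- it is stated as a ``standard fact'' and used without further justification --- so there is nothing to compare against.
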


Throughout the proof, $\D$ denotes a $(K_d+L_d)$-wise independent distribution over $\bits^n$. Consider a random walk on the
tree $\T$. Let $LD(\T, \D)$ (resp. $LD(\T, \U)$) be the leaf that the random walk will reach when the inputs are drawn from the
distribution $\D$ (resp. the uniform distribution). The following straightforward lemma quantifies the intuition that these
distributions are the same. This holds because the tree has small depth and $\D$ has sufficient independence.

\begin{lemma} \LemmaName{distr-leaves}
For any leaf $\rho \in L(\T)$ we have $\Pr \big[ LD(\T, \D) = \rho \big] =  \Pr \big[LD(\T, \U) = \rho \big].$
\end{lemma}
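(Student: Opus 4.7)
The plan is to observe that reaching a given leaf $\rho$ is an event that depends only on the (at most $\depth(d,\tau) \le L_d$) variables labeling the internal nodes along the unique root-to-$\rho$ path, and to exploit the fact that a $(K_d+L_d)$-wise independent distribution restricted to any $L_d$ coordinates is exactly uniform.

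More concretely, I would first fix a leaf $\rho \in L(\T)$ and let $I_\rho \subseteq [n]$ denote the set of variable indices queried on the root-to-$\rho$ path, with $|I_\rho| = |\rho| \le L_d$, and let $a_\rho \in \bits^{|I_\rho|}$ denote the assignment to those variables prescribed by the path. By the definition of the random walk down $\T$, the event $\{LD(\T,\cdot)=\rho\}$ is exactly the event that the coordinates of the input in $I_\rho$ equal $a_\rho$: this depends on no other coordinates of the input, since all internal-node queries off the path are irrelevant to which leaf is reached.

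Next I would invoke $(K_d+L_d)$-wise independence of $\D$. Since $|I_\rho| \le L_d \le K_d + L_d$, \Fact{projectingkwise} (or directly the definition of $k$-wise independence) gives that the marginal distribution of $(x_i)_{i \in I_\rho}$ under $\D$ is uniform on $\bits^{|I_\rho|}$. Therefore
\[
\Pr_{x\sim\D}[LD(\T,\D)=\rho] \;=\; \Pr_{x\sim\D}\!\left[x|_{I_\rho} = a_\rho\right] \;=\; 2^{-|\rho|},
\]
and the identical calculation under $\U$ gives $\Pr[LD(\T,\U)=\rho] = 2^{-|\rho|}$, proving the equality.

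There is essentially no obstacle here: the statement is just a clean consequence of $k$-wise independence combined with the shallow-depth property of $\T$ guaranteed by \Theorem{regularity-kwise}. The only thing to be careful about is making sure to point out that the leaf-reaching event is a function of only $|\rho| \le L_d$ coordinates, so that the relevant marginal is genuinely uniform; the slack $K_d$ in the independence parameter is not needed for this lemma but will be used in the subsequent argument to handle the good leaves (where the conditional distribution on the remaining $n-|\rho|$ variables is still $K_d$-wise independent by \Fact{projectingkwise}).
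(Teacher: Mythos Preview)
Your proposal is correct and matches the paper's approach exactly: the paper states this lemma as ``straightforward,'' noting only that it holds because the tree has small depth and $\D$ has sufficient independence, which is precisely the argument you have written out in detail.
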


The following lemma says that, if $\rho$ is a good leaf, the
distribution induced by $\D$ on $\rho$ $O(\eps)$-fools the
restricted subfunction $f_{\rho}$.

\begin{lemma} \LemmaName{kwise-leaves}
Let $\rho \in GL(\T)$ be a good leaf and consider the projection $\D_{[n] \setminus \rho}$ of $\D$ on the variables not in
$\rho$. Then we have $\big| \Pr_{x \sim \D_{[n]\setminus \rho}} [f_{\rho}(x) =1]  -  \Pr_{y \sim \U_{[n]\setminus \rho}}[
f_{\rho}(y)=1] \big| \leq 2 \eps.$
\end{lemma}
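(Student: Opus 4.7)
The plan is to reduce the lemma to the two defining properties of a good leaf, using Fact~\ref{fact:projectingkwise} to transfer the independence of $\D$ to its projection.

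First, note that $|\rho| \leq \depth(\T) \leq L_d$, so by Fact~\ref{fact:projectingkwise} applied to the conditioning on the path variables of $\rho$, the projection $\D_{[n]\setminus\rho}$ is $(K_d+L_d-|\rho|)$-wise independent, hence at least $K_d$-wise independent. I will also want to know that $K_d$ is large enough that $\D_{[n]\setminus\rho}$ is also $O(d\log(1/\tau))$-wise independent; this is automatic in our setting since $K_d = \poly(1/\eps)$ and $\tau \leq \eps$.

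Next, I split into the two cases of Theorem~\ref{thm:regularity-kwise}. In the regular case, $f_\rho$ is a $\tau$-regular degree-$d$ PTF on the $n - |\rho|$ unrestricted variables. Since $\D_{[n]\setminus\rho}$ is $K_d$-wise independent, the hypothesis of Theorem~\ref{thm:reg-reduction} directly gives
\[ \bigl| \E_{x \sim \D_{[n]\setminus\rho}}[f_\rho(x)] - \E_{y \sim \U_{[n]\setminus\rho}}[f_\rho(y)] \bigr| \leq \eps, \]
which, after converting from $\pm 1$-expectations to probabilities of the event $f_\rho=1$, costs only a factor of $1/2$ and certainly yields a bound of at most $\eps \leq 2\eps$.

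In the close-to-constant case, Theorem~\ref{thm:regularity-kwise} guarantees the existence of $b \in \bits$ such that $\Pr_{x\sim\D'}[f_\rho(x) \neq b] \leq \tau$ for every $O(d\log(1/\tau))$-wise independent $\D'$ on $\bits^{n-|\rho|}$. Applying this simultaneously to $\D' = \D_{[n]\setminus\rho}$ (legal by the independence bound noted above) and to $\D' = \U_{[n]\setminus\rho}$ (trivially $\infty$-wise independent), the triangle inequality gives
\[ \bigl| \Pr_{x\sim\D_{[n]\setminus\rho}}[f_\rho(x)=1] - \Pr_{y\sim\U_{[n]\setminus\rho}}[f_\rho(y)=1] \bigr| \leq 2\tau \leq 2\eps. \]

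There is no real obstacle here: the lemma is essentially a bookkeeping step that packages the two guarantees of Theorem~\ref{thm:regularity-kwise} against $\D_{[n]\setminus\rho}$. The only thing to verify carefully is that the independence surviving after conditioning on the path is still enough to invoke both the regular-case hypothesis ($K_d$-wise) and the close-to-constant clause ($O(d\log(1/\tau))$-wise); both follow immediately from $|\rho|\leq L_d$ and the choice of $K_d+L_d$ as the independence of $\D$.
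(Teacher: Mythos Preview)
Your proof is correct and follows essentially the same approach as the paper: use \Fact{projectingkwise} to show the projected distribution retains at least $K_d$-wise independence, then split into the regular and close-to-constant cases of \Theorem{regularity-kwise}, invoking the hypothesis in the first case and the triangle inequality with $\tau\le\eps$ in the second. The only cosmetic difference is that the paper justifies the close-to-constant case by noting $L_d \gg O(d\log(1/\tau))$ rather than $K_d \ge O(d\log(1/\tau))$, but either observation suffices.
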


\begin{proof}
If $f_{\rho}$ is $\tau$-regular, by \Fact{projectingkwise} and recalling that $|\rho| \leq \depth(d,\tau) \leq L_d$, the
distribution $\D_{[n]\setminus \rho}$ is $K_d$-wise independent. Hence, the statement follows by assumption. Otherwise,
$f_{\rho}$ is $\eps$-close to a constant, i.e. there exists $b \in \bits$ so that for any $t=O(d \log (1/\tau))$-wise
distribution $\D'$ over $\bits^{n-|\rho|}$ we have $\Pr_{x \sim \D'}[f_{\rho}(x) \neq b] \leq \tau$ $(*)$. Since $L_d >> t$,
\Fact{projectingkwise} implies that $(*)$ holds both under $\D_{[n]\setminus \rho}$ and $\U_{[n]\setminus \rho}$, hence the
statement follows in this case also, recalling that $\tau \leq \eps$.
\end{proof}

The proof of \Theorem{reg-reduction} now follows by a simple averaging argument. By the decision-tree decomposition of
\Theorem{regularity-kwise}, we can write
\[\Pr_{x\sim \D'_n} [f(x)=1]  = \sum_{\rho \in L(T)} \Pr \big[LD(\T, \D') = \rho \big] \cdot \Pr_{y \in
\D'_{[n]\setminus{\rho}}}\big[ f_{\rho}(y)=1 \big]\] where $\D'$ is either $\D$ or the uniform distribution $\U$. By
\Theorem{regularity-kwise} and \Lemma{distr-leaves} it follows that the probability mass of the bad leaves is at most $\eps$
under both distributions. Therefore, by \Lemma{distr-leaves} and \Lemma{kwise-leaves} we get
\begin{eqnarray*}
&&\Big| \Pr_{x\sim\D}[f(x)=1] - \Pr_{x\sim\U}[f(x)=1] \Big| \leq \eps
+ \\
&&\sum_{\rho \in GL(T)} \Pr \big[LD(\T, \U) = \rho \big] \cdot \big| \Pr_{y \in \U_{[n]\setminus{\rho}}}\big[ f_{\rho}(y)=1
\big] - \Pr_{y \in \D_{[n]\setminus{\rho}}}\big[ f_{\rho}(y)=1 \big] \big| \leq 3\eps.
\end{eqnarray*}
This completes the proof of \Theorem{reg-reduction}.

\subsection{Proof of \Theorem{regularity-kwise}} \SectionName{DSTW-mod}

In this section we provide the proof of \Theorem{regularity-kwise}. For the sake of completeness, we give below the relevant
machinery from \cite{DSTW09}. We note that over the hypercube every polynomial can be assumed to be multilinear, and so
whenever we discuss a polynomial in this section it should be assumed to be multilinear. We start by defining the notion of the
critical index of a polynomial:

\begin{definition}[critical index]\DefinitionName{tau-reg}
Let $p: \bits^n \to \R$ and $\tau>0$. Assume the variables are ordered such that $\Inf_i(p)\geq \Inf_{i+1}(p)$ for all $i \in
[n-1]$.  The {\em $\tau$-critical index} of $p$ is the least $i$ such that:
\begin{equation}
\frac{\Inf_{i+1}(p)}{\sum_{j=i+1}^n\Inf_j(p)} \leq \tau . \EquationName{reg}
\end{equation}
If \Equation{reg} does not hold for any $i$ we say that the
$\tau$-critical index of $p$ is $+ \infty.$ If $p$ is has
$\tau$-critical index 0, we say that $p$ is {\em $\tau$-regular}.
\end{definition}

We will be concerned with polynomials $p$ of degree-$d$. The work in
\cite{DSTW09} establishes useful random restriction lemmas for
low-degree polynomials. Roughly, they are as
follows: Let $p$ be a degree-$d$ polynomial. If the $\tau$-critical index of $p$ is zero, then $f = \sgn(p)$ is $\tau$-regular
and there is nothing to prove.
\begin{itemize}
\item If the $\tau$-critical index of $p$ is ``very large'', then a random
restriction of ``few'' variables causes $f = \sgn(p)$ to become a
``close-to-constant'' function with probability $1/2^{O(d)}$.
We stress that the distance between functions is measured in
\cite{DSTW09}
with respect to the uniform distribution on inputs. As
previously mentioned, we extend this statement to hold for any
distribution with sufficiently large independence.

\item If the $\tau$-critical index of $p$ is positive but
not ``very large'', then a random restriction of a ``small'' number of
variables -- the variables with largest influence in $p$
-- causes $p$ to become ``sufficiently'' regular with probability
$1/2^{O(d)}.$

\end{itemize}

Formally, we require the following lemma which is a strengthening of
Lemma 10 in \cite{DSTW09}:

\begin{lemma} \LemmaName{dichotomy-kwise}
Let $p:\bits^n \rightarrow \R$ be a degree-$d$ polynomial and assume
that its variables are in order of non-increasing
influence. Let $0< \tau', \beta <1/2$ be parameters. Fix $\alpha =
\Theta ( d \log \log (1/\beta) + d \log d)$ and $\tau'' =
\tau' \cdot (C' d \ln d \ln (1/\tau'))^d$, where $C'$ is a universal
constant. One of the following statements holds true:
\begin{enumerate}
\item The function $f = \sgn(p)$ is $\tau'$-regular.

\item  With probability at least $1/2^{O(d)}$
over a random restriction $\rho$ fixing the first $L' = \alpha/\tau'$
variables of $p$, the function $f_{\rho}=\sgn (p_{\rho})$
is $\beta$-close to a constant function. In particular, under any $O(d
\log (1/\beta))$-wise independent distribution $\D'$
there exists $b \in \bits$ such that $\Pr_{x \sim \D'}[f_{\rho}(x)
\neq b] \leq \tau'$.

\item There exists a value $k \leq \alpha/\tau'$,
such that with probability at least $1/2^{O(d)}$ over a random
restriction $\rho$ fixing the first $k$ variables of $p$, the
polynomial $p_{\rho}$ is $\tau''$-regular.
\end{enumerate}
\end{lemma}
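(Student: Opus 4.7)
The plan is to follow the three-way case analysis on the $\tau'$-critical index of $p$ used in Lemma~10 of \cite{DSTW09}. Conclusions (1) and (3) transfer essentially unchanged, and the only genuinely new ingredient is in the ``close-to-constant'' branch, where we must refine a probabilistic estimate under the uniform distribution into one that also works under $O(d\log(1/\beta))$-wise independence. Specifically, if the $\tau'$-critical index equals $0$ we are already in (1); if it is positive but at most $L' = \alpha/\tau'$, then with probability $1/2^{O(d)}$ over a uniform restriction of the first (critical-index many) coordinates the polynomial $p_\rho$ becomes $\tau''$-regular, exactly as in \cite{DSTW09}. This latter statement depends only on the polynomial $p_\rho$ itself, not on the distribution of the remaining variables, so it transfers verbatim.

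The substantive case is when the $\tau'$-critical index exceeds $L'$. Here I would extract from the \cite{DSTW09} analysis a deterministic structural inequality that is stronger than the $\beta$-closeness it ultimately records: with probability at least $1/2^{O(d)}$ over a uniform random restriction $\rho$ of the top $L'$ coordinates, one has a decomposition
$$p_\rho(x) \;=\; c_\rho \;+\; q_\rho(x),$$
where $c_\rho \in \R$ is the empty Fourier coefficient of $p_\rho$, the polynomial $q_\rho$ is a degree-$d$ multilinear polynomial on the unrestricted variables, and the ratio $|c_\rho|/\|q_\rho\|_2$ is at least some $T = T(d,\beta)$ of order $\log^{d/2}(1/\beta)$. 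The key observation is that $c_\rho$ and $\|q_\rho\|_2$ are functions of $\rho$ alone, so this gap is independent of whatever distribution we eventually impose on the remaining $n-L'$ coordinates.

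Setting $b = \sgn(c_\rho)$, the event $\{f_\rho(x) \neq b\}$ forces $|q_\rho(x)| \ge |c_\rho| \ge T\|q_\rho\|_2$. Applying \Theorem{tailbound} to the degree-$d$ polynomial $q_\rho$ under any $O(d\log(1/\beta))$-wise independent distribution $\D'$ then yields
$$\Pr_{x \sim \D'}\bigl[|q_\rho(x)| \ge T\|q_\rho\|_2\bigr] \;\le\; \exp\bigl(-\Omega(dT^{2/d})\bigr) \;\le\; \tau',$$
which is exactly conclusion~(2). The main obstacle I anticipate is bookkeeping: the proof in \cite{DSTW09} folds the ratio $|c_\rho|/\|q_\rho\|_2$ immediately into a uniform tail bound via hypercontractivity, and one must unwind their argument to isolate the ratio as a deterministic consequence of the large-critical-index hypothesis. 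Once that ratio has been exposed, the switch to bounded independence is automatic, since \Theorem{tailbound} is proved by applying Markov's inequality to a single hypercontractive moment of $q_\rho$, which is determined by $dk$-wise independence for $k = O(\log(1/\beta))$.
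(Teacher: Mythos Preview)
Your proposal is correct and matches the paper's own proof essentially line for line: both do the three-way case split on the $\tau'$-critical index, transfer cases (1) and (3) verbatim from \cite{DSTW09}, and in the large-critical-index case isolate the deterministic gap between the constant term $c_\rho = p'(\rho)$ and the $\ell_2$ norm of the nonconstant part $q_\rho = p_\rho - p'(\rho)$, then replace the uniform-distribution tail bound by \Theorem{tailbound}, which only needs $O(d\log(1/\beta))$-wise independence. The paper makes the two halves of your ratio explicit (it records $|c_\rho|\ge 2^{-\Omega(d)}$ and $\|q_\rho\|_2 \le \log(1/\beta)^{-d}$ separately before invoking \Theorem{tailbound} with $t=\log(1/\beta)^{d/2}$), but this is exactly the ``unwinding'' you anticipated.
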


By applying the above lemma in a recursive manner we obtain \Theorem{regularity-kwise}. This is done exactly as in the proof of
Theorem 1 in \cite{DSTW09}. We remark that in every recursive application of the lemma, the value of the parameter $\beta$ is
set to $\tau$. This explains why $O(d \log (1/\tau))$-independence suffices in the second statement of
\Theorem{regularity-kwise}. Hence, to complete the proof of \Theorem{regularity-kwise}, it suffices to establish
\Lemma{dichotomy-kwise}.

\begin{proofof}{\Lemma{dichotomy-kwise}}
We now sketch the proof of the lemma. The first statement of the lemma corresponds to the case that the value $\ell$ of
$\tau'$-critical index is $0$, the second to the case that it is $\ell
>L'$ and the third to $1\leq \ell\leq L'$.

The proof of the second statement proceeds in two steps. Let $H$ denote the first $L'$ most influential variables of $p$ and
$T=[n]\setminus H$. Let $p'(x_H) = \sum_{S \subseteq H} \widehat{p}(S) x_S $. We first argue that with probability at least
$2^{-\Omega(d)}$ over a random restriction $\rho$ to $H$, the restricted polynomial $p_{\rho}(x_T)$ will have a ``large''
constant term $\widehat{p}_{\rho}(\emptyset) = p'(\rho)$, in particular at least $\theta = 2^{-\Omega(d)}$.  The proof is based
on the fact that, since the critical index is large, almost all of the Fourier weight of the polynomial $p$ lies in $p'$, and
it makes use of a certain anti-concentration property over the hypercube. Since the randomness is over $H$ and the projection
of $\D$ on those variables is still uniform, the argument holds unchanged under $\D$.

In the second step, by an application of a concentration bound, we show that for at least half of these restrictions to $H$ the
surviving (non-constant) coefficients of $p_{\rho}$, i.e. the Fourier coefficients of the polynomial $p_{\rho}(x_T)-p'(\rho)$,
have small $\ell_2$ norm; in particular, we get that $\|p_{\rho} - p'_{\rho}\|_2 \leq \log(1/ \beta)^{-d}$. We call such
restrictions good. Since the projection of $\D$ on these ``head'' variables is uniform, the concentration bound applies as is.

Finally, we need to show that, for the good restrictions, the event the ``tail'' variables $x_T$ change the value of the
function $f_{\rho}$, i.e. $\sgn(p_{\rho}(x_T)+p'(\rho)) \neq \sgn (p'(\rho))$ has probability at most $\beta$. This event has
probability at most
\[ \Pr_{x_T} [|p_{\rho}(x_T) - p'(\rho)| \geq \theta] .\]
This is done in \cite{DSTW09} using a concentration bound on the ``tail'', assuming full independence. Thus, in this case, we
need to modify the argument since the projection of $\D$ on the ``tail'' variables is not uniform. However, a careful
inspection of the parameters reveals that the concentration bound needed above actually holds even under an assumption of $O(d
\log (1/\beta))$-independence for the ``tail'' $x_T$. In particular, given the upper bound on $\|p_{\rho} - p'_{\rho}\|_2$ and
the lower bound on $\theta$, it suffices to apply \Theorem{tailbound}
for $t = \log(1/\beta)^{d/2}$, which only requires
$(dt^{2/d})$-wise independence. Hence, we are done in this case too.

The proof of the third statement remains essentially unchanged for the
following reason: One proceeds by considering a random
restriction of the variables of $p$ up to the $\tau$-critical index --
which in this case is small. Hence, the distribution
induced by $\D$ on this space is still uniform. Since the randomness
is over these ``head'' variables, all the arguments remain
intact and the claim follows.
\end{proofof}
\section{Appendix to \Section{intersections}}\SectionName{intersection-proofs}

We show a generalization of \Theorem{intersection-halfspaces} to the
intersection of $m>1$ halfspaces, which implies
\Theorem{intersection-halfspaces} as the special case $m=2$.

\noindent \BoldTheorem{intersection-halfspaces}
{\it
Let $m>1$ be an integer.
Let $H_i=\{x:\inprod{a_i,x}>\theta_i\}$ for $i\in[m]$, with
$\|a_i\|_2=1$ for all $i$.
Let $X$ be a vector of $n$
i.i.d. Gaussians, and $Y$ be a vector of $k$-wise independent
Gaussians. Then for $k = \Omega(m^6/\eps^2)$,
$$ |\Pr[X\in \cap_{i=1}^mH_i] - \Pr[Y\in \cap_{i=1}^m H_i]| < \eps$$
}
\begin{proof}
Let $F:\R^n\rightarrow\R^m$ be the map $F(x) =
(\inprod{a_1,x},\ldots,\inprod{a_m,x})$, and let $R$ be the region $\{x :
\forall i\ x_i>\theta_i\}$.
Similarly as in the proof of \Theorem{main-regular}, we simply
show a chain of inequalities after setting $\rho = \eps/m$ and $c =
m/\rho$:
\begin{equation}\EquationName{chain2}
\E[I_{R}(F(X))] \approx_{\eps} \E[\tilde{I}_{R}^c(F(X))]
\approx_{\eps}  \E[\tilde{I}_{R}^c(F(Y))]  \approx_{\eps}
\E[I_{R}(F(Y))] .
\end{equation}
Note the maximum influence $\tau$ does not play a role since
under the Gaussian
measure we never need invoke the Invariance Principle. 
For the
first inequality, observe $d_2(x,\partial R) \ge
\min_i\{|x_i-\theta_i|\}$.
Then by a union bound,
$$ \Pr[d_2(F(X),\partial R) \le w] \le \Pr[\min_i\{|\inprod{a_i,X} -
\theta_i|\} \le w] \le
\sum_{i=1}^m\Pr[|\inprod{a_i,X} - \theta_i| \le w] ,$$
which is $O(m w)$ by \Theorem{anticoncentration} with $d=1$.  Now,

\begin{eqnarray}
\nonumber |\E[I_R(F(X))] - \E[\tilde{I}_R^c(F(X))]| &\le& \E[|I_R(F(X))] -
\tilde{I}_R^c(F(X))|]\\
\nonumber &\le& 
\Pr[d_2(F(X),\partial R) \le 2\rho]\\
&&\hspace{.2in}{} + 
O\left(\sum_{s=1}^\infty \left(\frac{m^2}{c^22^{2s}\rho^2}\right) \cdot
  \Pr[d_2(F(X),\partial R) \le
  2^{s+1}\rho]\right)\EquationName{use-approx-indicator}\\
\nonumber&=& 
\Pr[d_2(F(X),\partial R) \le 2\rho] + 
O\left(\sum_{s=1}^\infty 2^{-2s} \cdot \Pr[d_2(F(X),\partial R) \le
  2^{s+1}\rho]\right)\\
\nonumber&=& O(m\rho)\\
\nonumber&=& O(\eps)
\end{eqnarray}
where \Equation{use-approx-indicator} follows from
\Theorem{approx-indicator}.

The last inequality in \Equation{chain2} is argued identically, except
that we need to have
anticoncentration of the $|\inprod{a_i,Y}|$ in
intervals of size no smaller than $\rho = \eps/m$; this was
already shown to hold under $O(1/\rho^p)$-wise independence in
\cite[Lemma 2.5]{KNW10} for any $p$-stable distribution, and the Gaussian
is $p$-stable for $p=2$.

For the middle inequality we use Taylor's theorem, as was done
in \Lemma{kwise-fools}. If we truncate the Taylor polynomial at
degree-$(k-1)$ for $k$ even, then by our derivative bounds on mixed partials of $\tilde{I}_R^c$ from \Theorem{approx-uniform}, the error term is bounded by
$$ (2c)^k \cdot m^k\cdot \frac{\sum_{i=1}^m
  \E[\inprod{a_i,X}^k]}{k!} \le
(cm)^k\cdot 2^{O(k)}/k^{k/2},$$
with the inequality holding by \Lemma{deg1mom}, and the $m^k$ arising
as the analogue of the $4^k$ term that arose in
\Equation{taylor-process}. This is
at most $\eps$ for $k$ a sufficiently large constant times
$(cm)^2$, and
thus overall $k=\Omega(m^6/\eps^2)$-wise independence suffices.
\end{proof}

\begin{remark}
Several improvements are possible to reduce the dependence on $m$ in
\Theorem{intersection-halfspaces}. We presented the simplest proof we
are aware of which obtains a polynomial dependence on $m$, for clarity of
exposition.  See \Section{improved-halfspaces} for an improvement on
the dependence on $m$ to quartic.
\end{remark}

Our approach can also show that bounded independence fools
the intersection of any constant number $m$ of degree-$2$ threshold
functions.  Suppose the
degree-$2$ polynomials are $p_1,\ldots,p_m$.  Exactly as
in \Section{regular} we decompose each $p_i$ into
$p_{i,1}-p_{i,2}+p_{i,3}+p_{i,4} + C_i$. We then
define a region $R\subset \R^{4m}$ by $\{x : \forall i\in[m]\
x_{4i-3}^2 - x_{4i-2}^2 + x_{4i-1} + x_{4i} + C_i + \tr(A_{p_{i,3}}) >
0 \}$, and the map $F:\R^n\rightarrow \R^{4m}$ by
$$ F(x) = (M_{p_1}(X),\ldots,M_{p_n}(X))$$
for the map $M_p:\R^n\rightarrow\R^4$ defined in \Section{regular}.
The goal is then to show $\E[I_{R}(F(X))] \approx_\eps
\E[I_{R}(F(Y))]$, which is done identically as in the proof of
\Theorem{main-regular}.
We simply state the theorem here:

\begin{theorem}
Let $m>1$ be an integer.
Let $H_i=\{x:p_i(x)\ge 0\}$  for $i\in[m]$, for some
degree-$2$ polynomials $p_i:\R^n\rightarrow\R$.
Let $X$ be a vector of $n$
i.i.d. Gaussians, and $Y$ be a vector of $k$-wise independent
Gaussians with $k = \Omega(\poly(m)/ \eps^{8})$. Then,
$$ |\Pr[X\in \cap_{i=1}^mH_i] - \Pr[Y\in
  \cap_{i=1}^mH_i]| < \eps$$
\end{theorem}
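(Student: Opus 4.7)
The plan is to apply the decomposition-and-FT-mollification strategy of \Theorem{main-regular} in $4m$ ambient dimensions so that all $m$ constraints are handled in a single low-dimensional image. First, use \Lemma{decompose-quadratic} on each $p_i$ to write
$$p_i = p_{i,1} - p_{i,2} + p_{i,3} + p_{i,4} + C_i,$$
with $p_{i,1},p_{i,2}$ positive semidefinite of smallest non-zero eigenvalue at least $\delta$, $\lambdamax{A_{p_{i,3}}} < \delta$, $p_{i,4}$ linear, and $C_i$ a constant. Define $F:\R^n\to\R^{4m}$ by $F(x)=(M_{p_1}(x),\ldots,M_{p_m}(x))$, where $M_{p_i}$ is the map from \Section{regular}, and let
$$R = \{y\in\R^{4m} : \forall i\in[m],\ y_{4i-3}^2 - y_{4i-2}^2 + y_{4i-1} + y_{4i} + C_i + \tr(A_{p_{i,3}}) \ge 0\}.$$
Then $I_{\cap_i H_i}(x) = I_R(F(x))$, so the task is to show $\E[I_R(F(X))] \approx_\eps \E[I_R(F(Y))]$.

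Next, for parameters $\rho = \Theta(\eps^4/m^4)$, $c = \Theta(m/\rho)$, and $\delta = \Theta(1/c)$, FT-mollify $I_R$ in $d=4m$ dimensions (symmetrizing in each pair $(y_{4i-3},y_{4i-2})$, which is harmless since the corresponding inputs $\sqrt{p_{i,j}(x)}$ are non-negative), and prove the chain
$$\E[I_R(F(X))] \approx_\eps \E[\tilde{I}_R^c(F(X))] \approx_\eps \E[\tilde{I}_R^c(F(Y))] \approx_\eps \E[I_R(F(Y))].$$
The two outer $\approx_\eps$ reduce via \Theorem{approx-indicator} to anti-concentration bounds for $d_2(F(\cdot),\partial R)$. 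Since $\partial R \subseteq \bigcup_i \partial R_i$ where $R_i$ is the region carved out by the $i$-th constraint, a union bound splits this into $m$ single-$p_i$ anti-concentration statements. For independent Gaussians these follow from \Theorem{anticoncentration} applied directly to each $p_i$ (bypassing the Invariance Principle entirely, and hence avoiding any $\tau^{1/9}$ error); for $k$-wise independent Gaussians, an analogue of \Lemma{bounded-anticoncentration} is obtained by running the same FT-mollification bootstrap one coordinate at a time, or alternatively via the reduction to Bernoulli variables from \Section{gaussian-setting}. Summing these $O(m)$ probabilities against the $d$-dimensional bound $O((m/(c\cdot d_2))^2)$ from \Theorem{approx-indicator} and integrating dyadically yields $O(\eps)$ for the chosen $\rho$.

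The middle step is handled exactly as in \Lemma{kwise-fools}. Taylor-expand $\tilde{I}_R^c \circ F$ to degree $k-1$: by evenness in the pairs $(y_{4i-3},y_{4i-2})$, every surviving monomial in $x$ has degree $\le 2k-2$ and is therefore preserved by $2k$-wise independence. The Taylor remainder is controlled using $\|\partial^\beta \tilde{I}_R^c\|_\infty \le (2c)^{|\beta|}$ from \Theorem{approx-uniform}, together with the Gaussian analogues of \Lemma{boundmoment2} on each $p_{i,1},p_{i,2}$, \Theorem{eigenbound} on each $p_{i,3}$, and \Lemma{deg1mom} on each $p_{i,4}$. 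Summing contributions across the $4m$ coordinates (costing a $\poly(m)$ factor) gives a remainder of the form
$$\poly(m) \cdot 2^{O(k)} \cdot \bigl( (c/\sqrt{k})^k + (c/(k\sqrt{\delta}))^k + (c\delta)^k \bigr),$$
which is at most $\eps$ once $k = \Omega(c^2) = \Omega(\poly(m)/\eps^8)$, matching the stated bound.

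The main obstacle is the combined parameter tracking: the $m$-fold union bound in anti-concentration forces $\rho$ to shrink by a factor polynomial in $m$, which propagates through $c$ and thereby forces $k$ to grow polynomially in $m$ while preserving the $\eps^{-8}$ dependence inherited from \Theorem{main-regular}. Beyond that bookkeeping, the argument is a routine extension of \Theorem{main-regular}: each ingredient (FT-mollification derivative bounds, quadratic-form moment bounds, degree-$2$ Gaussian anti-concentration) is either proven earlier in the paper or has a direct Gaussian analogue established in \Section{gaussian-setting}.
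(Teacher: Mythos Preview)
Your proposal is correct and follows essentially the same approach as the paper. The paper's own treatment of this theorem is in fact just a one-paragraph sketch: decompose each $p_i$ as in \Section{regular}, build the region $R\subset\R^{4m}$ and the map $F(x)=(M_{p_1}(x),\ldots,M_{p_m}(x))$, and then argue ``identically as in the proof of \Theorem{main-regular}''; you have simply fleshed out those details (the union bound over $\partial R_i$ for anti-concentration, the symmetrization in the $(y_{4i-3},y_{4i-2})$ pairs, and the $\poly(m)$ parameter tracking) in the natural way.
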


Identical conclusions also hold for $X,Y$ being drawn from
$\{-1,1\}^n$, since we can apply the decision tree argument from
\Theorem{regularity-kwise} to each of
the $m$ polynomial threshold functions separately so that, by a union
bound, with probability at least $1-m\tau'$ each of the $m$ PTF
restrictions is either $\tau'$-close to a constant function, or is
$\tau'$-regular.  Thus for whatever setting of $\tau$ sufficed for the
case $m=1$ ($\tau = \eps^2$ for halfspaces \cite{DGJSV09} and $\tau=\eps^9$
for degree-$2$ threshold functions (\Theorem{main-regular})), we set
$\tau' = \tau/m$ then argue identically as before.
\section{Various Quantitative Improvements}\SectionName{quantitative}
In the main body of the paper, at various points we sacrificed proving
sharper bounds in exchange for clarity of exposition.  Here we discuss
various quantitative improvements that can be made in our arguments.

\subsection{Improved FT-mollification}\SectionName{improved-ftmol}
\ 
In \Theorem{approx-uniform}, we showed that for $F:\R^d\rightarrow\R$
bounded and $c>0$ arbitrary, $\|\partial^\beta \tilde{F}^c\|_\infty
\le \|F\|_\infty\cdot (2c)^{|\beta|}$ for all $\beta\in\N^d$.  We here
describe an improvement to this bound. The improvement comes by
sharpening our bound on $\|\partial^\beta B\|_1$.

We use the following fact, whose proof can be found in \cite{Folland01}.

\begin{fact}\FactName{sphere-trick}
For any multi-index $\alpha\in\N^d$,
$$ \int_{\|x\|_2 \le 1}x^{\alpha} dx = \begin{cases} 0 \ &
  \textrm{if some }\alpha_i\textrm{ is odd} \\ 
  \frac{2\prod_{i=1}^d\Gamma\left(\frac{\alpha_i+1}{2}\right)}{(|\alpha|
    + d)\cdot \Gamma\left(\frac{|\alpha|+d}{2}\right)}
  \ &\textrm{otherwise} \end{cases} .$$
\end{fact}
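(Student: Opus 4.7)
The plan is to split into two cases based on the parity of the $\alpha_i$. If some $\alpha_i$ is odd, then the reflection $x_i \mapsto -x_i$ is a measure-preserving symmetry of the unit ball that negates $x^\alpha$, and so the integral vanishes. All the work is in the remaining case where every $\alpha_i$ is even, which I approach by spherical coordinates together with the standard ``Gaussian trick'' used to compute surface integrals on $S^{d-1}$.

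First I would write $x = r\omega$ with $r\in[0,1]$ and $\omega\in S^{d-1}$, so that $dx = r^{d-1}dr\, d\sigma(\omega)$ with $\sigma$ the surface measure on the unit sphere. Since $x^\alpha = r^{|\alpha|}\omega^\alpha$, the integral factors as
$$\int_{\|x\|_2\le 1} x^\alpha dx = \left(\int_0^1 r^{|\alpha|+d-1}dr\right)\left(\int_{S^{d-1}}\omega^\alpha d\sigma(\omega)\right) = \frac{1}{|\alpha|+d}\int_{S^{d-1}}\omega^\alpha d\sigma(\omega),$$
so it remains to evaluate the surface integral on the right.

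The key step is to compute $J \eqdef \int_{\R^d} x^\alpha e^{-\|x\|_2^2}dx$ in two different ways. On one hand, the integrand is a product of single-variable functions, so $J = \prod_{i=1}^d \int_{-\infty}^\infty t^{\alpha_i} e^{-t^2}dt$; each factor evaluates to $\Gamma((\alpha_i+1)/2)$ when $\alpha_i$ is even (by the substitution $u=t^2$ and the definition of $\Gamma$). On the other hand, in polar coordinates $J$ factors as the product of the radial integral $\int_0^\infty r^{|\alpha|+d-1}e^{-r^2}dr = \tfrac12\Gamma((|\alpha|+d)/2)$ (again by $u=r^2$) and the angular integral $\int_{S^{d-1}} \omega^\alpha d\sigma(\omega)$. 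Equating the two expressions gives
$$\int_{S^{d-1}}\omega^\alpha d\sigma(\omega) = \frac{2\prod_{i=1}^d \Gamma((\alpha_i+1)/2)}{\Gamma((|\alpha|+d)/2)},$$
and substituting this back yields the stated formula.

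I do not expect any real obstacle here: the proof is entirely standard and mechanical, and in fact this is exactly the derivation given in Folland's reference. The only subtlety is justifying the polar factorization in $\R^d$, which follows from the change-of-variables formula since the Jacobian of polar coordinates separates as $r^{d-1}$ times a purely angular factor independent of $r$.
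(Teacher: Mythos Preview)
Your proof is correct and complete. The paper itself does not supply a proof of this fact; it simply cites Folland, and the argument you give is exactly the standard derivation found in that reference.
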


The following lemma is used in our sharpening of the upper bound on
$\|\partial^\beta B\|_1$.

\begin{lemma}\LemmaName{explicit-integral}
For a multi-index $\alpha\in\N^d$,
$$ \|x^{\alpha}\cdot b\|_2 \le \sqrt{\frac{\alpha! \cdot
  2^{O(|\alpha| + d)}}{(|\alpha| + d)^{|\alpha|}}}$$
\end{lemma}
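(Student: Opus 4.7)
The plan is to reduce the problem to an explicit beta-function integral via Fact~\ref{fact:sphere-trick}, then estimate the resulting Gamma ratios using Stirling and the central-binomial identity.

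First I would write
\[
\|x^\alpha\cdot b\|_2^2 \;=\; C_d\int_{\|x\|_2\le 1} x^{2\alpha}(1-\|x\|_2^2)^2\,dx \;\le\; C_d\int_{\|x\|_2\le 1} x^{2\alpha}\,dx,
\]
using $(1-\|x\|_2^2)^2\le 1$ on the unit ball.  By \Fact{sphere-trick} the remaining integral equals $\prod_i \Gamma(\alpha_i+1/2) \big/ \Gamma(|\alpha|+d/2+1)$ (up to the constant already absorbed).  Since $\Gamma(n+1/2)=\sqrt{\pi}\,(2n)!/(4^n n!)$ and $\binom{2n}{n}\le 4^n$, we have $\Gamma(\alpha_i+1/2)\le \sqrt{\pi}\,\alpha_i!$, so $\prod_i \Gamma(\alpha_i+1/2)\le \pi^{d/2}\alpha!$.

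Next I would pin down $C_d$ using information already computed in the paper.  The calculation just before \Claim{smallmoment} shows $\|b\|_2^2 = A_d\cdot 8/(d(d+2)(d+4)) = 1$, hence $A_d = d(d+2)(d+4)/8$.  Combined with the definition $A_d = C_d\cdot\mathrm{vol}(S^{d-1}) = C_d\cdot 2\pi^{d/2}/\Gamma(d/2)$, this gives $C_d = d(d+2)(d+4)\,\Gamma(d/2)/(16\pi^{d/2})$. Substituting,
\[
\|x^\alpha\cdot b\|_2^2 \;\le\; \frac{d(d+2)(d+4)}{16}\cdot \alpha!\cdot \frac{\Gamma(d/2)}{\Gamma(|\alpha|+d/2+1)},
\]
which is clean because the $\pi^{d/2}$ factors cancel.

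The remaining task is a lower bound on the Gamma ratio.  Writing $\Gamma(|\alpha|+d/2+1)/\Gamma(d/2) = \prod_{k=0}^{|\alpha|}(d/2+k)$ and applying Stirling's inequalities, one gets
\[
\frac{\Gamma(|\alpha|+d/2+1)}{\Gamma(d/2)} \;\ge\; \frac{(d/2+|\alpha|)^{|\alpha|}}{e^{|\alpha|}}\cdot \Bigl(1+\tfrac{2|\alpha|}{d}\Bigr)^{d/2}\big/\;O(1);
\]
using $(1+2|\alpha|/d)^{d/2}\ge 1$ and $d/2+|\alpha|\ge (d+|\alpha|)/2$, the right-hand side is at least $(d+|\alpha|)^{|\alpha|}/(2e)^{|\alpha|}$.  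Combining with the prefactor $d(d+2)(d+4)/16 = 2^{O(d)}$ yields
\[
\|x^\alpha\cdot b\|_2^2 \;\le\; \frac{\alpha!\cdot 2^{O(|\alpha|+d)}}{(|\alpha|+d)^{|\alpha|}},
\]
and taking square roots gives the claim.

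The main obstacle, such as it is, is being careful with the Stirling estimate on $\Gamma(|\alpha|+d/2+1)/\Gamma(d/2)$: one has to bound $(1+2|\alpha|/d)^{d/2}$ uniformly across the regimes $|\alpha|\ll d$ and $|\alpha|\gg d$, and verify that the lossy $\pi^{d/2}$ factor from $\prod\Gamma(\alpha_i+1/2)$ really cancels the corresponding factor in $C_d$.  Everything else is bookkeeping to push the polynomial-in-$d$ prefactor into the $2^{O(d)}$ slack.
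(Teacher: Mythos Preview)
Your argument is correct and in fact cleaner than the paper's. The paper expands $(1-\|x\|_2^2)^2$ fully, obtaining four Gamma-ratio terms $W(\alpha)-X(\alpha)+Y(\alpha)+Z(\alpha)$, and then bounds each of $W(\alpha)/W(0)$, $Y(\alpha)/Y(0)$, $Z(\alpha)/Z(0)$ separately; the normalization $\|b\|_2^2=1$ is used implicitly to avoid ever computing $C_d$. You instead throw away the factor $(1-\|x\|_2^2)^2\le 1$ up front---harmless, since only an upper bound is needed---and pay for this by computing $C_d$ explicitly from the paper's own identity $A_d = d(d+2)(d+4)/8$ together with $A_d = C_d\cdot 2\pi^{d/2}/\Gamma(d/2)$. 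This makes the $\pi^{d/2}$ cancellation transparent and reduces everything to a single Gamma ratio $\Gamma(d/2)/\Gamma(|\alpha|+d/2+1)$, handled by a one-line Stirling/integral estimate. The paper's route generalizes more directly if one wanted sharper control of the $(1-\|x\|_2^2)^2$ weight, but for the stated lemma your approach is shorter and loses nothing.
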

\begin{proof}
By \Fact{sphere-trick},
\begin{eqnarray*}
\|x^{\alpha}\cdot b\|_2^2 &=& C_d\cdot \int_{\|x\|_2\le
  1}\left(x^{2\alpha} - 2\sum_{i=1}^d
x_i^2x^{2\alpha} + 2\sum_{i<j} x_i^2x_j^2x^{2\alpha} +
\sum_ix_i^4x^{2\alpha}\right)dx\\
&=& \frac{2C_d}{|\alpha|+d}\cdot \Bigg[\frac{\prod_{i=1}^d
  \Gamma\left(\alpha_i + \frac
    12\right)}{\Gamma\left(|\alpha| + \frac d2\right)} -
2\frac{\sum_{i=1}^d\left(\prod_{j\neq i} \Gamma\left(\alpha_j + \frac
    12\right)\right)\Gamma\left(\alpha_i + \frac
  32\right)}{\Gamma\left(|\alpha| + \frac d2 + \frac
  12\right)}\\
&&\hspace{.5in}{}+2\frac{\sum_{i<j}\left(\prod_{\substack{k\neq i\\k\neq j}}
    \Gamma\left(\alpha_k + \frac
    12\right)\right)\Gamma\left(\alpha_i + \frac
  32\right)\Gamma\left(\alpha_j + \frac
  32\right)}{\Gamma\left(|\alpha| + \frac d2 + \frac
  32\right)}\\
&&\hspace{.5in}{}+\frac{\sum_{i=1}^d\left(\prod_{j\neq i}
    \Gamma\left(\alpha_j + \frac
    12\right)\right)\Gamma\left(\alpha_i + \frac
  52\right)}{\Gamma\left(|\alpha| + \frac d2 + \frac
  32\right)}\Bigg] .
\end{eqnarray*}
Write
the above expression as 
\begin{equation}
\frac{2C_d}{|\alpha|+d}\cdot [W(\alpha) - X(\alpha) + Y(\alpha) +
Z(\alpha)] .
\end{equation}
For $\alpha = 0$ we have
$$W(0) = \frac{\pi^{d/2}}{\Gamma(\frac d2)},\hspace{.2in} X(0) =
d\cdot \frac{\pi^{d/2}}{\Gamma(\frac d2 + \frac 12)},\hspace{.2in} Y(0) =
d(d-1)\cdot \frac{\pi^{d/2}}{4\cdot \Gamma(\frac d2 + \frac
  32)},\hspace{.2in} Z(0) = d\cdot \frac{3\pi^{d/2}}{4\cdot \Gamma(\frac
  d2 +
  \frac 32)} .$$
Using the fact that $\Gamma(z+1) = z\Gamma(z)$, we can rewrite these
as
$$\frac{W(0)}{\pi^{d/2}} = \frac{1}{\Gamma(\frac d2)},\hspace{.2in}
\frac{X(0)}{\pi^{-d/2}} =
\frac{d}{\Gamma(\frac d2 + \frac 12)},\hspace{.2in}
\frac{Y(0)}{\pi^{-d/2}} =
\frac{d(d-1)}{2(d+1)\cdot \Gamma(\frac d2 + \frac
  12)},\hspace{.2in} \frac{Z(0)}{\pi^{-d/2}} = \frac{3d}{2(d+1)\cdot
  \Gamma(\frac
  d2 +
  \frac 12)} .$$
We thus have $W(0)-X(0)+Y(0)+Z(0) = \Omega(W(0)+Y(0)+Z(0))$. 
Since $2C_d(W(0)-X(0)+Y(0)+Z(0))/d = \|b\|_2^2 = 1$, it thus suffices to
show that $(W(\alpha)+Y(\alpha)+Z(\alpha))/(W(0)+Y(0)+Z(0)) \le
(\alpha! \cdot
  2^{O(|\alpha| + d)})\cdot(|\alpha| + d)^{-|\alpha|}$ for general
$\alpha$. This can be
seen just by showing the desired inequality for $W(\alpha)/W(0)$,
$Y(\alpha)/Y(0)$, and $Z(\alpha)/Z(0)$ separately. We do the
calculation for $W(\alpha)/W(0)$ here; the others are similar.

We have
$$W(0) \ge \frac{2^{-O(d)}}{d^{d/2}},\hspace{.3in} W(\alpha) \le
\frac{\alpha!\cdot 2^{O(|\alpha|+d)}}{(|\alpha| +
  d)^{|\alpha|+d/2}} ,$$
and thus
$$ \frac{W(\alpha)}{W(0)} \le
\frac{\alpha!\cdot d^{d/2} \cdot
  2^{O(|\alpha| + d)}}{(|\alpha| + d)^{|\alpha| + d/2}} \le
\frac{\alpha! \cdot
  2^{O(|\alpha| + d)}}{(|\alpha| + d)^{|\alpha|}} .$$
\end{proof}

\begin{lemma}\LemmaName{l1bound-better}
For any $\beta\in\N^d$ with $|\beta| = \Omega(d)$,
$\|\partial^\beta B\|_1 \le 2^{O(|\beta|)}\cdot \sqrt{\beta! \cdot
  |\beta|^{-|\beta|}}$.
\end{lemma}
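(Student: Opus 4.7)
The plan is to follow the proof of Lemma \ref{lem:l1bound} up through the application of Cauchy--Schwarz and Plancherel, which yields
$$ \|\partial^\beta B\|_1 \le \sum_{\alpha \le \beta} \binom{\beta}{\alpha} \|x^\alpha \cdot b\|_2 \cdot \|x^{\beta-\alpha} \cdot b\|_2, $$
and then to replace the crude bound $\|x^\alpha \cdot b\|_2 \le \|b\|_2 = 1$ with the sharper estimate from Lemma \ref{lem:explicit-integral}. Expanding $\binom{\beta}{\alpha} = \beta!/(\alpha!(\beta-\alpha)!)$, the factors $\sqrt{\alpha!(\beta-\alpha)!}$ produced by Lemma \ref{lem:explicit-integral} combine with the binomial coefficient to give
$$ \binom{\beta}{\alpha} \|x^\alpha b\|_2 \|x^{\beta-\alpha} b\|_2 \le \sqrt{\beta! \cdot 2^{O(|\beta| + d)}} \cdot \sqrt{\frac{\binom{\beta}{\alpha}}{(|\alpha|+d)^{|\alpha|}(|\beta-\alpha|+d)^{|\beta-\alpha|}}}. $$

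Next I would apply Cauchy--Schwarz to the sum over $\alpha$, splitting the combinatorial factor from the reciprocal-power factor. The $\binom{\beta}{\alpha}$ piece sums to $2^{|\beta|}$. For the reciprocal piece I would group by $j = |\alpha|$ and use the identity $\binom{|\beta|}{j} = \sum_{|\alpha|=j,\,\alpha \le \beta}\binom{\beta}{\alpha}$ (counting $j$-subsets of a multiset of content $\beta$), which immediately yields the cardinality bound $|\{\alpha \le \beta : |\alpha| = j\}| \le \binom{|\beta|}{j}$. The crux is then the lower bound $(j+d)^j(|\beta|-j+d)^{|\beta|-j} \ge j^j(|\beta|-j)^{|\beta|-j} \ge (|\beta|/2)^{|\beta|}$ (with the convention $0^0 = 1$); the first step is trivial, and the second follows by noting that $j \mapsto j\log j + (|\beta|-j)\log(|\beta|-j)$ has derivative $\log(j/(|\beta|-j))$ vanishing at $j = |\beta|/2$, where it attains its global minimum $|\beta|\log(|\beta|/2)$ on $[0,|\beta|]$. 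Consequently the reciprocal sum is bounded by $\sum_{j=0}^{|\beta|} \binom{|\beta|}{j} \cdot 2^{|\beta|}/|\beta|^{|\beta|} = 4^{|\beta|}/|\beta|^{|\beta|}$.

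Combining the three square roots gives $\|\partial^\beta B\|_1 \le 2^{O(|\beta|+d)}\sqrt{\beta!/|\beta|^{|\beta|}}$, and absorbing the $2^{O(d)}$ factor into $2^{O(|\beta|)}$ using the hypothesis $|\beta| = \Omega(d)$ yields the claim. The main obstacle I foresee is in fact this very last absorption: it is precisely what forces the hypothesis $|\beta| = \Omega(d)$ (without which the Stirling-type cancellation inside Lemma \ref{lem:explicit-integral} no longer beats the crude $\|x^\alpha b\|_2 \le 1$), and it is the reason the improved bound kicks in only at sufficiently high-order mixed partials. Everything else in the proof is bookkeeping with binomial identities and a one-line calculus minimization.
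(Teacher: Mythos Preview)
Your proposal is correct and follows essentially the same approach as the paper: start from the Cauchy--Schwarz/Plancherel inequality of \Lemma{l1bound}, plug in the sharper estimate from \Lemma{explicit-integral}, use the convexity bound $(j+d)^j(|\beta|-j+d)^{|\beta|-j}\ge (|\beta|/2)^{|\beta|}$, and absorb $2^{O(d)}$ into $2^{O(|\beta|)}$ via the hypothesis. The only cosmetic difference is that the paper bounds each summand uniformly---noting $\alpha!(\beta-\alpha)!\le\beta!$ gives $\|x^\alpha b\|_2\|x^{\beta-\alpha}b\|_2\le 2^{O(|\beta|)}\sqrt{\beta!\,|\beta|^{-|\beta|}}$ directly, after which $\sum_{\alpha\le\beta}\binom{\beta}{\alpha}=2^{|\beta|}$ finishes---whereas you first combine $\binom{\beta}{\alpha}\sqrt{\alpha!(\beta-\alpha)!}=\sqrt{\beta!\binom{\beta}{\alpha}}$ and then apply a second Cauchy--Schwarz; the paper's route avoids this extra step but the two arguments are interchangeable.
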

\begin{proof}
The proof is nearly identical to the proof of \Lemma{l1bound}.  The
difference is in our bound of $\|x^\alpha\cdot b\|_2$.  In the proof
of \Lemma{l1bound}, we just used that $\|x^\alpha\cdot b\|_2 \le
\|b\|_2 = 1$.  However ,by
\Lemma{explicit-integral}, we can obtain the sharper bound
$$\|x^\alpha\cdot b\|_2 \le 2^{O(|\alpha|+d)}\sqrt{\alpha! \cdot
  (|\alpha|+d)^{-(|\alpha| + d)}} .$$
We then have
\begin{eqnarray*}
\|x^\alpha\cdot b\|_2\cdot \|x^{\beta-\alpha}\cdot b\|_2 &\le&
2^{O(|\beta|)}\sqrt{\alpha! \cdot
  (|\alpha|+d)^{-(|\alpha| + d)}\cdot (\beta-\alpha)! \cdot
  (|\beta-\alpha|+d)^{-(|\beta-\alpha| + d)}}\\
&\le& 2^{O(|\beta|)}\sqrt{\beta! \cdot
  |\beta|^{-|\beta|}}
\end{eqnarray*}
\end{proof}

We now have the following sharpening of item (i) from
\Theorem{approx-uniform}.  Over high dimension, for some $\beta$ the
improvement can be as large as a shrinking of our upper bound in
\Theorem{approx-uniform} by a $d^{-|\beta|/2}$ factor (for example,
when each $\beta_i$ is $|\beta|/d$).

\begin{theorem}\TheoremName{approx-uniform-improved}
Let $F:\R^d\rightarrow\R$ be bounded and $c>0$ be arbitrary, and
$\beta\in\N^d$ have $|\beta| = \Omega(d)$. Then,
$$\|\partial^\beta \tilde{F}^c\|_\infty \le \|F\|_\infty
  \cdot c^{|\beta|} \cdot 2^{O(|\beta|)}\cdot \sqrt{\beta! \cdot
  |\beta|^{-|\beta|}}$$
\end{theorem}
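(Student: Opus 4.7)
The plan is to observe that Theorem~\ref{thm:approx-uniform-improved} follows by simply recombining two results already in hand: the intermediate bound obtained inside the proof of Theorem~\ref{thm:approx-uniform} (namely that $\|\partial^\beta \tilde F^c\|_\infty \le \|F\|_\infty \cdot c^{|\beta|}\cdot \|\partial^\beta B\|_1$, cf.\ \Equation{bl1-isit}), together with the sharper $L^1$ bound on derivatives of $B$ proved in Lemma~\ref{lem:l1bound-better}. So no new analytic work is needed; only the final substitution changes.

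Concretely, I would first repeat verbatim the chain
\[
\left|(\partial^\beta \tilde F^c)(x)\right|
= \left|\int_{\R^d}(\partial^\beta B_c)(y)\,F(x-y)\,dy\right|
\le \|F\|_\infty\cdot \|\partial^\beta B_c\|_1
= \|F\|_\infty\cdot c^{|\beta|}\cdot \|\partial^\beta B\|_1,
\]
where the last equality is the change of variables $u=cy$ applied to $\partial^\beta B_c(y) = c^{|\beta|+d}(\partial^\beta B)(cy)$ (the extra $c^d$ gets absorbed by $du = c^d\,dy$). This is exactly the argument already written for Theorem~\ref{thm:approx-uniform}; I would note that nothing here depends on the bound used for $\|\partial^\beta B\|_1$, so the step is valid regardless of which Lemma we invoke afterward.

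Then I would simply invoke Lemma~\ref{lem:l1bound-better} (which, under the hypothesis $|\beta| = \Omega(d)$, replaces the crude $\|x^\alpha\cdot b\|_2\le 1$ used in Lemma~\ref{lem:l1bound} by the sharper Lemma~\ref{lem:explicit-integral} bound) to conclude
\[
\|\partial^\beta \tilde F^c\|_\infty \le \|F\|_\infty\cdot c^{|\beta|}\cdot 2^{O(|\beta|)}\cdot \sqrt{\beta!\cdot |\beta|^{-|\beta|}},
\]
which is exactly the stated inequality.

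There is essentially no obstacle here: the substantive content is in Lemma~\ref{lem:l1bound-better} and Lemma~\ref{lem:explicit-integral}, and Theorem~\ref{thm:approx-uniform-improved} is their immediate corollary. The only minor care needed is to make sure the hypothesis $|\beta| = \Omega(d)$ is stated explicitly (since Lemma~\ref{lem:l1bound-better} requires it), and to verify that the $2^{O(|\beta|+d)}$ factor appearing in Lemma~\ref{lem:explicit-integral} can be absorbed into $2^{O(|\beta|)}$ precisely because $|\beta| = \Omega(d)$; this is what makes the final $2^{O(|\beta|)}$ clean.
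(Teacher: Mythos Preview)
Your proposal is correct and matches the paper's proof exactly: the paper also notes that \Equation{bl1-isit} gives $\|\partial^\beta \tilde{F}^c\|_\infty \le \|F\|_\infty \cdot c^{|\beta|}\cdot \|\partial^\beta B\|_1$, then applies \Lemma{l1bound-better}. Your additional remarks about the change of variables and about absorbing $2^{O(|\beta|+d)}$ into $2^{O(|\beta|)}$ under the hypothesis $|\beta|=\Omega(d)$ are accurate elaborations of details the paper leaves implicit.
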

\begin{proof}
Note in \Equation{bl1-isit} in the proof of \Theorem{approx-uniform},
we showed that $\|\partial^\beta \tilde{F}^c\|_\infty \le \|F\|_\infty
\cdot c^{|\beta|} \cdot \|\partial^\beta B\|_1$.  The claim then
follows by
applying \Lemma{l1bound-better} to bound $\|\partial^\beta B\|_1$.
\end{proof}

\subsection{Improvements to fooling the intersection of
  halfspaces}\SectionName{improved-halfspaces}
\ 

In the proof of \Theorem{intersection-halfspaces}
in \Section{intersection-proofs}, we presented a proof showing that
$\Omega(m^6/\eps^2)$-independence $\eps$-fools the intersection of $m$
halfspaces under the Gaussian measure.  In fact, this dependence on
$m$ can be improved to quartic.  One factor of $m$ is shaved by using
the improved bound from \Theorem{approx-uniform-improved}, and another
factor of $m$ is shaved by a suitable change of basis.  The argument
used to shave the second factor of $m$ is specific to the Gaussian
case, and does not carry over to the Bernoulli setting.

\begin{theorem}\TheoremName{improved-halfspaces}
Let $m>1$ be an integer.
Let $H_i=\{x:\inprod{a_i,x}>\theta_i\}$ for $i\in[m]$, with
$\|a_i\|_2=1$ for all $i$.
Let $X$ be a vector of $n$
independent standard normals, and $Y$ be a vector of $k$-wise independent
Gaussians. Then for $k = \Omega(m^4/\eps^2)$ and even,
$$ |\Pr[X\in \cap_{i=1}^mH_i] - \Pr[Y\in \cap_{i=1}^m H_i]| < \eps$$
\end{theorem}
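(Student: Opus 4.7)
The approach follows the three-step chain from the proof of Theorem~\ref{thm:intersection-halfspaces}, but with two improvements that together shave the dependence on $m$ from $m^6$ to $m^4$: a change of basis that effectively reduces the ambient dimension from $n$ to $m$, and invocation of the sharper derivative bound of Theorem~\ref{thm:approx-uniform-improved}. For the change of basis, let $u_1,\ldots,u_m$ be an orthonormal basis of $\mathrm{span}(a_1,\ldots,a_m)$, write $a_i = \sum_j \alpha_{i,j} u_j$ with $\|\alpha_i\|_2 = 1$, and define $G:\R^n\to\R^m$ by $G(x)_j = \inprod{u_j, x}$. Setting $\tilde R = \{y \in \R^m : \inprod{\alpha_i, y} > \theta_i\text{ for all }i\}$, we have $X \in \bigcap_i H_i$ iff $G(X) \in \tilde R$, and similarly for $Y$. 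The critical observation is that although $G(Y)$ need not be i.i.d.\ standard Gaussian in $\R^m$, any polynomial of degree $\le k$ in $G(Y)$ is a polynomial of degree $\le k$ in the $Y_i$'s, so its expectation under $Y$ matches the one under $X$.

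Set $\rho = \eps/m$, $c = m/\rho = m^2/\eps$, and FT-mollify $I_{\tilde R}$. The argument proceeds via
\[ \E[I_{\tilde R}(G(X))] \approx_\eps \E[\tilde I_{\tilde R}^c(G(X))] \approx_\eps \E[\tilde I_{\tilde R}^c(G(Y))] \approx_\eps \E[I_{\tilde R}(G(Y))] . \]
The outer two approximations are as in the proof of Theorem~\ref{thm:intersection-halfspaces}: Gaussian anticoncentration yields $\Pr[d_2(G(X),\partial\tilde R)\le w] = O(mw)$, and its analogue for $Y$ follows from \cite[Lemma~2.5]{KNW10}, which requires only $O(1/\rho^2) = O(m^2/\eps^2)$-wise independence, well below our budget.

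For the middle inequality, Taylor-expand $\tilde I_{\tilde R}^c$ around $0$ to degree $k-1$; the resulting polynomial has matching expectations under $X$ and $Y$ by the observation above. The remainder satisfies $|R_k(y)| \le \sum_{|\beta|=k} \tfrac{\|\partial^\beta \tilde I_{\tilde R}^c\|_\infty}{\beta!}\prod_j |y_j|^{\beta_j}$. Theorem~\ref{thm:approx-uniform-improved} gives $\|\partial^\beta \tilde I_{\tilde R}^c\|_\infty \le c^k\, 2^{O(k)}\sqrt{\beta!/k^k}$, saving a factor of $\sqrt{\beta!/k^k}$ over the naive bound $(2c)^k$. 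To turn the non-polynomial $\prod_j|y_j|^{\beta_j}$ into something $k$-wise independence can handle, use the single-term multinomial inequality $\prod_j y_j^{2\beta_j} \le (\|y\|_2^2)^k / \binom{k}{\beta}$ (obtained by keeping only one term of the multinomial expansion of $(\sum_j y_j^2)^k$), which after taking square roots gives $\prod_j |y_j|^{\beta_j} \le (\|y\|_2^2)^{k/2}\sqrt{\beta!/k!}$; this uses that $k$ is even so that $(\|y\|_2^2)^{k/2}$ is a polynomial. Combining,
\[ |R_k(y)| \le \frac{c^k\, 2^{O(k)}\binom{m+k-1}{m-1}}{k^{k/2}\sqrt{k!}}\,(\|y\|_2^2)^{k/2} . \]
Now $(\|G(\cdot)\|_2^2)^{k/2}$ is an even polynomial of degree $k$, so its expectation under $Y$ equals that under $X$, and under the i.i.d.\ Gaussian $X$ it is $\E[(\chi^2_m)^{k/2}] \le (2k)^{k/2}$ for $k \ge m$. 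Stirling ($\sqrt{k!}\ge (k/e)^{k/2}$) and the bound $\binom{m+k-1}{m-1}\le 2^{O(k)}$ valid for $k\ge m$ then collapse the whole estimate to $(O(c^2/k))^{k/2}$, which is $<\eps$ when $k \ge C c^2 = C m^4/\eps^2$ for a sufficiently large absolute constant~$C$.

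The main obstacle, and the conceptual crux of the improvement, is bounding $\E[|R_k(G(Y))|]$ using only $k$-wise (rather than $2k$-wise) independence: a direct Cauchy--Schwarz applied to the non-polynomial $\prod_j|y_j|^{\beta_j}$ would produce $\sqrt{\E[\prod_j y_j^{2\beta_j}]}$ and push the requirement up to $2k$-wise. The multinomial trick above circumvents this by dominating the entire degree-$k$ sum over $\beta$ by the single even polynomial $(\|y\|_2^2)^{k/2}$, and it dovetails cleanly with the $\sqrt{\beta!}$ weight left over from the improved derivative bound of Theorem~\ref{thm:approx-uniform-improved}; without either of these two ingredients, the bound degrades back to $m^5/\eps^2$ or $m^6/\eps^2$.
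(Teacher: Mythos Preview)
Your proposal is correct and follows the same overall architecture as the paper's proof: the orthonormal change of basis to reduce to $\R^m$, the FT-mollification of the indicator of the region with $c = m^2/\eps$, the outer approximations via Gaussian anticoncentration (and its $k$-wise analogue from \cite{KNW10}), and the sharper derivative bound of \Theorem{approx-uniform-improved} for the middle step.

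The one genuine difference is in how you control the Taylor remainder $\sum_{|\beta|=k} \tfrac{1}{\sqrt{\beta!}}\prod_j |y_j|^{\beta_j}$. The paper handles this by a matching argument that pairs up odd coordinates of $\beta$ and shifts mass between them to make every $\beta_i$ even (at a multiplicative cost of $2^{O(k)}k^m$, absorbed since $k\gg m$), and only then recognizes the multinomial expansion of $\|y\|_2^k$. Your route is more direct: you keep each term as is and apply the one-term multinomial inequality $\prod_j y_j^{2\beta_j}\le (\|y\|_2^2)^k\cdot \beta!/k!$, which after a square root produces exactly the factor $\sqrt{\beta!}$ needed to cancel the $1/\sqrt{\beta!}$ coming from \Theorem{approx-uniform-improved}. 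Both arguments land on the same estimate $\E[|R_k|]\le (O(c^2/k))^{k/2}$ via the $\chi^2_m$ moment, and both exploit evenness of $k$ so that $(\|G(\cdot)\|_2^2)^{k/2}$ is a genuine degree-$k$ polynomial whose expectation is preserved under $k$-wise independence. Your shortcut is a bit cleaner and avoids the somewhat ad hoc matching/overcounting step; the paper's version has the minor advantage of making the emergence of $\|y\|_2^k$ visible as an exact identity rather than an inequality.
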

\begin{proof}
Let $v_1,\ldots,v_m\in\R^n$ be an orthonormal basis for a linear space
containing
the $a_i$. Define the region $R = \{x : \forall i\in[m]\ \sum_{j=1}^m
\inprod{a_i,v_j}x_j > \theta_i\}$ in $\R^m$. Note $R$ is
itself the intersection of $m$ halfspaces in $\R^m$, with the $i$th
halfspace having normal vector $b_i\in\R^m$ with $(b_i)_j =
\inprod{a_i, v_j}$.

We now define the map $F:\R^n\rightarrow\R^m$ by $F(x) =
(\inprod{x,v_1},\ldots, \inprod{x,v_m})$. It thus suffices to show
that $\E[I_R(F(X))] \approx_\eps \E[I_R(F(Y))]$. We do this by a chain
of inequalities, similarly as in the proof of
\Theorem{intersection-halfspaces}. Below we set $c = m^2/\eps$. 

\begin{equation}\EquationName{chain2-improved}
\E[I_{R}(F(X))] \approx_{\eps} \E[\tilde{I}_{R}^c(F(X))]
\approx_{\eps}  \E[\tilde{I}_{R}^c(F(Y))]  \approx_{\eps}
\E[I_{R}(F(Y))] .
\end{equation}

For the first inequality and last inequalities, since we performed an
orthonormal change of
basis the $F(X)_i$ remain independent standard normals, and we can
reuse the same analysis from the proof of
\Theorem{intersection-halfspaces} without modification.

For the middle inequality we use Taylor's theorem. If we set
$R(F(x)) = |\tilde{I}_R^c(F(x)) - P_{k-1}(F(x))|$ for $P_{k-1}$ the
degree-$(k-1)$ Taylor
polynomial approximating $\tilde{I}_R^c$, then 
\begin{eqnarray}
\nonumber R(x) &\le& \sum_{|\beta|=k}\|\partial^\beta
\tilde{I}_R^c\|_\infty\cdot
\frac{\prod_{i=1}^m |x_i|^{\beta_i}}{\beta!}\\
&\le&  \frac{2^{O(k)}\cdot c^k}{k^{k/2}}\cdot
\sum_{|\beta|=k} \frac{\prod_{i=1}^m
  |x_i|^{\beta_i}}{\sqrt{\beta!}} \EquationName{taylor-rocks}
\end{eqnarray}
by \Theorem{approx-uniform-improved}.
Now note
\begin{eqnarray}
\nonumber \sum_{|\beta|=k} \frac{\prod_{i=1}^m
    |x_i|^{\beta_i}}{\sqrt{\beta!}} &=&
\frac{1}{k!} \cdot \sum_{|\beta|=k} \sqrt{\beta!}\cdot
  \binom{k}{\beta} \cdot |x|^\beta \\
&\le& \frac{2^{O(k)}\cdot k^m}{k^k}\cdot
\sum_{\substack{|\beta|=k\\\forall i\ 2|\beta_i}} \sqrt{\beta!}
\cdot \binom{k}{\beta} \cdot |x|^{\beta}\EquationName{even-matching}\\
\nonumber &=& \frac{2^{O(k)}}{k^k}\cdot
\sum_{|\beta|=k/2}
\sqrt{(2\beta)!}\cdot \binom{k}{2\beta} \cdot |x|^{2\beta}\\
\nonumber &\le& \frac{2^{O(k)}}{k^{k/2}}\cdot
\sum_{|\beta|=k/2}
\binom{k/2}{\beta} \cdot |x|^{2\beta}\\
&=& \frac{2^{O(k)}}{k^{k/2}}\cdot \|x\|_2^k\EquationName{finale}
\end{eqnarray}
where $|x|$ denotes the vector
$(|x|_1,\ldots,|x|_m)$. \Equation{even-matching} holds for the
following reason.  Let
$\beta\in\N^m$ be arbitrary. Since $k$ is even, the number of odd
$\beta_i$ must be even.  Let $M$ be any perfect matching of the
indices $i$ with odd $\beta_i$. Then for $(i,j)\in M$, either
$|x_i|^{\beta_i+1}|x_j|^{\beta_j-1}$ or $|x_i|^{\beta_i-1}|x_j|^{\beta_j+1}$
must be at least as large as $|x_i|^{\beta_i}|x_j|^{\beta_j}$. Let
$\beta'$ be the new multi-index with only even indices obtained by
making all such replacements for $(i,j)\in M$.
We then
replace $\sqrt{\beta!}\cdot \binom{k}{\beta}\cdot |x|^\beta$ in the
summation
with $\sqrt{\beta'!}\cdot \binom{k}{\beta'}\cdot |x|^{\beta'}$. In doing
so, we have $x^{\beta'}\ge x^\beta$, but $\sqrt{\beta'!}\cdot
\binom{k}{\beta'}$ may have
decreased from $\sqrt{\beta!}\cdot \binom{k}{\beta}$, but by at most a
$2^{O(k)}k^m$ factor since each $\beta_i$ decreased by at most $1$ and
is at most $k$.  Also, in making all such replacements over all
$\beta\in\N^m$, we must now count each $\beta$ with even coordinates
at most $3^m$ times, since no such $\beta$ can be mapped to by more
than $3^m$ other multi-indices (if we replaced some multi-index with
$\beta$, that multi-index must have its $i$th coordinate either one
larger, one smaller, or
exactly equal to $\beta_i$ for each $i$). Note subsequent inequalities
dropped the $k^m$ term in the numerator since $2^{O(k)}\cdot k^m =
2^{O(k)}$ for our choice of $k$.

Now by \Equation{finale},
\begin{equation*}
 \E\left[\sum_{|\beta|=k} \frac{\prod_{i=1}^n
    |F(X)_i|^{\beta_i}}{\sqrt{\beta!}}\right] \le
\frac{2^{O(k)}}{k^{k/2}}\cdot \E\left[\left(\sum_{i=1}^m
    F(X)_i^2\right)^{k/2}\right] 
\end{equation*}
Since the $F(X)_i$ are
independent standard normal random variables,
$\sum_{i=1}^m F(X)_i^2$ follows a chi-squared distribution with $m$
degrees of freedom, and its $k/2$th moment is determined by $k$-wise
independence, and thus
\begin{equation}\EquationName{chi-moment}
 \E\left[\left(\sum_{i=1}^m
    F(X)_i^2\right)^{k/2}\right] = 2^{k/2}\cdot \frac{\Gamma(k/2 +
  m/2)}{\Gamma(m/2)} = 2^{O(k)}\cdot k^m \cdot k^{k/2} \le 2^{O(k)}
\cdot k^{k/2} .
\end{equation}
This finishes our proof, since by 
\Equation{taylor-rocks} the expected value of our Taylor error is
$$ \frac{2^{O(k)}\cdot c^k}{k^{k/2}}\cdot
\E\left[\sum_{|\beta|=k} \frac{\prod_{i=1}^m
  |F(X)_i|^{\beta_i}}{\sqrt{\beta!}}\right] = \frac{2^{O(k)}\cdot
c^k}{k^{k/2}} \cdot \left(\frac{2^{O(k)}}{k^{k/2}} \cdot 2^{O(k)}\cdot
k^{k/2}\right) = \frac{2^{O(k)}\cdot
c^k}{k^{k/2}},$$
which is $O(\eps)$ for $k = \Omega(c^2) =
\Omega(m^4/\eps^2)$.
\end{proof}

\end{document}